\documentclass[11pt,cleveref]{jmlr}
\usepackage{newtxtext}
\usepackage[varg]{newtxmath}

\usepackage{mathtools}
\usepackage{booktabs}
\usepackage{enumitem}
\usepackage{microtype}
\usepackage{wrapfig}

\allowdisplaybreaks
\setlist[itemize]{leftmargin=1.5em,itemsep=0.15em,topsep=0.25em}
\setlist[enumerate]{leftmargin=1.8em,itemsep=0.15em,topsep=0.25em}

\crefname{algocf}{Algorithm}{Algorithms}
\Crefname{algocf}{Algorithm}{Algorithms}

\crefname{theorem}{Theorem}{Theorems}
\crefname{lemma}{Lemma}{Lemmas}
\crefname{proposition}{Proposition}{Propositions}
\crefname{corollary}{Corollary}{Corollaries}
\crefname{remark}{Remark}{Remarks}
\crefname{definition}{Definition}{Definitions}
\crefname{example}{Example}{Examples}
\crefname{section}{Section}{Sections}
\crefname{appendix}{Appendix}{Appendices}
\crefname{table}{Table}{Tables}

\newcommand{\R}{\mathbb{R}}
\newcommand{\E}{\mathbb{E}}
\newcommand{\Pp}{\mathbb{P}}
\newcommand{\KL}{\mathrm{KL}}
\newcommand{\Ber}{\mathrm{Ber}}
\newcommand{\Reg}{\mathrm{Reg}}
\newcommand{\sig}{\sigma}
\newcommand{\ind}{\mathbf{1}}
\newcommand{\cA}{\mathcal{A}}
\newcommand{\cB}{\mathcal{B}}
\newcommand{\cD}{\mathcal{D}}
\newcommand{\cE}{\mathcal{E}}
\newcommand{\cF}{\mathcal{F}}
\newcommand{\cG}{\mathcal{G}}
\newcommand{\cH}{\mathcal{H}}
\newcommand{\cI}{\mathcal{I}}
\newcommand{\cK}{\mathcal{K}}
\newcommand{\cP}{\mathcal{P}}
\newcommand{\cR}{\mathcal{R}}
\newcommand{\cS}{\mathcal{S}}
\newcommand{\cU}{\mathcal{U}}
\newcommand{\cV}{\mathcal{V}}
\newcommand{\cX}{\mathcal{X}}
\newcommand{\tr}{\operatorname{tr}}
\newcommand{\Var}{\operatorname{Var}}
\newcommand{\Cov}{\operatorname{Cov}}
\newcommand{\eps}{\varepsilon}
\newcommand{\ip}[2]{\langle #1,#2\rangle}
\newcommand{\norm}[1]{\lVert #1\rVert}
\newcommand{\abs}[1]{\lvert #1\rvert}
\newcommand{\defeq}{\mathrel{:=}}
\newcommand{\dent}{\mathfrak{h}}          
\DeclareMathOperator*{\argmin}{arg\,min}

\providecommand{\qed}{\jmlrQED}

\title[Sharp Minimax Regret for Infinite-Memory Logistic Prediction]{%
Sharp Minimax Regret for Infinite-Memory Logistic Prediction}
\hypersetup{
  pdftitle={Sharp Minimax Regret for Infinite-Memory Logistic Prediction},
  pdfauthor={Vaneet Aggarwal}
}

\author{\Name{Vaneet Aggarwal}\\
        \addr Purdue University}

\begin{document}
\maketitle
\thispagestyle{plain}

\begin{abstract}
We determine the minimax cumulative log-loss regret of a finite-alphabet,
exogenously driven source with genuinely infinite input memory: independent
Rademacher inputs $(U_t)$ are observed sequentially and the next binary mark has
logit $\sum_{j\ge1}\theta_jU_{t+1-j}$, the unknown coefficients obeying a
summable envelope $\abs{\theta_j}\le r_j$, $\sum_jr_j\le B$.  At horizon $T$, lag
$j$ can move the logit by at most $r_j$ and is exercised in only
$n_{T,j}=(T-j+1)_+$ rounds, and the two limitations combine into the
sum
$\Gamma_T(r)=\sum_{j\le T}\log(1+n_{T,j}r_j^{2})$.  One coordinate-localised
Bayesian mixture achieves $\cR_T(r)\le C\Gamma_T(r)$ for \emph{every} summable
envelope with $C$ universal.  Our main result is a matching nonasymptotic
converse for the canonical exponential and polynomial envelopes; its new
ingredients are a modular finite-sample information bound for logistic
experiments with an exogenous random design, and a conditioning estimate for the
overlapping Toeplitz lag matrix obtained by exhibiting each off-diagonal Gram sum
as a sum of independent Rademacher variables indexed by the edges of a forest,
needing neither local asymptotic normality nor any spectral theorem for random
Toeplitz matrices.  So $\Gamma_T(r)$ is the minimax regret scale here, giving
$\Theta(\alpha^{-1}\log^{2}T)$ for $r_j=Ae^{-\alpha j}$ and $\Theta(T^{1/(2s)})$
for $r_j=Aj^{-s}$, $s>1$ --- the latter without the extra $(\log T)^{1-1/(2s)}$
factor any window-truncation analysis pays.  We also show memory decay cannot
determine regret, and that a profile-scaled online Newton predictor attains
$O_B(\Gamma_T(r))$ in polynomial time per round.
\end{abstract}

\begin{keywords}
  minimax regret, logarithmic loss, universal prediction, redundancy,
  infinite memory, logistic regression, online Newton step, Toeplitz design
\end{keywords}

\section{Introduction}
\label{sec:intro}

Sequence predictors routinely condition on a finite context even when the
data-generating mechanism has no finite Markov order: event streams,
categorical time series with covariates and recurrent filters all retain recent
observations and discard or compress older ones.  Two costs are incurred --- an
\emph{approximation} cost, how badly the best rule measurable with respect to
the retained context approximates the full conditional law, and a
\emph{learning} cost, how hard it is to identify that law online.  These need
not have the same scale, and in the family studied here the second is not even a
function of the first.

We work under logarithmic loss: at round $t$ a causal predictor observes
$X_{1:t}$ and assigns $q_t(\cdot\mid X_{1:t})$ to $X_{t+1}$, and if the source
law is $P$, with one-step conditionals $p_t^{P}$, its expected cumulative regret
is
\begin{equation}
  \Reg_T(Q;P)
  =\sum_{t=1}^{T}\E_P\KL\!\left(
    p_t^{P}(\cdot\mid X_{1:t})\,\middle\|\,q_t(\cdot\mid X_{1:t})\right).
  \label{eq:intro-regret}
\end{equation}
Regret is thus measured against a predictor that knows the source parameter,
not against the best fixed window fitted in hindsight; by the chain rule for
relative entropy \eqref{eq:intro-regret} equals $\KL(P\|Q)$, so its minimax
value over a class is that class's average minimax redundancy
\citep{davisson1973,clarkebarron1990,merhavfeder1998,rissanen1996}.  Markov
order alone cannot determine it: an unrestricted binary order-$k$ Markov model
has a free parameter for each of $2^k$ contexts, whereas a logistic model on the
same length-$k$ history may tie all contexts through $k$ coefficients, so the
effective dimensions differ exponentially.  This is visible already in Markov
and context-tree redundancy
\citep{willemsshtarkovtjalkens1995,atteson1999} and matters more for
unbounded-memory classes \citep{wuhosseinis2018}; predictive-state and
predictive rate--distortion theories likewise insist that useful memory is
determined by conditional future laws rather than by lag length
\citep{crutchfieldyoung1989,shalizicrutchfield2001,marzencrutchfield2016}.  What
has been missing is a sharp account of the cost of \emph{learning} those laws
online when their number grows with the horizon.

\subsection{The source class and its regret}
\label{sec:intro-model}

Let $U_1,U_2,\ldots$ be independent Rademacher inputs and
$X_t=(U_t,Y_t)\in\{-1,+1\}\times\{0,1\}$.  For an unknown $\theta=(\theta_j)_{j\ge1}$,
\begin{equation}
  \Pp_\theta(Y_{t+1}=1\mid X_{1:t})
  =\sig\!\left(\sum_{j=1}^{t}\theta_jU_{t+1-j}\right),
  \qquad \sig(z)=\frac{1}{1+e^{-z}} .
  \label{eq:intro-model}
\end{equation}
The fresh input $U_{t+1}$ is uniform with known law, so predicting $X_{t+1}$
reduces to predicting $Y_{t+1}$; if infinitely many $\theta_j$ are nonzero the
source has no finite input-memory order.  Fix a nonnegative nonincreasing
envelope $r$ with $\sum_jr_j\le B<\infty$, put
$\Theta_r=\{\theta:\abs{\theta_j}\le r_j\ \forall j\}$, and study
\begin{equation}
  \cR_T(r)=\inf_Q\ \sup_{\theta\in\Theta_r}\ \Reg_T(Q;P_{\theta,T}).
  \label{eq:intro-minimax}
\end{equation}
Two features drive everything: lag $j$ can move the logit by at most $r_j$, and
it is exercised in only $n_{T,j}=(T-j+1)_+$ rounds.  Their product
$s_{T,j}=n_{T,j}r_j^{2}$ is the dimensionless information scale of coordinate
$j$, and the resulting complexity is the sum
\begin{equation}
  \Gamma_T(r)=\sum_{j=1}^{T}\log\!\left(1+n_{T,j}r_j^{2}\right).
  \label{eq:intro-gamma}
\end{equation}
A coordinate below noise level ($s_{T,j}\le1$) contributes about its signal
energy; one above noise level contributes half the logarithm of its attainable
resolution.  The point is that these are charged \emph{per lag}, not through a
single effective dimension.

\subsection{Main results}
\label{sec:intro-results}

For every horizon and every summable envelope, \Cref{thm:upper} gives
$\cR_T(r)\le C\Gamma_T(r)$ with $C$ universal --- independent of $B$, of $T$
and of the profile.  For the canonical exponential and polynomial envelopes,
under an explicit finite-sample dimension condition, \Cref{thm:lower} and
\Cref{cor:exp-rate,cor:poly-rate} give the matching converse
$\cR_T(r)\ge c\Gamma_T(r)$ with $c$ depending only on $B$ and the fixed decay
parameters.  So $\Gamma_T(r)$ is the minimax cumulative-regret scale in these
regimes:
\begin{equation}
  \begin{aligned}
    r_j=0\ (j>k)&\ \Longrightarrow\ \cR_T(r)=\Theta(k\log T),
    &\qquad
    r_j=Ae^{-\alpha j}&\ \Longrightarrow\ \cR_T(r)=\Theta(\alpha^{-1}\log^{2}T),\\
    r_j=Aj^{-s},\ s>1&\ \Longrightarrow\ \cR_T(r)=\Theta(T^{1/(2s)}),
  \end{aligned}
  \label{eq:intro-rates}
\end{equation}
implied constants depending on $B$ and the fixed profile parameters but never on
$T$; a subscript on $\Theta(\cdot)$ displays that dependence where it matters.
The polynomial rate has \emph{no} extra logarithmic factor, whereas any
analysis that selects a window of length $h$ and pays $h\log T$ optimises to
the strictly larger $T^{1/(2s)}(\log T)^{1-1/(2s)}$ (\Cref{sec:rates}).

The problem sits at the intersection of several literatures, and it is worth
saying which ingredient each supplies and which is new.  \Cref{app:related}
gives the detailed comparison with the four adjacent literatures, and
\Cref{app:comparison} a technique-by-technique account of what our proofs share
with, and how they differ from, the closest existing arguments.

\begin{enumerate}[leftmargin=2.1em,label=\textbf{(N\arabic*)}]
\item \textbf{A lag-resolved complexity, not an effective dimension.}
  Finite-dimensional online logistic regression is understood at the $d\log T$
  scale \citep{hazanagarwalkale2007,foster2018,shamir2020,jacquetshamirszpankowski2021,drmotajacquetwuszpankowski2026}
  and parametric redundancy is $\tfrac d2\log T$ under regularity
  \citep{clarkebarron1990,rissanen1996,xiebarron2000}, but neither gives
  \eqref{eq:intro-rates} by substituting an effective dimension for $d$: the
  number of relevant coordinates grows with $T$ and each has its own amplitude
  and sample size.  Our mixture localises coordinate $j$ at its own resolution
  $n_{T,j}^{-1/2}$ and leaves sub-noise coordinates unlocalised, strictly beating
  every hard-truncation or isotropic-localisation argument
  (\Cref{rem:hard-truncation-long}).
\item \textbf{A converse for a dependent, source-generated design.}  The main
  technical contribution.  Existing logistic lower bounds are
  finite-dimensional and assume designs the analyst controls or with independent
  rows \citep{foster2018,shamir2020}; ours is the overlapping Toeplitz matrix
  $Z^{(J)}_{t,j}=U_{J+t-j}$ generated \emph{by the source}.  We give a modular
  finite-sample information bound (\Cref{thm:exogenous-lower}) and a conditioning
  estimate (\Cref{lem:toeplitz-gram}) proved by exhibiting every off-diagonal
  Gram sum as a sum of \emph{independent} Rademacher variables indexed by the
  edges of a forest --- elementary and nonasymptotic, with no local asymptotic
  normality and no random-Toeplitz spectral machinery
  (\Cref{rem:toeplitz-vs-rip-long}).
\item \textbf{Memory decay does not determine regret.}  Continuity-rate models
  \citep{wuhosseinis2018} and compression-to-prediction results
  \citep{hanjanawu2023,hanjiangwu2024} quantify how much predictive information
  the remote past carries, or which statistic of it to retain.
  \Cref{thm:profile-impossibility} shows no such quantity can determine
  cumulative minimax regret --- a statement about the objective, not a
  technique.
\item \textbf{Computational attainability at the optimal scale.}
  \Cref{thm:ons-spectrum} gives a profile-scaled online Newton predictor with
  $\sup_{\theta\in\Theta_r}\Reg_T\le C_B\Gamma_T(r)$ in polynomial time per
  round, for a specific reason: after the rescaling $u_j=\theta_j/r_j$, the
  feature Gram matrix has \emph{deterministic} diagonal $n_{T,j}r_j^{2}$, so
  Hadamard's inequality yields exactly $\sum_j\log(1+s_{T,j})$ rather than a
  dimension-times-$\log T$ term (\Cref{rem:vs-hak-long}).
\end{enumerate}

\subsection{Related work in brief}
\label{sec:intro-related}

\Cref{tab:positioning} places our rates against the nearest benchmarks; the
criterion throughout is average minimax redundancy of the entire sequential law,
\eqref{eq:intro-regret}, which is neither worst-label comparator regret nor the
Kullback--Leibler risk of a single prediction after a training trajectory.  Four
bodies of work are closest: online logistic regression, supplying the fixed-$d$
benchmark and its lower bounds; Markov and context-tree redundancy, sharing our
metric but parameterising contexts separately; predictive rate--distortion and
compression-to-prediction, asking which function of the past to keep under a
one-step risk; and infinite-order time series and filter approximation, studying
stationarity, estimation risk and implementation rather than sequential
redundancy.  General minimax-redundancy tools are used as such and attributed in
\Cref{app:tools}; \Cref{app:related} treats the comparison in detail and
\Cref{app:comparison} compares our arguments with the closest existing ones
technique by technique.  In one formula, the comparison is the replacement of
one common statistical scale by one scale per lag:
\begin{equation}
  \underbrace{\ d\log T\ }_{\text{fixed-dimensional logistic}}
  \qquad\longrightarrow\qquad
  \underbrace{\ \Gamma_T(r)=\sum_{j=1}^{T}\log\bigl(1+n_{T,j}r_j^{2}\bigr)\ }
  _{\text{exogenous infinite-lag logistic class}}.
  \label{eq:intro-hierarchy}
\end{equation}

\begin{table}[t]
\centering
\caption{Cumulative minimax log-loss regret.  Rows 1--2 are standard
fixed-dimensional results; rows 3--6 are this paper, under the conditions of
\Cref{cor:finite-rate,cor:exp-rate,cor:poly-rate,thm:profile-impossibility}, with
constants depending on $B$ and the fixed decay parameters but never on $T$.  The
last two rows have the \emph{same} worst-case truncation profile.}
\label{tab:positioning}
\small
\begin{tabular}{@{}p{0.29\textwidth}p{0.33\textwidth}l@{}}
\toprule
Source class & Parameter structure & Regret \\
\midrule
Regular finite-dimensional logistic
  \citep{shamir2020,jacquetshamirszpankowski2021}
& fixed $d$ weights; nondegenerate design
& $\Theta(d\log T)$ \\
\addlinespace[2pt]
Binary order-$k$ Markov \citep{atteson1999}
& fixed $k$; $2^k$ interior transitions
& $\Theta(2^k\log T)$ \\
\midrule
\textbf{This paper}: exogenous logistic, finite lag
& $r_j=0$, $j>k$; $k$ shared coefficients
& $\Theta(k\log T)$ \\
\addlinespace[2pt]
\textbf{This paper}: exponential envelope
& $r_j=Ae^{-\alpha j}$
& $\Theta(\alpha^{-1}\log^{2}T)$ \\
\addlinespace[2pt]
\textbf{This paper}: polynomial envelope
& $r_j=Aj^{-s}$, $s>1$; coordinatewise
& $\Theta(T^{1/(2s)})$ \\
\addlinespace[2pt]
\textbf{This paper}: same profile, rank-one
& $\theta=ar$, $\abs a\le1$; one amplitude
& $\Theta(\log T)$ \\
\bottomrule
\end{tabular}
\end{table}

\section{Prediction Problem, Source Class, and the Approximation Cost}
\label{sec:setup}

\paragraph{Notation.}
Logarithms are natural, $[a]_+=\max\{a,0\}$, $a\vee b=\max\{a,b\}$,
$a\wedge b=\min\{a,b\}$; $f\lesssim_\zeta g$ means $f\le C_\zeta g$ and
$f\asymp_\zeta g$ both directions, the subscript listing the parameters the
constants may depend on.  We reserve $\Theta_r$ for the class
\eqref{eq:l1-envelope} and use $\Theta(\cdot),O(\cdot),o(\cdot)$ asymptotically
only; $\sig$ is the logistic function of a real argument, $\sigma(\cdot)$ of
random variables the $\sigma$-field generated, and $\dent$ differential entropy,
kept distinct from the memory window $h$.  Constants $C_B,c_B$ may depend on $B$
but never on $T$, the active dimension or the profile, and may differ between
statements.

\subsection{Objective and the redundancy identity}
\label{sec:objective}

Let $P$ be a law on $X_1,\ldots,X_{T+1}$ over a finite alphabet with known
initial law $\nu_0$ and one-step conditionals $p_t^{P}$, and let a causal
predictor $Q$ consist of conditionals $q_t$, inducing
$Q(x_{1:T+1})=\nu_0(x_1)\prod_{t\le T}q_t(x_{t+1}\mid x_{1:t})$.  Its expected
cumulative excess log loss under $P$ is
\begin{equation}
  \Reg_T(Q;P)
  =\E_P\sum_{t=1}^{T}
   \left[-\log q_t(X_{t+1}\mid X_{1:t})+\log p_t^{P}(X_{t+1}\mid X_{1:t})\right].
  \label{eq:def-regret}
\end{equation}

\begin{proposition}[Bayes decomposition and redundancy identity]
\label{prop:bayes-redundancy}\looseness=-1
For every $P$ and every causal $Q$ with initial law $\nu_0$, interpreted in the
extended sense,
$\Reg_T(Q;P)=\sum_{t\le T}\E_P\KL(p_t^{P}(\cdot\mid X_{1:t})\|q_t(\cdot\mid X_{1:t}))=\KL(P\|Q)$;
and for a source class $\cP_T$ whose members share $\nu_0$,
$\inf_Q\sup_{P\in\cP_T}\Reg_T(Q;P)=\inf_Q\sup_{P\in\cP_T}\KL(P\|Q)$, the average
minimax redundancy of $\cP_T$, the infimum being over \emph{all} causal
predictors.
\end{proposition}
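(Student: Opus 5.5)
The plan is to prove the three equalities in order: first the per-round conditional-expectation identity, then the chain rule for relative entropy, and finally pass to the minimax statement. Throughout, quantities take values in $[0,\infty]$ in the extended sense. We use the conventions $0\log(0/q)=0$ and $p\log(p/0)=+\infty$ for $p>0$. Because the alphabet is finite, every conditional expectation below is a finite sum, so no integrability issues arise beyond these conventions.

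For the first equality, fix $t\le T$ and condition on $X_{1:t}$ by the tower property. Given $X_{1:t}=x_{1:t}$, the law of $X_{t+1}$ under $P$ is $p_t^{P}(\cdot\mid x_{1:t})$. The conditional expectation of the round-$t$ summand $-\log q_t(X_{t+1}\mid x_{1:t})+\log p_t^{P}(X_{t+1}\mid x_{1:t})$ is therefore $\sum_x p_t^{P}(x\mid x_{1:t})\log\bigl(p_t^{P}(x\mid x_{1:t})/q_t(x\mid x_{1:t})\bigr)=\KL(p_t^{P}\|q_t)$. Outcomes with $p_t^{P}(x\mid x_{1:t})=0$ carry zero weight and are simply dropped.

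Two points need care here. The first is that each summand is only bounded below after conditioning, since KL is nonnegative, whereas the raw summand can be negative. The second is that infinite values must not produce $\infty-\infty$. Since all summands are conditionally nonnegative, I would write $\E_P\sum_t$ as $\sum_t\E_P$ of the conditional KL terms. This interchange is legitimate because it is a finite sum of $[0,\infty]$-valued quantities, each computed on the finite space $\cX^{T+1}$. The definition \eqref{eq:def-regret} is read in exactly this extended sense.

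For the second equality, factor $P(x_{1:T+1})=\nu_0(x_1)\prod_{t\le T}p_t^{P}(x_{t+1}\mid x_{1:t})$ and $Q$ likewise with the same $\nu_0$. Then $\log(P/Q)$ is the sum over $t$ of the log-ratios of conditionals, and the $\nu_0$ factors cancel. Taking $\E_P$ gives $\KL(P\|Q)$ by the same computation, i.e.\ the chain rule for relative entropy. If $P$ charges some point where $Q$ vanishes, then some conditional $q_t$ vanishes on a $P$-positive event. Both sides are then $+\infty$, so the identity also holds in this case.

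For the minimax statement, the identity holds pointwise for each pair $(P,Q)$, so $\sup_{P}\Reg_T(Q;P)=\sup_P\KL(P\|Q)$ for every causal $Q$. Taking the infimum over causal $Q$ gives the claim. The remaining step, and the only genuine subtlety, is to check that the infimum over causal predictors coincides with the infimum over all joint laws $Q$ with first marginal $\nu_0$, which is the usual definition of average minimax redundancy.

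I would argue this in both directions. Every causal predictor induces such a joint law. Conversely, any joint law $Q$ with first marginal $\nu_0$ disintegrates into conditionals $q_t$. Where $Q(x_{1:t})=0$, the conditional may be chosen arbitrarily; this choice does not affect $\KL(P\|Q)$ whenever that divergence is finite, and when it is infinite the choice is irrelevant. Hence the two infima agree.
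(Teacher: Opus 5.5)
\textbf{Gap in the minimax step.} Your two identities follow the paper's own route: condition on $X_{1:t}$ to get the per-round KL, then use the chain rule with the $\nu_0$ factors cancelling. Your extended-value bookkeeping is, if anything, more careful than the paper's. The gap is where you pass to the minimax statement.

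The statement takes the infimum over \emph{all} causal predictors, and the paper's proof explicitly includes predictors whose initial marginal $q_0$ differs from $\nu_0$. Moreover, average minimax redundancy is the infimum of $\sup_P\KL(P\|Q)$ over all laws $Q$ on $\cX^{T+1}$. It is not restricted to those with first marginal $\nu_0$, as you assert. For such $Q$, your claim that the identity ``holds pointwise for each pair $(P,Q)$'' is false. The chain rule gives $\KL(P\|Q)=\KL(\nu_0\|q_0)+\Reg_T(Q;P)$, so the two suprema differ by $\KL(\nu_0\|q_0)$.

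\textbf{The missing step.} This is exactly what the paper's proof supplies, and the main text names it as the only reason the proposition is proved at all. The extra term $\KL(\nu_0\|q_0)$ is nonnegative and does not depend on $P$. So replacing $q_0$ by $\nu_0$ while keeping every conditional $q_t$ does not increase $\sup_{P\in\cP_T}\KL(P\|Q)$. Meanwhile $\Reg_T(Q;P)$ does not involve $q_0$ at all. Hence both infima are unchanged when you restrict to predictors with initial law $\nu_0$, and there your pointwise identity applies. This is one line from the chain rule you already wrote down, but without it the equality with average minimax redundancy is not established.

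Your disintegration argument is correct: every joint law with first marginal $\nu_0$ is induced by some causal predictor, with arbitrary conditionals off the support. The paper leaves this implicit by identifying predictors with their induced joint laws. Combined with the $q_0$-replacement step, it gives the full identification with the infimum over all joint laws. On its own, though, it addresses a different and easier subtlety than the one you flagged as the only one.
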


This is classical --- it is why universal coding and universal prediction under
log loss coincide \citep{davisson1973,clarkebarron1990,merhavfeder1998} --- and
\Cref{app:proofs-setup} proves it only because we also need the accompanying
fact that restricting $Q$ to initial law $\nu_0$ costs nothing.  Contrast the
one-step risk of \citet{hanjiangwu2024}: there redundancy appears divided by the
sample size and accompanied by a memory term, whereas for \eqref{eq:def-regret}
redundancy \emph{is} the value of the game.

\subsection{The exogenously driven lagged logistic source}
\label{sec:model}

Let $(U_t)_{t\ge1}$ be independent with $\Pp(U_t=\pm1)=1/2$, let
$Y_1\sim\Ber(1/2)$ independently, and set
$X_t=(U_t,Y_t)\in\cX\defeq\{-1,+1\}\times\{0,1\}$.  For a coefficient sequence
$\theta$,
\begin{equation}
  Y_{t+1}\mid X_{1:t}\sim\Ber\bigl(\sig(\eta_{t+1}(\theta))\bigr),
  \qquad
  \eta_{t+1}(\theta)=\sum_{j=1}^{t}\theta_jU_{t+1-j},
  \label{eq:model}
\end{equation}
and let $P_{\theta,T}$ be the joint law of $X_{1:T+1}$.  Conditionally on the
past, $U_{t+1}$ is uniform and independent of $Y_{t+1}$, so
$P_{\theta,T}(X_{t+1}=(u,y)\mid X_{1:t})=\tfrac12\sig(\eta_{t+1})^{y}(1-\sig(\eta_{t+1}))^{1-y}$;
in particular $\nu_0$ does not depend on $\theta$, as
\Cref{prop:bayes-redundancy} requires.  The mark history $Y_{1:t}$ does not
enter the conditional law but is observed by the learner.  Let $r$ be
nonnegative and nonincreasing with
\begin{equation}
  \sum_{j\ge1}r_j\le B<\infty,
  \qquad
  \Theta_r=\{\theta:\abs{\theta_j}\le r_j\ \text{for all }j\},
  \label{eq:l1-envelope}
\end{equation}
so that $\abs{\eta_{t+1}(\theta)}\le B$ for every $t$ and $\theta\in\Theta_r$,
which keeps the logistic curvature nondegenerate.  Combining
\eqref{eq:def-regret} with \Cref{prop:bayes-redundancy},
\begin{equation}
  \cR_T(r)=\inf_Q\sup_{\theta\in\Theta_r}\Reg_T(Q;P_{\theta,T})
  =\inf_Q\sup_{\theta\in\Theta_r}\KL(P_{\theta,T}\|Q).
  \label{eq:minimax-r}
\end{equation}
Taking $r_j=0$ for $j>k$ recovers a $k$-parameter logistic model with a random
Toeplitz design, while letting $r_j\downarrow0$ slowly gives a source with no
finite Markov order whose coefficients remain individually identifiable; the
exogeneity of the inputs is what makes the finite-horizon information geometry
explicit and the converse of \Cref{sec:converse} possible.

\subsection{Sample sizes and KL bounds}
\label{sec:spectrum}

Lag $j$ appears in the logits of rounds $t=j,\ldots,T$, so its sample size is
$n_{T,j}=(T-j+1)_+$: every lag is doubly limited, in amplitude by $r_j$ and in
sample size by $n_{T,j}$, and both limitations tighten with $j$.  Define
\begin{equation}
  s_{T,j}=n_{T,j}r_j^{2},
  \qquad
  \Gamma_T(r)=\sum_{j=1}^{T}\log(1+s_{T,j}),
  \qquad
  d_T(r)=\sum_{j=1}^{T}\min\{1,s_{T,j}\}.
  \label{eq:spectrum-defs}
\end{equation}
Since $\log(1+s)\ge s\log2$ on $[0,1]$ and $\log(1+s)>\log2$ for $s>1$,
\begin{equation}
  (\log2)\,d_T(r)\ \le\ \Gamma_T(r)\ \le\ d_T(r)+\!\!\sum_{j:\,s_{T,j}>1}\!\!\log s_{T,j},
  \label{eq:gamma-deff}
\end{equation}
so $d_T(r)$ counts active directions while $\Gamma_T(r)$ also records their
resolutions; and as $n_{T,j}$ and $r_j$ are nonincreasing, so is $s_{T,j}$, so
the active set is an initial block --- used repeatedly below.  Finally let
$\psi(z)=\log(1+e^{z})$, so $\psi'=\sig$ and $\psi''(z)=\sig(z)(1-\sig(z))$; as
$\psi''$ is even and decreasing on $[0,\infty)$, for $b\in(0,\infty)$
\begin{equation}
  \kappa_b\defeq\min_{\abs z\le b}\psi''(z)=\sig(b)\bigl(1-\sig(b)\bigr)
  \in\bigl(0,\tfrac14\bigr),
  \label{eq:kappa-b}
\end{equation}
and we write $\kappa_B$ for the class logit bound, the same formula defining
$\kappa_{B/2}$ and $\kappa_{B_0}$ below.

\begin{lemma}[Pairwise KL geometry]
\label{lem:pairwise-kl}
For all $\theta,\theta'\in\Theta_r$, with $\kappa_B$ as in \eqref{eq:kappa-b},
\begin{equation}
  \frac{\kappa_B}{2}\sum_{j=1}^{T}n_{T,j}(\theta_j-\theta_j')^{2}
  \ \le\ \KL(P_{\theta,T}\|P_{\theta',T})
  \ \le\ \frac18\sum_{j=1}^{T}n_{T,j}(\theta_j-\theta_j')^{2}.
  \label{eq:pairwise-kl}
\end{equation}
The upper bound holds for arbitrary square-summable $\theta,\theta'$; only the
lower bound uses \eqref{eq:l1-envelope}.
\end{lemma}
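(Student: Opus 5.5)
By the chain rule in \Cref{prop:bayes-redundancy}, applied with $Q=P_{\theta',T}$ (both laws share $\nu_0$), we will write
\[
\KL(P_{\theta,T}\|P_{\theta',T})=\sum_{t=1}^{T}\E_\theta\,\KL\bigl(\Ber(\sig(\eta_{t+1}(\theta)))\,\|\,\Ber(\sig(\eta_{t+1}(\theta')))\bigr).
\]
The uniform factor $\tfrac12$ for $U_{t+1}$ is common to both laws and cancels. The Bernoulli KL in natural parameters is the Bregman divergence of $\psi$:
\[
\KL(\Ber(\sig(a))\|\Ber(\sig(b)))=\psi(b)-\psi(a)-\sig(a)(b-a)=(b-a)^2\int_0^1(1-u)\,\psi''(a+u(b-a))\,du .
\]

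Since $\psi''\le\tfrac14$ everywhere, each term is at most $\tfrac18(\eta_{t+1}(\theta)-\eta_{t+1}(\theta'))^2$. If both $\abs{a},\abs{b}\le B$, the whole segment lies in $[-B,B]$, so each term is at least $\tfrac{\kappa_B}{2}(a-b)^2$ by \eqref{eq:kappa-b}. This is the only place where \eqref{eq:l1-envelope} enters: it guarantees $\abs{\eta_{t+1}}\le B$. For arbitrary square-summable $\theta$, the logit $\eta_{t+1}$ is a finite sum for each $t$, so the upper bound needs no boundedness.

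It then remains to compute
\[
\E\bigl(\eta_{t+1}(\theta)-\eta_{t+1}(\theta')\bigr)^2=\E\Bigl(\sum_{j=1}^{t}\delta_jU_{t+1-j}\Bigr)^2=\sum_{j=1}^{t}\delta_j^2,\qquad \delta=\theta-\theta'.
\]
This holds because the $U$'s are independent Rademacher, so the cross terms vanish, and the $U$'s have the same law under $\E_\theta$ since they are exogenous. The expectation only involves $U_{1:t}$, so there is no dependence issue with the $Y$'s. Summing over $t=1,\dots,T$ and exchanging the order of summation gives
\[
\sum_{t=1}^{T}\sum_{j\le t}\delta_j^2=\sum_{j=1}^{T}(T-j+1)\,\delta_j^2=\sum_{j=1}^{T}n_{T,j}\delta_j^2 .
\]
Combining this identity with the two pointwise bounds yields \eqref{eq:pairwise-kl}.

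There is no serious obstacle. The only points needing care are the cancellation of the input factor and the bookkeeping of the indices when exchanging the sums to obtain $n_{T,j}$.
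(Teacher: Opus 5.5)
Your proposal is correct and follows the paper's proof essentially step for step. The chain rule reduces the sequence-level divergence to expected Bernoulli divergences, with the fresh-input factor cancelling; the logistic Bregman identity, together with $\psi''\le\tfrac14$ globally and $\psi''\ge\kappa_B$ on $[-B,B]$, gives the two pointwise bounds; and exact input orthogonality followed by exchanging the sums produces the weights $n_{T,j}$, with only the lower bound using the envelope condition, as in the paper.
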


The proof (\Cref{app:proofs-setup}) combines the curvature of $\psi$ on $[-B,B]$
with the \emph{exact} input orthogonality
$\E(\sum_{j\le t}a_jU_{t+1-j})^{2}=\sum_{j\le t}a_j^{2}$.  That exactness is why
the problem is additive and anisotropic across lags: the sequence-level
divergence is, up to universal factors, the $\ell_2$ distance weighted by the
profile $n_{T,j}$, and every bound below is a statement about that geometry.

\subsection{The approximation cost}
\label{sec:memory}

Let $\theta^{(h)}=(\theta_1,\ldots,\theta_h,0,\ldots)$ and set
$\cB_T(\theta,h)=\KL(P_{\theta,T}\|P_{\theta^{(h)},T})$ and
$V_T(\theta,h)=\sum_{j>h}n_{T,j}\theta_j^{2}$.
For known $\theta$, write $\cD_T(\theta,h)$ for the least cumulative excess
log loss achievable using only the most recent $h$ inputs at each round.
Both deleting old coefficients and optimally predicting from this input window
incur loss proportional to the same sum.

\begin{theorem}[Loss from truncating the input history]
\label{thm:memory}
For every $\theta\in\Theta_r$, every $T$ and every integer $h\ge0$,
\begin{equation}
  \frac{\kappa_B}{2}V_T(\theta,h)\le\cB_T(\theta,h)\le\frac18V_T(\theta,h),
  \qquad
  2\kappa_B^{2}V_T(\theta,h)\le\cD_T(\theta,h)\le\frac18V_T(\theta,h).
  \label{eq:memory-two-sided}
\end{equation}
\end{theorem}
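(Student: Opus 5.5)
The first pair of inequalities in \eqref{eq:memory-two-sided} should follow directly from \Cref{lem:pairwise-kl}, and the second pair from a per-round conditional-variance argument.

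\textbf{Step 1: the bounds on $\cB_T$.} Apply \Cref{lem:pairwise-kl} with $\theta'=\theta^{(h)}$. Truncation only sets coordinates to zero, so $\abs{\theta^{(h)}_j}\le\abs{\theta_j}\le r_j$ and $\theta^{(h)}\in\Theta_r$; hence both sides of \eqref{eq:pairwise-kl} are available. The difference $\theta_j-\theta^{(h)}_j$ vanishes for $j\le h$ and equals $\theta_j$ for $j>h$. Therefore $\sum_{j\le T}n_{T,j}(\theta_j-\theta^{(h)}_j)^2=V_T(\theta,h)$, which gives $\tfrac{\kappa_B}{2}V_T\le\cB_T\le\tfrac18V_T$ immediately.

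\textbf{Step 2: the upper bound on $\cD_T$.} Let $W_t$ denote the window $(U_{t+1-h},\ldots,U_t)$, with the indices truncated at $1$. The predictor $q_t=\sig(\eta_{t+1}(\theta^{(h)}))$ is $W_t$-measurable, so it is admissible. Its cumulative excess loss is exactly $\KL(P_{\theta,T}\|P_{\theta^{(h)},T})=\cB_T(\theta,h)$ by \Cref{prop:bayes-redundancy}. Hence $\cD_T\le\cB_T\le\tfrac18V_T$.

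\textbf{Step 3: the lower bound on $\cD_T$.} Fix a round $t$ and write $\eta_{t+1}(\theta)=a_t+R_t$. Here $a_t=\sum_{j\le h\wedge t}\theta_jU_{t+1-j}$ is a function of $W_t$, and $R_t=\sum_{h<j\le t}\theta_jU_{t+1-j}$ is independent of $W_t$.
\begin{itemize}
\item For any $W_t$-measurable prediction $q$, the per-round excess loss is $\E\,\KL(\Ber(p)\|\Ber(q))$ with $p=\sig(a_t+R_t)$. By Pinsker, $\KL(\Ber(p)\|\Ber(q))\ge2(p-q)^2$.
\item Minimising $\E(p-q)^2$ over $W_t$-measurable $q$ gives at least $\E\Var(p\mid W_t)$.
\item Let $R'_t$ be an independent copy of $R_t$. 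Then $\Var(p\mid W_t)=\tfrac12\E[(\sig(a_t+R_t)-\sig(a_t+R'_t))^2\mid W_t]$.
\item All logits stay in $[-B,B]$, since $\abs{a_t}+\sum_{j>h}\abs{\theta_j}\le B$. There $\sig'\ge\kappa_B$, so the mean value theorem gives $\abs{\sig(a_t+R_t)-\sig(a_t+R'_t)}\ge\kappa_B\abs{R_t-R'_t}$.
\item Consequently $\Var(p\mid W_t)\ge\tfrac12\kappa_B^2\,\E(R_t-R'_t)^2=\kappa_B^2\Var R_t=\kappa_B^2\sum_{h<j\le t}\theta_j^2$, using exact input orthogonality.
\end{itemize}
The per-round loss is thus at least $2\kappa_B^2\sum_{h<j\le t}\theta_j^2$. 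Summing over $t=1,\ldots,T$, lag $j$ appears in exactly $n_{T,j}$ rounds. This yields $\cD_T\ge2\kappa_B^2V_T(\theta,h)$.

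\textbf{Main obstacle.} The only delicate point is the meaning of ``using only the most recent $h$ inputs''. The argument needs the prediction to be measurable with respect to $W_t$ alone; otherwise, for instance, the observed marks $Y_{1:t}$ would carry information about older inputs. I would make this precise in the definition of $\cD_T$. I would also check that the minimiser of expected KL over $W_t$-measurable predictions is $\E[p\mid W_t]$. That fact is not actually needed, because Pinsker followed by the variance bound holds for every admissible $q$.
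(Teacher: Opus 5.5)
Your proposal is correct and follows the paper's proof essentially step for step: the $\cB_T$ bounds come from \Cref{lem:pairwise-kl} with $\theta'=\theta^{(h)}$, and the $\cD_T$ lower bound uses Pinsker, the conditional variance, an independent copy of the omitted tail, and the mean value theorem with $\sig'\ge\kappa_B$ on $[-B,B]$. The only cosmetic difference is in the upper bound on $\cD_T$: you observe that the feasible predictor $\sig(\eta_{t+1}(\theta^{(h)}))$ incurs exactly $\cB_T(\theta,h)$, whereas the paper bounds the same per-round Bregman divergence directly by $\tfrac18W_{t,h}^{2}$, and these are the same computation.
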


The formal definition of $\cD_T$, the proof, and the worst-case bounds are in
\Cref{app:memory-details,app:envelope-memory}.  The restriction is to recent
inputs, not recent marks or arbitrary compressed states of the full history.
We use the truncation bound in \Cref{sec:algorithms} and compare approximation
with regret in \Cref{sec:separation}.

\section{An Upper Bound for Every Summable Envelope}
\label{sec:upper}

The bound below does not tighten a known oracle inequality;
it changes the rate.  The classical parametric picture localises a
$d$-dimensional parameter at the single resolution $T^{-1/2}$ and pays
$\tfrac d2\log T$
\citep{clarkebarron1990,rissanen1996,barronrissanenyu1998,xiebarron2000}, which
applied here after truncating at lag $h$ gives
$h\log T+\sum_{j>h}n_{T,j}r_j^{2}$ and, for polynomial envelopes, optimises to
$T^{1/(2s)}(\log T)^{1-1/(2s)}$.  We instead give coordinate $j$ its own
resolution $n_{T,j}^{-1/2}$ and leave sub-noise coordinates unlocalised.  The
device --- comparing a mixture with a localised prior via the
Donsker--Varadhan/PAC-Bayes variational inequality --- is standard
\citep{catoni2004,zhang2006} and attributed as
\Cref{lem:variational-mixture}; new is the anisotropic localisation.

Let $\pi=\bigotimes_{j\ge1}\mathrm{Unif}[-r_j,r_j]$, a coordinate with $r_j=0$
being the point mass at $0$, and let $\pi_T$ be its marginal on the first $T$
coordinates.  A horizon-$T$ law depends on $\theta_1,\ldots,\theta_T$ only, so
the Bayesian assignment is
\begin{equation}
  Q_{\pi_T}=\int P_{v,T}\,\pi_T(dv),
  \label{eq:bayes-mixture}
\end{equation}
a causal predictor because any mixture of joint laws factorises into its
posterior predictive conditionals \citep{hutter2003}; as $\pi$ is an infinite
product the finite-horizon assignments are projectively consistent, so the same
predictor is anytime.

\begin{theorem}[Anisotropic Bayesian coding bound]
\label{thm:upper}
With the universal constants $C_0=\tfrac18+\log2$ and $C=\tfrac12+2C_0$, for
every horizon $T$ and every envelope $r$ satisfying \eqref{eq:l1-envelope},
\begin{equation}
  \sup_{\theta\in\Theta_r}\KL(P_{\theta,T}\|Q_{\pi_T})
  \ \le\ \tfrac12\Gamma_T(r)+C_0\,d_T(r)
  \ \le\ C\,\Gamma_T(r),
  \label{eq:upper-main}
\end{equation}
hence $\cR_T(r)\le C\,\Gamma_T(r)$.  Neither constant depends on $B$, on $T$ or
on the profile.
\end{theorem}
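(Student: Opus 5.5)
The plan is to apply the Donsker--Varadhan/PAC-Bayes mixture inequality (\Cref{lem:variational-mixture}). For every $\theta$ and every localised law $\rho\ll\pi_T$, it gives
\[
\KL(P_{\theta,T}\|Q_{\pi_T})\le \int\KL(P_{\theta,T}\|P_{v,T})\,\rho(dv)+\KL(\rho\|\pi_T).
\]
This follows from convexity of $-\log$ applied to the mixture density $\int p_{v}\,d\pi_T=\int p_v\frac{d\pi_T}{d\rho}d\rho$ together with Jensen's inequality. By the upper half of \Cref{lem:pairwise-kl}, which holds for arbitrary $v$ and $\theta$, the first term is at most $\tfrac18\sum_{j\le T}n_{T,j}\int(\theta_j-v_j)^2\rho(dv)$. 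I will choose $\rho=\bigotimes_j\rho_j$ as a product, so that $\KL(\rho\|\pi_T)=\sum_j\KL(\rho_j\|\mathrm{Unif}[-r_j,r_j])$. The whole bound then splits into a sum of one-dimensional terms, one per lag, and each lag is treated separately according to whether it is active.

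For a \emph{sub-noise} lag with $s_{T,j}\le1$ (including every lag with $r_j=0$), I take $\rho_j=\pi_j$, so the KL term is zero. Since $\theta_j,v_j\in[-r_j,r_j]$, the squared error satisfies $\E(\theta_j-v_j)^2=\theta_j^2+r_j^2/3\le \tfrac43 r_j^2$. The contribution is therefore at most $\tfrac18\cdot\tfrac43 s_{T,j}\le C_0\min\{1,s_{T,j}\}$.

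For an \emph{active} lag with $s_{T,j}>1$, I take $\rho_j$ uniform on an interval $I_j\subset[-r_j,r_j]$ of length $\ell_j=2r_j/\sqrt{s_{T,j}}=2n_{T,j}^{-1/2}\le 2r_j$ that contains $\theta_j$. Such an interval exists: centre it at $\theta_j$ and shift it inward if it sticks out. Then $\KL(\rho_j\|\pi_j)=\log(2r_j/\ell_j)=\tfrac12\log s_{T,j}\le\tfrac12\log(1+s_{T,j})$. The squared error is at most $\ell_j^2$ in the worst case where $\theta_j$ sits at an endpoint; more precisely it is at most $\ell_j^2/3\cdot$(shift factor), and a crude bound $\le\ell_j^2=4/n_{T,j}$ suffices. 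The contribution is then at most $\tfrac18 n_{T,j}\cdot 4n_{T,j}^{-1}=\tfrac12$ plus $\tfrac12\log(1+s_{T,j})$.

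I need to check that $\tfrac12\le C_0=\tfrac18+\log2$, which holds since $\log 2\approx0.69$. To get exactly the stated constants I would refine the error bound for an interval containing $\theta_j$: the worst case, $\theta_j$ at an endpoint, gives $\ell_j^2/3=\tfrac{4}{3n_{T,j}}$, which yields a contribution of $\tfrac16\le C_0$. Because $\min\{1,s_{T,j}\}=1$ on active lags, this is $\le C_0\min\{1,s_{T,j}\}$. Summing over all lags gives the first inequality $\tfrac12\Gamma_T+C_0 d_T$.

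The second inequality follows from \eqref{eq:gamma-deff}: $d_T\le\Gamma_T/\log2$, so $C_0d_T\le (C_0/\log2)\Gamma_T\le 2C_0\Gamma_T$, which gives $C=\tfrac12+2C_0$. Finally, $\cR_T(r)\le C\Gamma_T(r)$ follows from \eqref{eq:minimax-r} by using $Q_{\pi_T}$, which is causal, as the predictor.

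The main obstacle is the bookkeeping in the active case: the localisation interval must stay inside the prior's support even when $\theta_j$ sits near the boundary $\pm r_j$. This is why the interval is shifted rather than centred. For the same reason the error constant must be checked for an off-centre interval, not a centred one, so that it is absorbed into $C_0$ rather than into the $\tfrac12$ multiplying $\Gamma_T$. Everything else is additive because both the product prior and the KL upper bound of \Cref{lem:pairwise-kl} decompose coordinatewise.
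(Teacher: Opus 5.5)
Your proof is correct and follows essentially the paper's argument: \Cref{lem:variational-mixture} with a product localiser equal to the prior on sub-noise lags and uniform on a subinterval containing $\theta_j$ on active lags, combined with the global upper half of \Cref{lem:pairwise-kl} and then $d_T\lesssim\Gamma_T$. The differences are only in constants. Your interval of length $2n_{T,j}^{-1/2}$ removes the $\log 2$ localisation charge at the price of an approximation charge of $\tfrac16$ rather than $\tfrac18$. You use $d_T\le\Gamma_T/\log 2$ where the paper uses the cruder $d_T\le 2\Gamma_T$. Finally, your Jensen sketch should use the inequality $\int p_v\,d\pi_T\ge\int p_v\,(d\rho/d\pi_T)^{-1}\,d\rho$, since $\pi_T\not\ll\rho$ on active lags; this is exactly how the lemma is proved.
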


\paragraph{Where the anisotropy enters.}
The proof is in \Cref{app:upper}: \Cref{lem:variational-mixture} trades an
approximation term against a localisation term, and we take the localising
measure to be a product whose $j$-th factor is the prior itself when
$s_{T,j}\le1$ and is uniform on an interval of length $n_{T,j}^{-1/2}$ around
$\theta_j$ when $s_{T,j}>1$, so coordinate $j$ is charged
$\tfrac12\log s_{T,j}+O(1)$ or $O(s_{T,j})$ according as it lies above or below
noise level.  Truncation instead charges every retained coordinate the same
$\log T$ and every discarded one its full energy; for $r_j=Aj^{-s}$ the
difference is the diverging factor $(\log T)^{1-1/(2s)}$ (\Cref{sec:rates}), and
$d_T(r)\log T$ is likewise too large by a logarithm, so the theorem is not
obtained by inserting an effective dimension into a $d\log T$ bound.
\Cref{rem:hard-truncation-long} in \Cref{app:comparison} gives the details and the
relation to metric-entropy characterisations of nonparametric redundancy
\citep{yangbarron1999,zhang2006,mourtada2023}.

\section{The Converse: Conditioning a Source-Generated Toeplitz Design}
\label{sec:converse}

This is the technical core.  The scheme --- prior on a
rectangle, redundancy--capacity, constrained maximum likelihood, entropy versus
mean-square error --- is the classical route to minimax redundancy lower bounds
\citep{davisson1973,haussler1997,yangbarron1999}, used in finite-dimensional form
for logistic regression by \citet{shamir2020}, and two obstacles prevent it from
applying.  First, the design is not chosen by the analyst and has strongly
dependent rows --- it is the overlapping Toeplitz matrix of lagged inputs
generated by the source --- so row-wise concentration is unavailable;
\Cref{lem:toeplitz-gram} resolves this by exhibiting every off-diagonal Gram sum
as a sum of \emph{independent} Rademacher variables indexed by the edges of a
forest, after which Hoeffding's inequality and Gershgorin's theorem suffice.
Second, the number of coordinates bounded from below grows with $T$, so the
dimension must enter additively; \Cref{thm:exogenous-lower} is therefore modular,
the design entering only through a trace bound and a high-probability
smallest-eigenvalue event.  Neither ingredient uses local asymptotic normality or
any spectral theorem for random Toeplitz matrices.  The differences from the
closest existing arguments are stated below at the points where they arise and
elaborated in \Cref{app:comparison}.

\subsection{Ingredient 1: information from a conditioned logistic design}
\label{sec:converse-info}

Let $Z\in\R^{N\times J}$ have rows $z_i^{\top}$, let $b\in\R^{N}$, and suppose
the joint law of $(Z,b)$ does not depend on the parameter --- this is what
``exogenous'' means.  Conditionally on $(Z,b)$ and on the parameter value $v$,
observe independent labels
\begin{equation}
  Y_i\mid Z,b,\{\vartheta=v\}\ \sim\ \Ber\bigl(\sig(b_i+z_i^{\top}v)\bigr),
  \qquad i=1,\ldots,N.
  \label{eq:exogenous-experiment}
\end{equation}
For positive radii $r_1,\ldots,r_J$ let
$K_J=\prod_{j\le J}[-r_j/2,r_j/2]$ with squared diameter
$\Delta_J^{2}=\sum_{j\le J}r_j^{2}$, write $P_v^{(N)}$ for the law of
$(Z,b,Y_{1:N})$ when $\vartheta=v$, and put
$\mathfrak R_N(K_J)=\inf_Q\sup_{v\in K_J}\KL(P_v^{(N)}\|Q)$.

\begin{theorem}[Conditioned-design information bound]
\label{thm:exogenous-lower}
Fix $B_0\in(0,\infty)$, $\tau\ge1$, $\mu\in(0,1]$, $\delta\in[0,1]$ and suppose
\begin{gather}
  \max_{i\le N}\ \sup_{v\in K_J}\abs{b_i+z_i^{\top}v}\ \le\ B_0
  \quad\text{almost surely},
  \label{eq:hyp-logit}\\[-1pt]
  \E\tr(Z^{\top}Z)\ \le\ \tau NJ,
  \qquad
  \Pp\bigl(\lambda_{\min}(Z^{\top}Z)\ge\mu N\bigr)\ \ge\ 1-\delta,
  \qquad
  \Delta_J^{2}\,\delta\ \le\ \tau J/N .
  \label{eq:hyp-rest}
\end{gather}
If $\vartheta$ is uniform on $K_J$ and $D=(Z,b,Y_{1:N})$, then
\begin{equation}
  I(\vartheta;D)
  \ \ge\ \Bigl[\tfrac12\textstyle\sum_{j=1}^{J}\log\bigl(Nr_j^{2}\bigr)
     -C_{B_0,\mu,\tau}J\Bigr]_+,
  \quad
  C_{B_0,\mu,\tau}=\tfrac12\log\!\bigl(2\pi e\,\tau(1+\kappa_{B_0}^{-2}\mu^{-2})\bigr),
  \label{eq:exogenous-mi}
\end{equation}
and the same bound holds for $\mathfrak R_N(K_J)$.
\end{theorem}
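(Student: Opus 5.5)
The plan follows the classical entropy-versus-estimation-error route. Write $I(\vartheta;D)=\dent(\vartheta)-\dent(\vartheta\mid D)$. Since $\vartheta$ is uniform on $K_J$, $\dent(\vartheta)=\sum_{j\le J}\log r_j$. For the conditional entropy, take any estimator $\hat v(D)\in K_J$ and apply the Gaussian maximum-entropy bound
\[
\dent(\vartheta\mid D)\le\dent(\vartheta-\hat v)\le \tfrac J2\log\bigl(2\pi e\,\E\norm{\vartheta-\hat v}^2/J\bigr).
\]
It then suffices to build $\hat v$ with $\E\norm{\vartheta-\hat v}^2\le \tau(1+\kappa_{B_0}^{-2}\mu^{-2})\,J^2/N$. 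That gives $I\ge\sum_j\log r_j-\tfrac J2\log(2\pi e\tau(1+\kappa_{B_0}^{-2}\mu^{-2})J/N)$. This is not quite the stated form: an extra $\tfrac J2\log J$ would have to be absorbed. So I would instead bound the coordinatewise errors $\E(\vartheta_j-\hat v_j)^2$ and use $\dent(\vartheta\mid D)\le\sum_j\tfrac12\log(2\pi e\,\E(\vartheta_j-\hat v_j)^2)$. The target is then an average per-coordinate error $\le\tau(1+\kappa^{-2}\mu^{-2})/N$, with Jensen over $j$ combining the logs. Writing $\tfrac12\log r_j^2-\tfrac12\log(c/N)=\tfrac12\log(Nr_j^2)-\tfrac12\log c$ produces exactly \eqref{eq:exogenous-mi}. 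The positive part is free because mutual information is nonnegative.

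For the estimator I would use constrained maximum likelihood over $K_J$ given $(Z,b,Y)$. On the good event $G=\{\lambda_{\min}(Z^\top Z)\ge\mu N\}$ the negative log-likelihood $L(v)=\sum_i[\psi(b_i+z_i^\top v)-Y_i(b_i+z_i^\top v)]$ is $\kappa_{B_0}\mu N$-strongly convex on $K_J$, by \eqref{eq:hyp-logit}. The first-order optimality of $\hat v$ over the convex set $K_J$, combined with strong convexity, gives
\[
\kappa_{B_0}\mu N\norm{\hat v-\vartheta}\le\norm{\nabla L(\vartheta)}=\norm{Z^\top\xi},
\]
where $\xi_i=\sig(b_i+z_i^\top\vartheta)-Y_i$. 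Conditionally on $(Z,b,\vartheta)$ the $\xi_i$ are independent, centred, with variance $\le1/4$. Hence $\E[\norm{Z^\top\xi}^2\mid Z,b,\vartheta]\le\tfrac14\tr(Z^\top Z)$, and so $\E[\norm{\hat v-\vartheta}^2\ind_G]\le\tau J/(4\kappa^2\mu^2N)$ by the trace hypothesis. On $G^c$ the error is at most $\Delta_J^2$ because both points lie in $K_J$ (indeed at most the squared diameter). Its contribution is $\le\Delta_J^2\delta\le\tau J/N$ by the last hypothesis in \eqref{eq:hyp-rest}. Exogeneity matters in two places. It makes $\Pp(G)$ and $\E\tr$ parameter-free, and it makes the $\xi$ conditionally centred given the design.

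Summing gives $\E\norm{\vartheta-\hat v}^2\le\tau(1+\kappa^{-2}\mu^{-2})J/N$, i.e.\ an average per coordinate $\le\tau(1+\kappa^{-2}\mu^{-2})/N$. Apply $\sum_j\tfrac12\log e_j\le\tfrac J2\log(\tfrac1J\sum_je_j)$ to the coordinatewise errors $e_j$, and the constant $C_{B_0,\mu,\tau}$ appears (with room to spare in the factor $\tfrac14$).

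For the minimax-redundancy statement, use redundancy--capacity: for any $Q$,
\[
\sup_{v\in K_J}\KL(P_v^{(N)}\|Q)\ge\int\KL(P_v^{(N)}\|Q)\,d\mathrm{Unif}(v)\ge I(\vartheta;D),
\]
since the mixture minimises the average divergence. The main obstacle is the strong-convexity step at the boundary of $K_J$: the optimality inequality must be used in variational form rather than as $\nabla L(\hat v)=0$. A secondary point is that the bad event is handled crudely by the diameter. Both are manageable as sketched.
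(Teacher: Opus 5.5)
Your proposal is correct and follows essentially the same route as the paper: constrained maximum likelihood on $K_J$, strong convexity on the conditioning event, the score bound $\tfrac14\tr(Z^{\top}Z)$, the squared diameter on the bad event, then Gaussian maximum entropy and redundancy--capacity. The differences are cosmetic. First, your variational optimality argument gives $\norm{\widehat v-\vartheta}_2\le\norm{g}_2/(\kappa_{B_0}\mu N)$, a factor $2$ sharper than the paper's bound; the paper compares function values, which needs no first-order condition at the boundary. Second, the worried-about extra $\tfrac J2\log J$ never arises: the $/J$ inside $\tfrac J2\log(2\pi e\,\E\norm{\cdot}_2^{2}/J)$ already yields the stated constant from an $O(J/N)$ error, which is exactly what your coordinatewise-plus-Jensen step reproduces.
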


The proof is in \Cref{app:lower}; the three steps are worth flagging, since the
modularity is what makes the theorem reusable.  On the conditioning event the
conditional negative log likelihood is $\kappa_{B_0}\mu N$-strongly convex on
$K_J$, so the constrained maximum likelihood estimator obeys
$\norm{\widehat\vartheta-\vartheta}_2\le2\norm g_2/(\kappa_{B_0}\mu N)$ with $g$
the score; the score has conditionally independent centred summands, so
$\E\norm g_2^{2}\le\tfrac14\E\tr(Z^{\top}Z)\le\tfrac14\tau NJ$; and the resulting
$O(J/N)$ mean-square error becomes mutual information through
\Cref{lem:entropy-estimation}.  The last hypothesis pays for the complement of
the conditioning event using only the diameter of $K_J$.

\Cref{thm:exogenous-lower} differs from the fixed-$d$ logistic lower bounds of
\citet{shamir2020} and \citet{foster2018} --- which assume designs the
construction may choose or which are i.i.d.\ across rounds, and so concentrate
row by row --- in exactly three respects, each needed here: the design may be
\emph{dependent} across rows; the dimension enters \emph{additively}, permitting
$J\asymp T^{1/(2s)}$ in \Cref{cor:poly-rate}; and the conclusion is \emph{per
coordinate}, whereas an isotropic conclusion of the form
$\tfrac J2\log(N\min_jr_j^{2})$ --- what a fixed-dimensional result would give
after rescaling the rectangle to a cube --- would lose exactly the factor that
makes \Cref{cor:poly-rate} sharp.  See \Cref{rem:vs-shamir-long}.

\subsection{Ingredient 2: conditioning the overlapping lag matrix}
\label{sec:converse-toeplitz}

We use the last $N=T-J+1$ rounds --- exactly those on which all of the first
$J$ lags are present --- and the design
\begin{equation}
  Z^{(J)}_{t,j}=U_{J+t-j},
  \qquad 1\le t\le N,\quad 1\le j\le J,
  \label{eq:toeplitz-design}
\end{equation}
whose rows are length-$J$ consecutive blocks of the input stream read
backwards, consecutive rows overlapping in $J-1$ entries.  The Gram matrix is a
random Toeplitz matrix, and the difficulty is that its entries are neither
independent nor sums of independent rows.

\begin{lemma}[Toeplitz Gram concentration]
\label{lem:toeplitz-gram}
For every $1\le J\le T$ and $N=T-J+1$,
\begin{equation}
  \Pp\!\left(\tfrac N2 I_J\preceq(Z^{(J)})^{\top}Z^{(J)}\preceq\tfrac{3N}{2}I_J\right)
  \ \ge\ 1-2J^{2}\exp\!\left(-\frac{N}{8J^{2}}\right).
  \label{eq:toeplitz-gram}
\end{equation}
\end{lemma}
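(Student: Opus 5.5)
The plan is to control the diagonal and off-diagonal entries of $G=(Z^{(J)})^{\top}Z^{(J)}$ separately, and then pass to eigenvalues with Gershgorin's theorem. The diagonal is deterministic: $G_{jj}=\sum_{t=1}^{N}U_{J+t-j}^{2}=N$ exactly, because $U_i^2=1$. So it suffices to show that, with the stated probability, every off-diagonal entry obeys $\abs{G_{jk}}<N/(2J)$. On that event each Gershgorin disc is centred at $N$ and has radius at most $(J-1)N/(2J)<N/2$. Every eigenvalue of the symmetric matrix $G$ therefore lies in $[N/2,3N/2]$, which is \eqref{eq:toeplitz-gram}.

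Fix $j<k$ and set $d=k-j\ge1$. Substituting $a=J+t-j$ gives
\[
G_{jk}=\sum_{t=1}^{N}U_{J+t-j}U_{J+t-k}=\sum_{a=J-j+1}^{J-j+N}U_aU_{a-d},
\]
a sum of $N$ products indexed by the edges $\{a,a-d\}$ of a graph on the input indices. The key structural step is to show that these $N$ edge products are \emph{independent} Rademacher variables. I will proceed in three moves.
\begin{enumerate}
\item The graph splits by residue class modulo $d$. Within each class the edges join consecutive members $a-d,a$, so each component is a path. In particular the graph is a forest, and the edges are distinct since the $a$'s are distinct.
\item Orient each path by increasing index. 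Every edge $\{a-d,a\}$ then has a unique "new" endpoint $a$.
\item Condition on the values $U_i$ at the minimal vertex of each path. Given these, the map from the remaining vertex values to the edge products is a bijection of $\{\pm1\}$-vectors: one recovers $U_a=(U_aU_{a-d})\,U_{a-d}$ inductively along each path. Hence, conditionally, the edge products are i.i.d.\ uniform signs, and so they are unconditionally as well.
\end{enumerate}
This independence claim is the step I expect to need the most care. It is where the dependence among the overlapping Toeplitz rows is neutralised, and it must be stated cleanly, for instance via the bijection or via the factorisation of joint characteristic functions $\E\prod_{e\in S}(U_aU_{a-d})=0$ for every nonempty edge set $S$ of a forest (a leaf of $S$ appears exactly once).

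Once independence is in hand, Hoeffding's inequality for $N$ independent Rademacher summands gives
\[
\Pp\bigl(\abs{G_{jk}}\ge N/(2J)\bigr)\le2\exp\!\left(-\frac{(N/(2J))^{2}}{2N}\right)=2\exp\!\left(-\frac{N}{8J^{2}}\right).
\]
A union bound over the $J(J-1)/2$ unordered pairs, using the symmetry $G_{kj}=G_{jk}$, bounds the failure probability by $J^{2}\exp(-N/(8J^{2}))\le2J^{2}\exp(-N/(8J^{2}))$. Combining this with the Gershgorin step completes the proof. The case $J=1$ is trivial, since there $G=N$ deterministically.
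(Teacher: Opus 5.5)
Your proposal is correct and follows the paper's proof essentially step for step: the diagonal is deterministically $N$, each off-diagonal entry is a sum over the edges of a union of paths (one path per residue class), and the edge products are independent by the root-sign bijection. Your conditioning on each path's minimal vertex is exactly the paper's forest-product lemma, and the argument then finishes with Hoeffding at level $N/(2J)$, a union bound, and Gershgorin. The only differences are cosmetic: you union-bound over unordered pairs using symmetry, which saves a factor of $2$, and you use a strict threshold; neither changes the conclusion.
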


\begin{proof}
The diagonal is deterministic, $((Z^{(J)})^{\top}Z^{(J)})_{jj}=\sum_{t\le N}U_{J+t-j}^{2}=N$,
since $U_s^{2}=1$.  For $j\ne k$ put $q=\abs{j-k}\ge1$; substituting
$s=J+t-\max\{j,k\}$ turns the off-diagonal entry into
\begin{equation}
  S_{j,k}=\sum_{t=1}^{N}U_{J+t-j}U_{J+t-k}=\sum_{s=s_0}^{s_0+N-1}U_sU_{s+q}
  \label{eq:toeplitz-offdiag}
\end{equation}
for an integer $s_0$ depending on $j,k,J$.  These $N$ summands are not
independent as written --- $U_sU_{s+q}$ and $U_{s+q}U_{s+2q}$ share a factor ---
but they are \emph{jointly} independent.  Indeed, let $G$ have as vertices the
input indices occurring in \eqref{eq:toeplitz-offdiag} and as edges
$\{s,s+q\}$, $s=s_0,\ldots,s_0+N-1$.  Every edge joins two indices in the same
residue class modulo $q$, and inside a class the edges join consecutive elements
along the integer line, so $G$ is a union of paths, hence a forest; by
\Cref{lem:forest-products} its edge products are therefore independent Rademacher
variables, because choosing a root per component makes vertex signs
$\mapsto$ (root signs, edge products) a bijection of one hypercube onto another
of equal dimension.

Hoeffding's inequality \citep[Thm.~2.8]{boucheronlugosimassart2013} with
$u=N/(2J)$, exponent $u^{2}/(2N)=N/(8J^{2})$, gives
$\Pp(\abs{S_{j,k}}>N/(2J))\le2\exp(-N/(8J^{2}))$.  A union bound over the
$J(J-1)\le J^{2}$ ordered off-diagonal pairs shows that, with probability at
least the right-hand side of \eqref{eq:toeplitz-gram}, every off-diagonal entry
has modulus at most $N/(2J)$; then each row has off-diagonal absolute sum at most
$(J-1)N/(2J)<N/2$, so Gershgorin's circle theorem
\citep[Thm.~6.1.1]{hornjohnson2013} places every eigenvalue within $N/2$ of the
diagonal value $N$, i.e.\ in $[N/2,3N/2]$.  For $J=1$ the conclusion is
deterministic.
\end{proof}

It is worth saying why the random-matrix literature does not substitute for
\Cref{lem:toeplitz-gram}.  \citet{meckes2007} controls the \emph{largest}
eigenvalue of a random Toeplitz matrix asymptotically, whereas we need a
\emph{smallest}-eigenvalue statement at prescribed finite $(N,J)$ with an
explicit failure probability to trade against \eqref{eq:hyp-rest}; the restricted
isometry properties for partial random circulant matrices of
\citet{rauhutrombergtropp2012} and \citet{krahmermendelsonrauhut2014} are much
stronger but in the wrong quantifier order, being uniform over sparse supports at
a polylogarithmic measurement count with untracked constants.  The forest
argument's price is the requirement $N\gtrsim J^{2}\log N$ in
\eqref{eq:lower-dimension-condition}, suboptimal but harmless here; improving it
is the main open problem left by this paper.  See
\Cref{rem:toeplitz-vs-rip-long}, and \Cref{rem:forest-essential} for why
acyclicity --- hence a Toeplitz rather than circulant design --- is essential.

\subsection{The source-level converse}
\label{sec:converse-source}

\begin{theorem}[Regret lower bound for the lagged source]
\label{thm:lower}
For every fixed $B<\infty$ there is $C_B\ge1$ with the following property.  Let
$1\le J\le T/2$ satisfy $r_J>0$, put $N=T-J+1$, and suppose
\begin{equation}
  N\ \ge\ C_B\,J^{2}\log(2N).
  \label{eq:lower-dimension-condition}
\end{equation}
Then
\begin{equation}
  \cR_T(r)\ \ge\ \Bigl[\tfrac12\textstyle\sum_{j=1}^{J}\log\bigl(Nr_j^{2}\bigr)-C_BJ\Bigr]_+ .
  \label{eq:lower-main}
\end{equation}
\end{theorem}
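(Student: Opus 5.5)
The plan is to reduce the source-level minimax regret to the conditioned-design experiment of \Cref{thm:exogenous-lower}, with the Toeplitz design \eqref{eq:toeplitz-design}, and to verify its hypotheses using \Cref{lem:toeplitz-gram}.

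\emph{Sub-class and reduction to the experiment.} Restrict to parameters $\theta$ with $\theta_j\in[-r_j/2,r_j/2]$ for $j\le J$ and $\theta_j=0$ for $j>J$. This set lies in $\Theta_r$, so the minimax regret can only decrease; this is the rectangle $K_J$. Let $\vartheta$ be uniform on $K_J$. By \eqref{eq:minimax-r} and the redundancy--capacity (Bayes) lower bound, $\cR_T(r)\ge I(\vartheta;X_{1:T+1})$. The data processing inequality then gives $I(\vartheta;X_{1:T+1})\ge I(\vartheta;D)$, where $D$ consists of the inputs $U_{1:T}$ together with the marks $Y_{J+1},\dots,Y_{T+1}$, i.e.\ the last $N=T-J+1$ rounds. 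Write $Z=Z^{(J)}$ and $b=0$. Conditionally on $U$, these marks are independent with $\Ber(\sig(z_t^\top\vartheta))$, since at round $J+t-1$ the logit is $\sum_{j\le J}\theta_jU_{J+t-j}$. The law of $(Z,b)$ does not depend on $\vartheta$, so we are exactly in the exogenous experiment \eqref{eq:exogenous-experiment}. One bookkeeping point needs care: the extra unused inputs and marks are independent of $\vartheta$ given $Z$, or simply discarded. Data processing covers this either way.

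\emph{Checking hypotheses.} For \eqref{eq:hyp-logit}, take $B_0=B/2$. Indeed $\abs{z_t^\top v}\le\sum_j r_j/2\le B/2$, so the constant $\kappa_{B/2}$ depends only on $B$. The trace is deterministic: $\tr(Z^\top Z)=NJ$, so $\tau=1$. By \Cref{lem:toeplitz-gram}, $\mu=1/2$ with $\delta=2J^2e^{-N/(8J^2)}$. The remaining condition is $\Delta_J^2\delta\le J/N$. Since $\Delta_J^2\le(\sum r_j)^2\le B^2$, it suffices that $2B^2J^2e^{-N/(8J^2)}\le J/N$, i.e.\ $N/(8J^2)\ge\log(2B^2JN)$. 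Using $J\le N$, this follows from $N\ge C_BJ^2\log(2N)$ once $C_B$ is large in terms of $B$; this fixes $C_B$, and we also ensure $C_B\ge1$. The condition $J\le T/2$ guarantees $N\ge T/2$ and $N\ge J$, which are used only in this step.

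\emph{Conclusion and obstacle.} \Cref{thm:exogenous-lower} now yields $I(\vartheta;D)\ge[\frac12\sum_{j\le J}\log(Nr_j^2)-C_{B/2,1/2,1}J]_+$. Enlarging $C_B$ to dominate $C_{B/2,1/2,1}$ gives \eqref{eq:lower-main}. The main obstacle is the reduction step: one must make sure the information in the full sequential law dominates that of the conditioned experiment. This requires correctly identifying the rounds $t\ge J$ and the time indexing of $Z$, and checking that $\vartheta$ acts only through the logits on those rows. Beyond that, the only delicate point is the quantitative choice of $C_B$ in the $\delta$-condition, which is routine.
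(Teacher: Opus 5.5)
Your proposal is correct and is essentially the paper's own proof. Both arguments restrict to the half-envelope rectangle $K_J$ with a uniform prior, pass via redundancy--capacity and data processing to $(Z^{(J)},Y_{J+1},\ldots,Y_{T+1})$, and verify the hypotheses of \Cref{thm:exogenous-lower} with $b=0$, $B_0=B/2$, $\tau=1$ (the trace is $NJ$ deterministically), and $\mu=1/2$, $\delta=2J^{2}e^{-N/(8J^{2})}$ from \Cref{lem:toeplitz-gram}. The paper makes explicit two things you leave implicit: that $r_j\ge r_J>0$ for $j\le J$, so $K_J$ has positive sides; and the concrete constant $C_B\ge 8\bigl(3+\log_2(1+B^{2})\bigr)$, which it then enlarges to dominate $C_{B/2,1/2,1}$.
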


\paragraph{How the ingredients combine.}
\Cref{app:lower} carries this out: restrict the parameter to $K_J$ with the
envelope's own radii, zero later coefficients, and put the uniform prior on
$K_J$, which lies in $\Theta_r$ since $r_j/2\le r_j$; retain $U_{1:T}$ and the
labels $Y_{J+1},\ldots,Y_{T+1}$, whose logits are $z_i^{\top}\vartheta$.  The
hypotheses then hold with $b=0$, $B_0=B/2$, $\tau=1$ (the Gram trace is $NJ$
\emph{deterministically}), $\mu=1/2$ and $\delta=2J^{2}\exp(-N/(8J^{2}))$, and
$\Delta_J^{2}\delta\le J/N$ once $C_B\ge8(3+\log_2(1+B^{2}))$.  Nothing is lost:
the entries of $Z^{(J)}$ are precisely $U_1,\ldots,U_T$, so redundancy--capacity
and data processing give
$\cR_T(r)\ge I(\vartheta;X_{1:T+1})\ge I(\vartheta;(Z^{(J)},Y_{1:N}))$.

For envelopes whose active set is not too large, \Cref{thm:lower} may be applied
at the natural cutoff.

\begin{corollary}[Strongly active coordinates]
\label{cor:active-lower}
There are $a_B>1$, $c_B>0$, $C_B\ge1$ such that, with
$J_T(a_B)=\max\{j\le T/2:Tr_j^{2}\ge a_B\}$ (equal to $0$ if the set is empty),
the condition $T\ge C_BJ_T(a_B)^{2}\log(2T)$ implies
\begin{equation}
  \cR_T(r)\ \ge\ c_B\sum_{j=1}^{J_T(a_B)}\log\bigl(1+Tr_j^{2}\bigr),
  \label{eq:active-spectrum-lower}
\end{equation}
and one may take $c_B=1/4$.
\end{corollary}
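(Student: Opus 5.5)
The plan is to apply \Cref{thm:lower} with $J=J\defeq J_T(a_B)$ and then convert each summand $\tfrac12\log(Nr_j^2)$ into a constant fraction of $\log(1+Tr_j^2)$, absorbing the additive $-C_BJ$ by choosing $a_B$ large. If $J=0$ the right side of \eqref{eq:active-spectrum-lower} is an empty sum and there is nothing to prove, so assume $J\ge1$. Then $r_J>0$ and $J\le T/2$ by definition of $J_T(a_B)$.

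\textbf{Step 1: verify the hypotheses of \Cref{thm:lower}.} Let $C_B'$ denote the constant of \Cref{thm:lower}. Since $J\le T/2$, we have $N=T-J+1\ge T/2$ and $N\le T$, so $\log(2N)\le\log(2T)$. Choosing the corollary's constant as $C_B\defeq 2C_B'$, the assumption $T\ge C_BJ^2\log(2T)$ gives
\[
N\ \ge\ T/2\ \ge\ C_B'J^{2}\log(2T)\ \ge\ C_B'J^2\log(2N),
\]
which is \eqref{eq:lower-dimension-condition}. Hence
\[
\cR_T(r)\ \ge\ \tfrac12\sum_{j\le J}\log(Nr_j^2)-C_B'J .
\]

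\textbf{Step 2: per-coordinate comparison.} Because $r$ is nonincreasing, every $j\le J$ satisfies $Tr_j^2\ge Tr_J^2\ge a_B$. Using $N\ge T/2$,
\[
\log(Nr_j^2)\ \ge\ \log(Tr_j^2)-\log 2 .
\]
For $x\ge a_B\ge 1$ we have $\log(1+x)\le\log x+\log 2$, so
\[
\log(Nr_j^2)\ \ge\ \log(1+Tr_j^2)-2\log2 .
\]
Summing over $j\le J$,
\[
\cR_T(r)\ \ge\ \tfrac12\sum_{j\le J}\log(1+Tr_j^2)-\bigl(\log2+C_B'\bigr)J .
\]

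\textbf{Step 3: absorb the linear term.} Each summand satisfies $\log(1+Tr_j^2)\ge\log(1+a_B)$. It therefore suffices that
\[
\tfrac14\log(1+a_B)\ \ge\ \log2+C_B',
\]
that is, $a_B\defeq \exp\bigl(4(\log 2+C_B')\bigr)$, which is larger than $1$. Then $(\log2+C_B')J\le\tfrac14\sum_{j\le J}\log(1+Tr_j^2)$, and we obtain \eqref{eq:active-spectrum-lower} with $c_B=\tfrac14$.

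The only delicate point is the order of constant choices. $C_B'$ comes from \Cref{thm:lower} and depends only on $B$. Then $a_B$ is fixed from $C_B'$, and the corollary's $C_B$ is fixed from $C_B'$. The threshold $a_B$ enters only through the definition of $J$, so there is no circularity. The condition in Step 1 is stated directly in terms of $J_T(a_B)$, so it is exactly what is needed. I expect no real obstacle beyond bookkeeping the factor $2$ between $N$ and $T$ in both the dimension condition and the logarithms.
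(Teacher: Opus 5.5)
Your proof is correct and follows the paper's argument essentially step for step. You apply \Cref{thm:lower} at $J=J_T(a_B)$ and take the corollary's constant to be twice the theorem's, so that $N\ge T/2$ transfers the dimension condition. Monotonicity of $r$ then gives $Tr_j^2\ge a_B$ for every $j\le J$, and $a_B$ is taken large enough to absorb the additive $-C_BJ$ coordinate by coordinate. The only cosmetic difference is that you make $a_B=\exp\bigl(4(\log2+C_B')\bigr)$ explicit via $\log(1+x)\le\log x+\log2$, whereas the paper simply asserts that $a_B$ can be chosen so that $\tfrac12\log(x/2)-C_B^{\mathrm{low}}\ge\tfrac14\log(1+x)$ for all $x\ge a_B$.
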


\looseness=-1
The bounds are stated so their domains of validity are visible.  For a
completely arbitrary envelope, a large collection of barely active coordinates
contributes to $\Gamma_T(r)$ but is not reached by
\eqref{eq:lower-dimension-condition}; closing that gap is the aspect-ratio
question of \Cref{rem:toeplitz-vs-rip-long}.  For the exponential and polynomial
profiles with fixed decay parameters, the coordinates with $Tr_j^{2}\gtrsim1$
number $o(\sqrt{T/\log T})$, so the condition is eventually satisfied and they
already carry a constant fraction of $\Gamma_T(r)$; the next section verifies
both facts.

\section{Sharp Rates for the Canonical Envelopes}
\label{sec:rates}

The minimax analysis now reduces to evaluating $\Gamma_T(r)$ and checking the
dimension condition.  The conditions below are finite-sample versions of the
requirement that the active coordinates fit inside the Toeplitz regime of
\Cref{thm:lower}, and each is automatic for fixed model parameters and large
$T$.  All three proofs are in \Cref{app:rates}, with constants carried
explicitly.

\begin{corollary}[Finite memory]
\label{cor:finite-rate}
There are $a_B>1$, $c_B>0$, $C_B\ge1$ such that if $r_j=0$ for $j>k$,
$k\le T/2$, $T\ge C_Bk^{2}\log(2T)$ and $Tr_k^{2}\ge a_B$, then with $C$ the
universal constant of \Cref{thm:upper},
\begin{equation}
  c_B\sum_{j=1}^{k}\log(1+Tr_j^{2})
  \ \le\ \cR_T(r)\ \le\ C\sum_{j=1}^{k}\log(1+Tr_j^{2}).
  \label{eq:finite-rate}
\end{equation}
For fixed $k$ and fixed positive radii the conditions hold for all large $T$
and $\cR_T(r)=\Theta_{B,k,r}(k\log T)$.
\end{corollary}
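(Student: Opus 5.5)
The upper bound follows directly from \Cref{thm:upper}. Since $r_j=0$ for $j>k$, every $s_{T,j}$ with $j>k$ vanishes. For $j\le k$ we have $s_{T,j}=n_{T,j}r_j^{2}\le Tr_j^{2}$, so
\[
\Gamma_T(r)=\sum_{j\le k}\log(1+s_{T,j})\le\sum_{j\le k}\log(1+Tr_j^{2}),
\]
and therefore $\cR_T(r)\le C\Gamma_T(r)$ gives the right-hand inequality of \eqref{eq:finite-rate}.

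For the lower bound I will invoke \Cref{cor:active-lower}, taking $a_B,c_B,C_B$ to be the constants supplied there. The key observation is that $J_T(a_B)=k$. Because $r$ is nonincreasing, $Tr_j^{2}\ge Tr_k^{2}\ge a_B$ for every $j\le k$. Because $k\le T/2$ and $r_j=0<a_B/T$ for $j>k$, no larger index qualifies. Hence the maximum in the definition of $J_T(a_B)$ equals $k$. The dimension condition of that corollary then reads $T\ge C_Bk^{2}\log(2T)$, which is exactly our hypothesis. So \eqref{eq:active-spectrum-lower} yields $\cR_T(r)\ge c_B\sum_{j\le k}\log(1+Tr_j^{2})$.

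If one prefers to go directly through \Cref{thm:lower}, take $J=k$ and $N=T-k+1\ge T/2$. The condition $N\ge C'_Bk^{2}\log(2N)$ follows from $T\ge C_Bk^{2}\log(2T)$ after enlarging $C_B$ by a factor $2$. It then remains to check that $\tfrac12\log(Nr_j^{2})-C'_B\ge c\log(1+Tr_j^{2})$. This holds because $Nr_j^{2}\ge\tfrac12Tr_j^{2}\ge a_B/2$, and for $a_B$ large, $\log(Tr_j^{2}/2)$ dominates both the constant $C'_B$ and the difference $\log(1+Tr_j^{2})-\log(Tr_j^{2})$. This absorption is the only step that needs any care, and \Cref{cor:active-lower} already packages it.

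For the asymptotic statement, fix $k$ and positive radii $r_1\ge\dots\ge r_k>0$. Then $k\le T/2$, $T\ge C_Bk^{2}\log(2T)$ and $Tr_k^{2}\ge a_B$ all hold for $T$ large. Each term $\log(1+Tr_j^{2})=\log T+O_{r}(1)$, so both sides of \eqref{eq:finite-rate} are $\Theta_{B,k,r}(k\log T)$. I expect no genuine obstacle here. The only point to verify is that $J_T(a_B)$ equals $k$, and that rests on the monotonicity of $r$ together with the restriction $j\le T/2$.
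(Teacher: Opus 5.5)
Your proposal is correct and matches the paper's proof essentially step for step. The upper bound comes from $n_{T,j}\le T$ together with \Cref{thm:upper}. The lower bound comes from checking $J_T(a_B)=k$ via monotonicity of $r$ and $k\le T/2$, and then invoking \Cref{cor:active-lower}; your alternative route through \Cref{thm:lower} is simply that corollary's own proof unpacked.
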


This recovers the parametric scale, but the model is not a context table: it
ties $k$ coefficients across all $2^k$ input contexts, whereas an unrestricted
binary order-$k$ Markov family has $2^k$ free transitions and redundancy of order
$2^k\log T$ under interiority \citep{atteson1999}.  Markov order can therefore
overstate prediction complexity exponentially even before infinite memory.

\begin{corollary}[Exponential envelope]
\label{cor:exp-rate}
Let $r_j=Ae^{-\alpha j}$ with $A,\alpha>0$ and $\sum_jr_j\le B$, and write
$\Lambda_T=\log(A^{2}T)$.  There are constants $C^{\mathrm{reg}}_B\ge1$,
$c^{\mathrm{reg}}_B\in(0,1]$ delimiting a regime and $c_B,C_B>0$ in the
conclusion, all depending only on $B$, such that whenever
\begin{equation}
  \Lambda_T\ \ge\ C_B^{\mathrm{reg}}(1+\alpha),
  \qquad
  \bigl(\Lambda_T/\alpha\bigr)^{2}\log(2T)\ \le\ c_B^{\mathrm{reg}}\,T,
  \label{eq:exp-regime}
\end{equation}
one has $c_B\Lambda_T^{2}/\alpha\le\cR_T(r)\le C_B\Lambda_T^{2}/\alpha$.  For
fixed $A,\alpha$ both conditions hold for all large $T$, so
$\cR_T(r)=\Theta_{A,\alpha,B}(\log^{2}T)$; the finite-sample bounds display the
scale $\alpha^{-1}\log^{2}T$ uniformly in $\alpha$ for fixed $A,B$.
\end{corollary}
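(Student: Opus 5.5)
The argument has two halves. The upper half follows from \Cref{thm:upper} once $\Gamma_T(r)\lesssim\Lambda_T^{2}/\alpha$ is shown. The lower half applies \Cref{cor:active-lower} (equivalently \Cref{thm:lower} at a chosen cutoff) and checks that the retained coordinates carry $\gtrsim\Lambda_T^{2}/\alpha$.

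\emph{Upper bound.} Since $n_{T,j}\le T$, we have $s_{T,j}\le TA^{2}e^{-2\alpha j}=e^{\Lambda_T-2\alpha j}$. Put $j^\ast=\Lambda_T/(2\alpha)$. For $j\le j^\ast$ use $\log(1+e^{x})\le x+\log2$, and for $j>j^\ast$ use $\log(1+e^{x})\le e^{x}$. The first block contributes $\sum_{j\le j^\ast}(\Lambda_T-2\alpha j+\log 2)\le \Lambda_T^{2}/(4\alpha)+\Lambda_T/2+(\log2)(j^\ast+1)$. The second is a geometric tail, bounded by $1/(1-e^{-2\alpha})\le 1+1/(2\alpha)$. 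The first regime condition, $\Lambda_T\ge C^{\mathrm{reg}}_B(1+\alpha)$, absorbs every lower-order term ($\Lambda_T$, $\Lambda_T/\alpha$, $1$, $1/\alpha$) into a multiple of $\Lambda_T^{2}/\alpha$. Indeed $\Lambda_T\ge\alpha$ gives $\Lambda_T\le\Lambda_T^{2}/\alpha$, and $\Lambda_T\ge 1$ gives $1/\alpha\le\Lambda_T^{2}/\alpha$. Hence $\cR_T(r)\le C\Gamma_T(r)\le C_B\Lambda_T^{2}/\alpha$.

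\emph{Lower bound.} Take $J=\lfloor \Lambda_T/(4\alpha)\rfloor$, about half the active block. Then for $j\le J$ we have $Tr_j^{2}\ge e^{\Lambda_T/2}$. I would rather apply \Cref{thm:lower} directly than \Cref{cor:active-lower}, since the explicit sum is then easy. The conditions are checked as follows.
\begin{itemize}
\item $J\ge1$ holds because $\Lambda_T\ge C^{\mathrm{reg}}_B(1+\alpha)\ge4\alpha$.
\item $J\le T/2$ and $r_J>0$ are immediate.
\item $N=T-J+1\ge T/2$.
\item \eqref{eq:lower-dimension-condition} follows from $(\Lambda_T/\alpha)^{2}\log(2T)\le c^{\mathrm{reg}}_BT$, with $c^{\mathrm{reg}}_B$ small relative to the $C_B$ of \Cref{thm:lower}.
\end{itemize}
Then \eqref{eq:lower-main} gives
\[
\cR_T(r)\ge \tfrac12\sum_{j\le J}\bigl(\Lambda_T-2\alpha j-\log2\bigr)-C_BJ,
\]
and on $j\le J$ each summand is at least $\Lambda_T/2-\log 2$. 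So the sum is $\ge J(\Lambda_T/4-C_B-1)$. Because $\Lambda_T\ge C^{\mathrm{reg}}_B$ with $C^{\mathrm{reg}}_B\ge 8(C_B+1)$, this is $\ge J\Lambda_T/8\gtrsim \Lambda_T^{2}/\alpha$. The last step uses $J\ge\Lambda_T/(8\alpha)$, valid since $\Lambda_T/(4\alpha)\ge1$.

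The main obstacle is bookkeeping rather than ideas. One has to choose the regime constants in the right order: first the $C_B$ of \Cref{thm:lower}, then $C^{\mathrm{reg}}_B$ large enough to dominate $C_B$ and the additive constants, then $c^{\mathrm{reg}}_B$ small enough to imply the dimension condition with $N\ge T/2$ and $\log(2N)\le\log(2T)$. One must also ensure all bounds are uniform in $\alpha$, notably the geometric tail $1/(1-e^{-2\alpha})$ for small $\alpha$ and the floor in $J$ for large $\alpha$. The final claim for fixed $A,\alpha$ follows because $\Lambda_T\to\infty$ while $\Lambda_T^{2}\log(2T)=o(T)$.
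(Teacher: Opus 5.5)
Your proposal is correct and follows essentially the paper's proof. For the upper half you bound $\Gamma_T(r)$ by splitting the profile at $\Lambda_T/(2\alpha)$; the paper makes the same split after comparing the sum with an integral. For the lower half you apply \Cref{thm:lower} directly at a cutoff $J\asymp\Lambda_T/\alpha$, with the regime constants chosen in the order you describe.

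The only differences are cosmetic. The paper takes $J=\lfloor\Lambda_T/(8\alpha)\rfloor$ and sums the linear profile $\Lambda_T-2\alpha j$ exactly. Your $J=\lfloor\Lambda_T/(4\alpha)\rfloor$ with the uniform per-term bound $\Lambda_T/2$ avoids the extra step $\alpha\le\Lambda_T/C^{\mathrm{reg}}_B$ and gives the same constant $c_B=1/64$. One small point: $J\le T/2$ is not quite immediate, but as in the paper it follows from the second regime condition.
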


\looseness=-1
The squared logarithm is not a proof artefact.  Active lags number only
$\Theta(\alpha^{-1}\log T)$, but lag $j$ has amplitude $Ae^{-\alpha j}$ and must
be resolved to precision $T^{-1/2}$, costing about
$\tfrac12(\Lambda_T-2\alpha j)$ nats, and summing this linearly decreasing
profile over the $\Lambda_T/(2\alpha)$ active lags gives $\Lambda_T^{2}/(4\alpha)$:
the area of a triangle whose base is the number of active lags and whose height
is the resolution of the strongest one.

\begin{corollary}[Polynomial envelope]
\label{cor:poly-rate}
Let $r_j=Aj^{-s}$ with $A>0$, $s>1$, $\sum_jr_j\le B$, and set
$M_T=(A^{2}T)^{1/(2s)}$.  There are constants
$C^{\mathrm{reg}}_{B,s}\ge2$, $c^{\mathrm{reg}}_{B,s}\in(0,1]$ and
$c_{B,s},C_{B,s}>0$, depending only on $B,s$, such that whenever
\begin{equation}
  M_T\ \ge\ C_{B,s}^{\mathrm{reg}},
  \qquad
  M_T^{2}\log(2T)\ \le\ c_{B,s}^{\mathrm{reg}}\,T,
  \label{eq:poly-regime}
\end{equation}
one has $c_{B,s}M_T\le\cR_T(r)\le C_{B,s}M_T$; for fixed $A$ and $s>1$ both hold
for all large $T$, hence $\cR_T(r)=\Theta_{A,s,B}(T^{1/(2s)})$.
\end{corollary}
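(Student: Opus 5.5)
The plan has two halves. The upper half is \Cref{thm:upper} together with an evaluation of $\Gamma_T(r)$; the lower half is \Cref{cor:active-lower}, applied with $J=J_T(a_B)$, together with an evaluation of the active sum. Both halves reduce to elementary estimates on the profile $s_{T,j}=n_{T,j}A^{2}j^{-2s}$, which lies between $\tfrac12 A^{2}Tj^{-2s}$ and $A^{2}Tj^{-2s}$ for $j\le T/2$. The threshold $A^{2}Tj^{-2s}=1$ sits exactly at $j=M_T$.

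For the upper bound I will split $\Gamma_T(r)$ at $M_T$. For $j\le M_T$, use $\log(1+s_{T,j})\le\log(1+(M_T/j)^{2s})\le\log2+2s\log(M_T/j)$. Summing over $j\le M_T$ and using Stirling in the form $\sum_{j\le M}\log(M/j)\le M$ gives at most $(\log2+2s)M_T$. For $M_T<j\le T$, use $\log(1+x)\le x$, so the tail is at most $\sum_{j>M_T}(M_T/j)^{2s}\le M_T\bigl(1+\tfrac{1}{2s-1}\bigr)$ by an integral comparison. Hence $\Gamma_T(r)\le C_sM_T$ and $\cR_T(r)\le C\,C_sM_T$. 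This half needs only $M_T\ge1$.

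For the lower bound, take $a_B$ from \Cref{cor:active-lower}. Monotonicity of $r$ gives
\[
J_T(a_B)=\min\bigl\{\lfloor T/2\rfloor,\ \lfloor M_Ta_B^{-1/(2s)}\rfloor\bigr\}.
\]
The second regime condition $M_T^{2}\log(2T)\le c^{\mathrm{reg}}T$ gives $M_T\le T/2$, so $J_T(a_B)=\lfloor M_Ta_B^{-1/(2s)}\rfloor$. The first condition, $M_T\ge C^{\mathrm{reg}}_{B,s}$ with $C^{\mathrm{reg}}_{B,s}\ge2a_B^{1/(2s)}$, makes this at least $\tfrac12M_Ta_B^{-1/(2s)}\ge1$.

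Next I check the dimension condition $T\ge C_BJ_T(a_B)^{2}\log(2T)$ of \Cref{cor:active-lower}. We have $J_T(a_B)^{2}\le M_T^{2}$, so it suffices that $C_BM_T^{2}\log(2T)\le T$. This holds once $c^{\mathrm{reg}}_{B,s}\le1/C_B$, and this choice is the whole reason for the second regime condition.

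\Cref{cor:active-lower} then gives $\cR_T(r)\ge\tfrac14\sum_{j\le J}\log(1+Tr_j^{2})$. Keep only $j\le J/2$. For those $j$, $Tr_j^{2}\ge a_B2^{2s}>1$, so each term is at least $\log 2$. The sum is therefore at least $\tfrac{\log2}{2}\lfloor J/2\rfloor\gtrsim_{B,s}M_T$, after enlarging $C^{\mathrm{reg}}_{B,s}$ so that $J\ge2$.

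The main obstacle is bookkeeping rather than ideas. The constants $a_B$ and $C_B$ come from \Cref{cor:active-lower}, and the regime constants must be chosen after them so that three things hold simultaneously: $J_T(a_B)$ is not capped by $T/2$, $J_T(a_B)$ is of order $M_T$, and the quadratic dimension condition holds. I would fix $a_B,C_B$ first, then set $c^{\mathrm{reg}}_{B,s}=\min\{1,1/C_B\}$ and $C^{\mathrm{reg}}_{B,s}=\max\{2,4a_B^{1/(2s)}\}$. For fixed $A$ and $s>1$, $M_T^{2}\log T=(A^{2}T)^{1/s}\log T=o(T)$ and $M_T\to\infty$, so both regime conditions eventually hold.
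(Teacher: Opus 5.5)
Your proposal is correct and is essentially the paper's proof. The upper bound is the same head/tail split of $\Gamma_T(r)$ at $M_T$: Stirling on the head, and $\log(1+x)\le x$ with an integral comparison on the tail. The lower bound is \Cref{thm:lower} at a cutoff $J\asymp M_T$, with the dimension condition coming from $J^{2}\log(2T)\le M_T^{2}\log(2T)\le c^{\mathrm{reg}}_{B,s}T$.

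The only difference is packaging. You invoke \Cref{cor:active-lower} at $J=J_T(a_B)=\lfloor M_Ta_B^{-1/(2s)}\rfloor$, where $a_B$ already absorbs $C_B^{\mathrm{low}}$. The paper instead applies \Cref{thm:lower} directly with $J=\lfloor c^{\star}_sM_T\rfloor$, choosing $c^{\star}_s$ so that every summand of \eqref{eq:lower-main} is at least $1$; this is the same bookkeeping. Two small points: your restriction to $j\le J/2$ is unnecessary, since every $j\le J$ already has $Tr_j^{2}\ge a_B>1$. And $M_T\le T/2$ needs both regime conditions, because the first is what forces $\log(2T)\ge1$.
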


The key evaluation, proved in \Cref{app:rates} and valid whenever $M_T\ge2$, is
\begin{equation}
  \sum_{j\ge1}\log\!\left(1+A^{2}Tj^{-2s}\right)\ \asymp_s\ (A^{2}T)^{1/(2s)} ,
  \label{eq:poly-spectrum-eval}
\end{equation}
the upper bound by splitting at $j=\lceil M_T\rceil$, the lower one because every
$j\le M_T$ contributes at least $\log2$.  Contrast the hard-truncation bound
$h\log T+TA^{2}h^{1-2s}$ of \Cref{rem:hard-truncation-long}: balancing gives
$h\asymp(T/\log T)^{1/(2s)}$ and hence
\begin{equation}
  T^{1/(2s)}\,(\log T)^{1-1/(2s)},
  \label{eq:hard-trunc-extra-log}
\end{equation}
larger than \eqref{eq:poly-spectrum-eval} by a diverging factor.  The sum in \eqref{eq:poly-spectrum-eval}
therefore identifies a different rate, and \Cref{cor:poly-rate} shows
the improved rate is the truth.  That the polynomial case is covered at all is
itself instructive: the active set has $J\asymp M_T$ coordinates, so
\eqref{eq:lower-dimension-condition} asks for $T\gtrsim T^{1/s}\log T$, which
holds precisely because $s>1$ --- the same inequality that makes the envelope
summable, so the boundary of our converse coincides with the boundary of the
model.

\section{A Polynomial-Time Predictor}
\label{sec:algorithms}

The Bayesian mixture in \Cref{thm:upper} establishes statistical achievability.
We now give an explicit Online Newton Step (ONS) predictor
\citep{hazanagarwalkale2007} with regret $O_B(\Gamma_T(r))$.
A mark prediction $\widehat p_t$ induces the joint assignment
$q_t((u,y)\mid X_{1:t})=\tfrac12\widehat p_t^{\,y}(1-\widehat p_t)^{1-y}$.

Fix $r$ and $T$.  Since $s_{T,j}$ is nonincreasing, the cutoff $h_T(r)=\max\{j\le T:s_{T,j}\ge1\}$ (or $0$ if the set is empty) splits
the lags into an initial block worth estimating and a tail that is not.  With
$h=h_T(r)$ and $u_j=\theta_j/r_j\in[-1,1]$, form before predicting $Y_{t+1}$ the
predictable scaled feature
\begin{equation}
  x_t^{(h)}=\bigl(r_1U_t,\ r_2U_{t-1},\ \ldots,\ r_{h\wedge t}U_{t+1-h\wedge t},\ 0,\ldots,0\bigr)\in\R^{h},
  \label{eq:scaled-feature}
\end{equation}
so that every $u\in[-1,1]^{h}$ gives a logit in $[-B,B]$, because
$\sum_{j\le h}r_j\le B$.  Run ONS on
$\ell_t(u)=\psi(\ip{u}{x_t^{(h)}})-Y_{t+1}\ip{u}{x_t^{(h)}}$ with curvature
$\beta_B=\kappa_B$, as in \Cref{alg:scaled-ons}; if $h=0$ the algorithm always
predicts $\tfrac12$ and never updates.  For positive-definite $H$ and closed
convex $\cS$, $\Pi_{\cS}^{H}(v)$ minimises $(u-v)^{\top}H(u-v)$ over $u\in\cS$.

\begin{theorem}[Regret of the online Newton predictor]
\label{thm:ons-spectrum}
There is $C_B>0$, depending only on $B$, such that \Cref{alg:scaled-ons}
satisfies, for every $\theta\in\Theta_r$ and with $h=h_T(r)$,
\begin{equation}
  \Reg_T(\mathrm{ONS}_r;P_{\theta,T})
  \ \le\ C_B\Bigl[h+\sum_{j=1}^{h}\log\bigl(1+n_{T,j}r_j^{2}\bigr)\Bigr]
   +\frac18\sum_{j>h}n_{T,j}\theta_j^{2}
  \ \le\ 3\,C_B\,\Gamma_T(r).
  \label{eq:ons-oracle}
\end{equation}
Each round costs $O(h^{2})$ arithmetic and $O(h^{2})$ memory for the rank-one
update of $H_t$ and $H_t^{-1}$, plus one convex quadratic projection onto
$[-1,1]^{h}$ in the $H_t$ metric, a polynomial-time operation whose exact cost
depends on the solver.
\end{theorem}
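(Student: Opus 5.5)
We decompose the regret into an estimation term for the retained block $j\le h$ and a truncation term for the tail $j>h$. Write $p_t=\sig(\eta_{t+1}(\theta))$ for the true mark probability and $\widehat p_t=\sig(\ip{\widehat u_t}{x_t^{(h)}})$ for the ONS prediction. Let $u^\star_j=\theta_j/r_j$ for $j\le h$, so that $u^\star\in[-1,1]^h$ and $\ip{u^\star}{x_t^{(h)}}=\eta_{t+1}(\theta^{(h)})$ is the truncated logit. Because $U_{t+1}$ has a known law, the regret is $\sum_t\E\KL(\Ber(p_t)\|\Ber(\widehat p_t))$. It can be written as
\[
\E\sum_t[\ell_t(\widehat u_t)-\ell_t(u^\star)]+\E\sum_t\bigl[\ell_t(u^\star)-(-\log\Ber(p_t)(Y_{t+1}))\bigr].
\]
The second sum equals $\KL(P_{\theta,T}\|P_{\theta^{(h)},T})=\cB_T(\theta,h)$, which \Cref{thm:memory} (or \Cref{lem:pairwise-kl}) bounds by $\tfrac18V_T(\theta,h)=\tfrac18\sum_{j>h}n_{T,j}\theta_j^2$. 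This gives the last term of \eqref{eq:ons-oracle}. The first sum is the pathwise ONS regret against the fixed comparator $u^\star$, and it remains to bound it.

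For the ONS part, the losses $\ell_t$ are exp-concave on $[-1,1]^h$ because every logit lies in $[-B,B]$, with $\psi''\ge\kappa_B$ there. The gradients are $(\widehat p_t-Y_{t+1})x_t^{(h)}$, with $\norm{x_t^{(h)}}_2\le\sum_{j\le h}r_j\le B$. The diameter of the domain in the scaled coordinates is $2\sqrt h$. We will use the standard analysis of \citet{hazanagarwalkale2007}, with $H_t=\epsilon I+\beta\sum_{s\le t}\nabla_s\nabla_s^\top$ and $\epsilon$ of order $1$. This gives
\[
\text{Regret}\le \tfrac{1}{\beta}\log\det(H_T/\epsilon)+\epsilon\cdot O(h)+O(h)\quad\text{with }\beta\asymp\kappa_B.
\]
The key step is to avoid the crude bound $h\log T$. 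We use Hadamard's inequality, $\log\det(I+\epsilon^{-1}\beta\sum_t\nabla_t\nabla_t^\top)\le\sum_{j\le h}\log(1+\epsilon^{-1}\beta\sum_t\nabla_{t,j}^2)$. Since $\abs{\widehat p_t-Y_{t+1}}\le1$ and $(x_{t,j}^{(h)})^2=r_j^2\ind\{t\ge j\}$, the $j$-th diagonal entry is at most $\beta n_{T,j}r_j^2/\epsilon$, deterministically. Hence the log-determinant is at most $\sum_{j\le h}\log(1+C n_{T,j}r_j^2)\le\sum_{j\le h}[\log(1+n_{T,j}r_j^2)+\log C]$. This yields the first bracket with a constant $C_B$ depending only on $\kappa_B$ and $B$.

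The final inequality is then bookkeeping. For $j\le h$ we have $s_{T,j}\ge1$, so $h\le\Gamma_T(r)/\log2$. For $j>h$ we have $s_{T,j}<1$, so $\tfrac18\sum_{j>h}n_{T,j}\theta_j^2\le\tfrac18\sum_{j>h}s_{T,j}\le\tfrac{1}{8\log2}\Gamma_T(r)$. Together these give $3C_B\Gamma_T(r)$ once $C_B$ is enlarged if necessary. The complexity claim follows from Sherman--Morrison updates of $H_t^{-1}$ at cost $O(h^2)$ per round, plus one projection, which is a convex QP.

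The main obstacle is running the exp-concavity or ONS constant carefully in the scaled coordinates. The factor $\gamma=\tfrac12\min\{1/(GD),\alpha\}$ involves $GD\le B\cdot2\sqrt h$, which depends on $h$. With the textbook step size, $1/\gamma$ would then pick up a factor $\sqrt h$. We would instead use the curvature-based variant for logistic loss. Since the logits are bounded by $B$ on the domain, we have
\[
\ell_t(u)\ge\ell_t(v)+\ip{\nabla\ell_t(v)}{u-v}+\tfrac{\kappa_B}{2}\ip{x_t}{u-v}^2,
\]
and $\nabla_t\nabla_t^\top\preceq x_tx_t^\top$. This lets us take $\beta=\kappa_B$ without any diameter factor, as the algorithm's choice $\beta_B=\kappa_B$ indicates. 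The diameter then enters only through $\epsilon\norm{u^\star-\widehat u_1}^2\le4\epsilon h$, which is absorbed into the $C_Bh$ term.
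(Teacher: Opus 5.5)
Your proposal is correct and follows essentially the paper's proof. The shared steps are the split into pathwise ONS regret against the rescaled truncated comparator $u^\star_j=\theta_j/r_j$ plus the truncation bias $\cB_T(\theta,h)\le\frac18V_T(\theta,h)$, and the curvature-based exp-concavity with $\beta=\kappa_B$, which avoids the diameter factor. They also include Hadamard's inequality with the deterministic diagonal bound $n_{T,j}r_j^2$ and the same bookkeeping via $s_{T,j}\ge1$ for $j\le h$ and $s_{T,j}<1$ for $j>h$. The one difference is cosmetic: you apply Hadamard directly to the gradient Gram matrix and bound its diagonal using $\abs{\widehat p_t-Y_{t+1}}\le1$. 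The paper instead first passes to the feature Gram matrix via $g_tg_t^{\top}\preceq x_tx_t^{\top}$ and monotonicity of $\log\det(I+\cdot)$, and the two routes are equivalent.
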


\begin{wrapfigure}{R}{0.5\textwidth}
\vspace{-.4in}
\begin{minipage}{\linewidth}
\begin{algorithm2e}[H]
\small
\caption{Profile-scaled ONS for a known envelope}
\label{alg:scaled-ons}
\KwIn{envelope $r$, horizon $T$, logit bound $B$}
Set $\beta_B=\kappa_B$, $h=h_T(r)$\;
Set $u_1=0\in\R^{h}$, $H_0=I_h$\;
Observe $X_1=(U_1,Y_1)$\;
\For{$t=1,\ldots,T$}{
  Form $x_t^{(h)}$ via \eqref{eq:scaled-feature}\;
  Predict $\widehat p_t=\sig(\ip{u_t}{x_t^{(h)}})$ for $Y_{t+1}$, and $\tfrac12$ for $U_{t+1}$\;
  Observe $X_{t+1}$\;
  Set $g_t=(\widehat p_t-Y_{t+1})\,x_t^{(h)}$\;
  $H_t=H_{t-1}+\beta_B\,g_tg_t^{\top}$\;
  $u_{t+1}=\Pi_{[-1,1]^{h}}^{H_t}\bigl(u_t-H_t^{-1}g_t\bigr)$\;
}
\end{algorithm2e}
\end{minipage}
\vspace{-.2in}
\end{wrapfigure}

The proof in \Cref{app:ons} compares ONS with the truncated parameter
$u_j=\theta_j/r_j$.  Since the feature Gram matrix has deterministic diagonal
$n_{T,j}r_j^2$, Hadamard's inequality bounds its log-determinant by the sum in
\eqref{eq:ons-oracle}.  \Cref{thm:memory} controls the omitted tail, and the
cutoff ensures that both this tail and the dimension term are
$O_B(\Gamma_T(r))$.

The implementation uses the known envelope and horizon.  \Cref{app:ons}
also gives adaptation over candidate envelopes with the additional cost
$\log(1/\pi_k)$, where $\pi_k$ is the candidate's prior weight
(\Cref{cor:profile-adaptation}), and extensions to predictable features and
low-rank filters (\Cref{thm:predictable-design,thm:filter-approximation}).
The detailed comparison with standard ONS and the discussion of horizon
adaptation are in \Cref{rem:vs-hak-long,rem:horizon-long}.

\section{Memory Decay Does Not Determine Regret}
\label{sec:separation}

The following comparison shows that the order of the worst-case
input-truncation loss alone need not determine the order of minimax regret.
The distinction is between independently unknown coefficients and coefficients
sharing one scalar parameter.

Consider the rank-one class
\begin{equation}
  \Theta_r^{\mathrm{ray}}=\{\theta=ar:\abs a\le1\}\subset\Theta_r.
  \label{eq:ray-class}
\end{equation}
The envelope and this subclass contain the same extremal sequence $r$.
Write $\cR_T^{\mathrm{ray}}(r)$ for the minimax regret over this subclass and
$\cI_T(r)=\sum_{j\le T}n_{T,j}r_j^2$.

\begin{theorem}[Equal truncation loss and different regret]
\label{thm:profile-impossibility}
Let $r$ be summable, nonincreasing, with $r_1>0$.  Then $\Theta_r$ and
$\Theta_r^{\mathrm{ray}}$ have worst-case truncation bias and worst-case
recent-input distortion equal up to constants depending only on $B$, by
\Cref{thm:memory,cor:envelope-memory}.  Nevertheless, for a constant $C_B$
depending only on $B$ and an absolute constant $C$,
\begin{equation}
  \bigl[\tfrac12\log(Tr_1^{2})-C_B\bigr]_+
  \ \le\ \cR_T^{\mathrm{ray}}(r)\ \le\
  \tfrac12\log\bigl(1+\cI_T(r)\bigr)+C,
  \label{eq:ray-redundancy}
\end{equation}
so $\cR_T^{\mathrm{ray}}(r)=\Theta_{B,r_1}(\log T)$ for every fixed profile with
$r_1>0$.  In particular, for $r_j=Aj^{-s}$ with $s>1$ the two classes satisfy
$\sup_{\theta\in\Theta_r}\cD_T(\theta,h)\asymp_B
 \sup_{\theta\in\Theta_r^{\mathrm{ray}}}\cD_T(\theta,h)\asymp_{B,s}TA^{2}h^{1-2s}$
for all $1\le h\le T/4$, while
\begin{equation}
  \cR_T(r)=\Theta_{A,s,B}\bigl(T^{1/(2s)}\bigr),
  \qquad
  \cR_T^{\mathrm{ray}}(r)=\Theta_{A,s,B}(\log T).
  \label{eq:poly-separation}
\end{equation}
\end{theorem}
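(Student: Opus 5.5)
The theorem has three parts, which I will prove in turn: the equality of truncation losses, the two-sided bound \eqref{eq:ray-redundancy} for the rank-one class, and the polynomial specialisation.

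\emph{Truncation losses.} \Cref{thm:memory} gives $\cB_T,\cD_T\asymp_B V_T(\theta,h)=\sum_{j>h}n_{T,j}\theta_j^{2}$ pointwise in $\theta$. Each class therefore has worst-case loss $\asymp_B\sup_\theta V_T(\theta,h)$. On both classes the supremum is attained at the common extremal point $\theta=r$ (take $a=1$ on the ray), and $V_T(\theta,h)\le V_T(r,h)$ for every $\theta\in\Theta_r$. So the two worst cases agree up to $B$-constants, with no extra work; this is the content of \Cref{cor:envelope-memory}. For $r_j=Aj^{-s}$ and $h\le T/4$, I bound $n_{T,j}\ge T/2$ for $h<j\le 2h$ and $n_{T,j}\le T$ for all $j$. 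This gives $V_T(r,h)\asymp_s TA^{2}h^{1-2s}$ by the usual integral comparison.

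\emph{Upper bound for the ray.} I use the mixture $Q=\int P_{ar,T}\,\mathrm{Unif}[-1,1](da)$ and apply \Cref{lem:variational-mixture}, localising with $\mathrm{Unif}$ on an interval of length $\ell=\min\{2,(1+\cI_T)^{-1/2}\}$ that contains the true $a$ and lies inside $[-1,1]$. The upper bound of \Cref{lem:pairwise-kl} gives $\KL(P_{ar}\|P_{a'r})\le\tfrac18(a-a')^{2}\cI_T(r)$. The averaged approximation term is therefore at most $\tfrac18\ell^{2}\cI_T\le\tfrac18$. The localisation cost is $\log(2/\ell)\le\tfrac12\log(1+\cI_T)+\log2$. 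Together these give the right-hand side of \eqref{eq:ray-redundancy} with an absolute $C$. Since $\cI_T\le T\,r_1\sum_j r_j\le TB^{2}$, this is $O(\log T)$ for fixed $B$.

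\emph{Lower bound for the ray.} I apply \Cref{thm:exogenous-lower} with $J=1$, the scalar parameter $\vartheta=a$, and $K_1=[-1/2,1/2]$, i.e.\ "radius" $1$. The design is $z_t=\eta_{t+1}(r)=\sum_{j\le t}r_jU_{t+1-j}$ for all $N=T$ rounds, with $b=0$ and $B_0=B/2$. The trace bound holds deterministically, since $\E z_t^{2}=\sum_{j\le t}r_j^{2}\le r_1B$, but the design is no longer unit-normalised. I therefore rescale the parameter as $\vartheta'=r_1a$, so that $z'_t=z_t/r_1$ has $\E z_t'^{2}\le B/r_1$. That makes $\tau$ depend on $r_1$, which is not what we want. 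Instead I drop rounds: I keep only the contribution of lag $1$ through a conditioning argument, or more simply treat the hypothesis on $\lambda_{\min}$ directly.

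The main obstacle is the event $\sum_t z_t^{2}\ge\mu T r_1^{2}$. Each $z_t^{2}$ has mean at least $r_1^{2}$ and satisfies $z_t^{2}\le B^{2}$. The variables $z_t$ are dependent, so I cannot apply Hoeffding row by row. I plan to control the variance of $\sum_t z_t^{2}$ by expanding it, or to use the bounded-differences inequality in the inputs $U_s$. Flipping $U_s$ changes $\sum_t z_t^{2}$ by at most $\sum_t 4B r_{t+1-s}\le4B^{2}$. McDiarmid's inequality then gives deviations of order $B^{2}\sqrt{T\log T}$, which are below $Tr_1^{2}/2$ once $Tr_1^{2}$ is large relative to $B$ and $\log T$. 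In the remaining regime the claimed bound is trivial, because of the $[\cdot]_+$ and the freedom to enlarge $C_B$. Here I keep $\tau=B/r_1$ in a form absorbed by working in the unscaled parameter with radius $r_1$: the bound $\tfrac12\log(Tr_1^{2})-C$ then arises from $\mu$ normalised by $r_1^{2}$. The delicate point is ensuring that the constants depend only on $B$, and I expect this to need a careful choice of normalisation. Redundancy--capacity and data processing then give $\cR_T^{\mathrm{ray}}\ge\bigl[\tfrac12\log(Tr_1^{2})-C_B\bigr]_+$.

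Finally, \eqref{eq:poly-separation} follows by combining \Cref{cor:poly-rate} with \eqref{eq:ray-redundancy}, using $\cI_T\le TB^{2}$.
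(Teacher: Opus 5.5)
Your truncation comparison, the ray upper bound (the single localisation length $(1+\cI_T(r))^{-1/2}$ is a tidy replacement for the paper's two cases and gives $C=\tfrac18+\log2$), and the final combination with \Cref{cor:poly-rate} are fine. The gap is the lower bound on $\cR_T^{\mathrm{ray}}(r)$, which is the only genuinely new ingredient. Neither of your two devices gives a constant depending on $B$ alone.

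First, \Cref{thm:exogenous-lower} controls the score through $\E\tr(Z^{\top}Z)$ and the curvature through a separate lower bound on $\lambda_{\min}(Z^{\top}Z)$. For the ray these two quantities are mismatched. With $S_T=\sum_{t\le T}z_t^{2}$ one has $\E S_T=\cI_T(r)$, which can be as large as $Tr_1B$ (e.g.\ $r_j=r_1$ for $j\le B/r_1$). The lower bound you aim for is only of order $Tr_1^{2}$. As long as that is your lower bound, any normalisation $\vartheta=\omega a$, $z_t\mapsto z_t/\omega$, forces $\tau\ge\cI_T(r)/(T\omega^{2})$ and $\mu\le r_1^{2}/(4\omega^{2})$. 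The theorem's constant contains $\tfrac12\log(\tau\mu^{-2})$, so it returns at best $\tfrac12\log\bigl(T^{2}r_1^{4}/\cI_T(r)\bigr)-C_B$. That falls short of the target by $\tfrac12\log\bigl(\cI_T(r)/(Tr_1^{2})\bigr)$, which can be as large as $\tfrac12\log(B/r_1)$. Your unrescaled version, with the theorem's $\mu\asymp r_1^{2}$, gives only $\tfrac12\log(Tr_1^{4})-C_B$.

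The paper does not route the ray through that theorem. Conditionally on $U_{1:T}$ and $a$, the score variance is at most $\tfrac14S_T$ and the strong-convexity modulus on $[-\tfrac12,\tfrac12]$ is $\kappa_{B/2}S_T$. These involve the \emph{same} random quantity. So on the good event $\E[(\widehat a-a)^{2}\mid U_{1:T},a]\le1/(\kappa_{B/2}^{2}S_T)$, and only a lower bound $S_T\gtrsim Tr_1^{2}$ is needed.

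Second, the McDiarmid step does not give the needed probability, and the leftover regime is not trivial. With differences $4B^{2}$ in each of the $T$ inputs, McDiarmid bounds $\Pp(S_T<\tfrac12\cI_T(r))$ by $\exp\bigl(-\cI_T(r)^{2}/(32B^{4}T)\bigr)$. When $\cI_T(r)\asymp Tr_1^{2}$ this is $\exp(-cTr_1^{4}/B^{4})$. It is useful only when $Tr_1^{4}\gtrsim B^{4}\log(Tr_1^{2})$, not merely when $Tr_1^{2}$ is large compared with $B$ and $\log T$.

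Below that threshold the target does not vanish. For large $T$, take $r_1=T^{-1/4}$ followed by about $B^{2}\sqrt T/2$ coefficients equal to $T^{-1/2}/B$; these carry $\ell_1$ mass about $B/2$ but $\ell_2$ mass only about $r_1^{2}/2$. Then $\cI_T(r)=O(\sqrt T)$, and the McDiarmid exponent is $O(B^{-4})$ even with the sharper differences $4(\sum_jr_j)^{2}$. Yet $\tfrac12\log(Tr_1^{2})=\tfrac14\log T$, which no $T$-free enlargement of $C_B$ absorbs. Completed, your argument gives the bound with a constant depending on $r_1$ as well. That is enough for $\Theta_{B,r_1}(\log T)$ and \eqref{eq:poly-separation}, but not for \eqref{eq:ray-redundancy} as stated.

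What you need is $\Pp(S_T<Tr_1^{2}/4)\lesssim_B1/(Tr_1^{2})$ uniformly in $r_1$, so that the bad event can be paid for with $\abs{\widehat a-a}\le1$. The missing idea is your ``keep only lag $1$'' instinct, made precise through the fresh input:
\begin{itemize}
\item Write $z_t=w_t+r_1U_t$ with $w_t$ a function of $U_{1:t-1}$.
\item Since $w\pm r_1$ cannot both lie in $(-r_1,r_1)$, $\Pp(\abs{z_t}\ge r_1\mid U_{1:t-1})\ge\tfrac12$.
\item Azuma--Hoeffding then gives $S_T\ge Tr_1^{2}/4$ outside an event of probability $e^{-T/32}\le(32B^{2}/e)/(Tr_1^{2})$, using $r_1\le B$.
\item With the conditional bound above, $\E(\widehat a-a)^{2}\le c_B/(Tr_1^{2})$, and the scalar case of \Cref{lem:entropy-estimation} finishes.
\end{itemize}

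Your modular route can also be saved, provided you normalise by the actual energy rather than by $r_1$. Take $\omega^{2}=\cI_T(r)/T$, so $\tau=1$. Write $S_T=U_{1:T}^{\top}R^{\top}R\,U_{1:T}$, where $R$ is the lower-triangular Toeplitz matrix of the profile, with operator norm at most $\sum_jr_j\le B$. Hanson--Wright then gives $S_T\ge\tfrac12\cI_T(r)$ with probability at least $1-2e^{-c\,\cI_T(r)/B^{2}}$. The bad-design condition becomes $2\cI_T(r)e^{-c\,\cI_T(r)/B^{2}}\le1$, which holds once $\cI_T(r)\ge C_B$, and the smaller case is trivial. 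What fails in your version is the worst-case bounded-difference constant.
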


The proof and its interpretation are in \Cref{app:ray}.  This separation
concerns input-window truncation profiles that agree up to constants; it does
not exclude characterisations based on finer predictive-state or
compression-to-prediction quantities.

\section{Conclusion}
\label{sec:conclusion}

We studied minimax cumulative log-loss regret for an exogenously driven logistic source with infinite input memory.
We prove the universal upper bound $\cR_T(r)\lesssim\Gamma_T(r)$ for every
summable envelope, and a matching converse for the canonical exponential and
polynomial envelopes under an explicit finite-sample dimension condition.
In these regimes, the regret rates are $\Theta(\alpha^{-1}\log^2T)$ and
$\Theta(T^{1/(2s)})$, respectively.

\bibliography{refs}

\appendix
\crefalias{section}{appendix}
\crefalias{subsection}{appendix}
\section{Detailed Related Work}
\label{app:related}

This appendix expands \Cref{sec:intro-related}.  The closest bodies of work
each share one or two ingredients with the present problem --- sequential
logarithmic loss, logistic prediction, unbounded memory --- but not the joint
statistical question.  The cleanest way to organise the comparison is through
the objective.  We study the minimax \emph{expected cumulative} excess log loss
\begin{equation}
  \cR_T(r)
  =\inf_Q\sup_{\theta\in\Theta_r}\Reg_T(Q;P_{\theta,T})
  =\inf_Q\sup_{\theta\in\Theta_r}\KL(P_{\theta,T}\|Q),
  \label{eq:related-metric}
\end{equation}
that is, the average minimax redundancy of the entire sequential law.  It is
neither worst-label comparator regret nor the Kullback--Leibler risk of a single
prediction issued after a training trajectory, and several apparent
discrepancies with the literature dissolve once that is kept in view.

\subsection{Finite-dimensional logistic prediction}

Online logistic regression provides the fixed-dimensional benchmark.  Online
Newton methods give logarithmic comparator regret on bounded domains
\citep{hazanagarwalkale2007}; improper prediction is necessary for the best
norm dependence, and norm-sensitive lower bounds are given by
\citet{foster2018}; minimax lower bounds for the sequential problem are
developed by \citet{shamir2020}; and precise asymptotics, including the
constant, are known for categorical or otherwise fixed feature designs
\citep{jacquetshamirszpankowski2021}, with a phase transition in the
large-weight regime identified by \citet{drmotajacquetwuszpankowski2026}.
Together these results explain the first row of \Cref{tab:positioning}.

They do not, however, yield the remaining rows by substituting an ``effective
dimension'' for $d$, for three reasons.  \emph{(i)}~The number of coordinates
that matter grows with $T$: for $r_j=Aj^{-s}$ the active set has
$\Theta(T^{1/(2s)})$ elements, so any bound whose constants degrade with
dimension is unusable.  \emph{(ii)}~The active coordinates are not exchangeable.
Coordinate $j$ has admissible amplitude $r_j$ and sample size $n_{T,j}$, and the
correct charge $\log(1+n_{T,j}r_j^{2})$ is genuinely $j$-dependent; replacing it
by the worst or the average over the active block loses a logarithmic factor,
as \eqref{eq:hard-trunc-extra-log} shows.  \emph{(iii)}~The design is generated
by the source rather than supplied to the analyst, and its rows overlap in
$J-1$ of $J$ coordinates.  This last point is what makes the converse hard and
is addressed in \Cref{sec:converse-toeplitz}; see \Cref{rem:vs-shamir-long} for the
precise differences between \Cref{thm:exogenous-lower} and the
finite-dimensional lower bounds of \citet{shamir2020} and \citet{foster2018},
and \Cref{rem:vs-hak-long} for the difference between \Cref{lem:design-ons} and the
ONS analysis of \citet{hazanagarwalkale2007}.

\subsection{Markov, context-tree, and unbounded-memory classes}

This literature is closest in prediction metric and furthest in parameter
structure.  An unrestricted binary order-$k$ Markov source assigns a free
Bernoulli parameter to each of its $2^k$ histories, and its average minimax
redundancy is $\Theta(2^k\log T)$ under interiority conditions
\citep{atteson1999}; context-tree weighting attains the corresponding
complexity adaptively over tree sources
\citep{willemsshtarkovtjalkens1995}.  Our finite-lag logistic source conditions
on the same length-$k$ binary history but ties all $2^k$ context probabilities
through only $k$ coefficients, so its redundancy is $\Theta(k\log T)$
(\Cref{cor:finite-rate}).  The exponential gap is not caused by memory length;
it is caused by how predictive laws are shared across histories, which is
precisely the distinction that a Markov-order description cannot express.

For unbounded memory, continuity-rate models control how much the next-symbol
distribution may change when two histories share a long suffix, and yield
redundancy bounds in terms of that rate \citep{wuhosseinis2018}.  A continuity
rate measures sensitivity to the remote past but does not specify how many
unknown directions generate that sensitivity: a context-parameterised class may
carry many separately unknown transition laws, whereas our class has one
globally shared coefficient per lag.  This is why neither Markov order nor a
continuity rate alone determines the sum in \eqref{eq:intro-gamma}, and
\Cref{thm:profile-impossibility} makes the failure quantitative.

\subsection{Prediction from compressed memory}

Causal states and predictive rate--distortion characterise exact or lossy
representations of the past through their conditional future laws
\citep{crutchfieldyoung1989,shalizicrutchfield2001,marzencrutchfield2016}.  In
a more recent and more directly comparable line,
\citet{hanjanawu2023} determine the optimal prediction risk for Markov chains
with and without a spectral gap, and \citet{hanjiangwu2024} give
compression-to-prediction results for models with infinite memory, including
hidden Markov and renewal processes, combining a normalised redundancy term
with a conditional-mutual-information memory term.  Those results study the
Kullback--Leibler risk of a \emph{final} next-symbol prediction after a
length-$T$ trajectory, so redundancy enters divided by the sample size; the
question they answer is which information about the past should be retained.

Our cumulative objective \eqref{eq:related-metric} charges the full redundancy
and additionally asks how difficult the retained predictive law is to learn.
\Cref{thm:profile-impossibility} shows the two questions are not
interchangeable: the coordinate envelope and its rank-one subclass have the
same worst-case input-window distortion curve up to $B$-dependent constants ---
so any characterisation that factors through that curve assigns them the same
complexity --- yet their cumulative minimax regrets are $\Theta(T^{1/(2s)})$ and
$\Theta(\log T)$.  We stress that our separation is proved for input-window
truncation (\Cref{rem:valid-measure-long}); it does not by itself rule out a
characterisation in terms of the finer predictive-rate--distortion function of
the full marked history, but it does rule out anything determined by the
truncation curve alone, which is the object most often used as a proxy for
memory.

\subsection{Infinite-order time series and compressed filters}

Infinite-order logistic and multinomial models have been studied for existence,
stationarity and ergodicity \citep{fokianostruquet2019}, and nonparametric
$\operatorname{AR}(\infty)$ models for estimation risk with nonasymptotic bounds
\citep{goldenshlugerzeevi2001}.  Those objectives --- existence of a stationary
solution, or $L_2$ estimation error --- differ from cumulative minimax
redundancy, and the exogenous finite-horizon construction used here is chosen
precisely so that the lag-wise information geometry is explicit and a matching
nonasymptotic converse is available.

Exponential-sum approximation \citep{beylkinmonzon2010} and second-order
sequence preconditioning \citep{marsdenhazan2026} provide effective ways to
implement filters with slowly decaying kernels, and we use the same
computational principle in \Cref{thm:filter-approximation}.  We are careful not
to identify filter approximability with statistical complexity: a compact
state-space realisation controls the cost of representing the past, whereas
$\Gamma_T(r)$ controls the cost of learning its unknown predictive effect, and
\Cref{thm:profile-impossibility} shows these can differ polynomially.

\subsection{General minimax redundancy and metric complexity}

The tools we borrow belong to a long tradition.  The redundancy--capacity
theorem, identifying minimax redundancy with the capacity of the induced
channel, goes back to \citet{davisson1973} and is given in general form by
\citet{haussler1997}; \citet{merhavfeder1998} survey the universal prediction
consequences.  For smooth finite-dimensional families the asymptotics are
$\tfrac d2\log T$ plus a Fisher-information term
\citep{clarkebarron1990,rissanen1996,barronrissanenyu1998,xiebarron2000}, and
the minimax-optimal constant is attained by Jeffreys-type mixtures.  For
nonparametric families the governing quantity is a metric or intrinsic-volume
complexity \citep{yangbarron1999,zhang2006,mourtada2023}, and the
variational/PAC-Bayes device we use in \Cref{thm:upper} is the standard
mechanism behind such bounds \citep{catoni2004,zhang2006}; see
\citet{grunwald2007} for the MDL perspective and \citet{shtarkov1987} for the
individual-sequence complexity.  What the present class contributes to that
picture is a family in which the relevant complexity is computable in closed
form and \emph{factorises across coordinates at mutually incommensurate
resolutions}, together with a converse showing the factorised form is not an
artefact of the upper-bound construction.  Finally, the random-matrix side of
our converse is deliberately elementary; \Cref{rem:toeplitz-vs-rip-long} explains
why the sharper random-Toeplitz and partial-circulant results of
\citet{meckes2007,rauhutrombergtropp2012,krahmermendelsonrauhut2014} are not
directly usable here, and what would be gained by adapting them.

\section{Comparison with the Closest Existing Arguments}
\label{app:comparison}

The main text states, at each result, how it relates to the nearest prior work.
This appendix gives those comparisons in full.  Together with
\Cref{app:tools}, which lists what we borrow verbatim, it is intended to make
the boundary between borrowed and new material unambiguous.

\begin{remark}[Truncation versus localisation]
\label{rem:hard-truncation-long}
A truncation analysis charges every retained coordinate the same $\log T$,
whether or not its amplitude $r_j$ makes that resolution meaningful, and every
discarded coordinate its full energy, for a total of
$h\log T+\sum_{j>h}n_{T,j}r_j^{2}$.
\Cref{thm:upper} charges coordinate $j$ exactly $\log(1+n_{T,j}r_j^{2})$,
interpolating between the two behaviours at the correct crossover
$s_{T,j}=1$; for $r_j=Aj^{-s}$ the difference is the diverging factor
$(\log T)^{1-1/(2s)}$ (\Cref{sec:rates}).  The same calculation shows the result
cannot be obtained by inserting an effective dimension into a $d\log T$ bound:
$d_T(r)\log T$ is again too large by a logarithmic factor, precisely because it
ignores that active coordinates have different resolutions.

For nonparametric families, minimax redundancy is governed by metric or
intrinsic-volume complexity \citep{yangbarron1999,zhang2006,mourtada2023}, and
the variational trade-off of \Cref{lem:variational-mixture} is exactly the mechanism
behind those bounds; the device itself dates to the Gibbs variational principle
and is standard in PAC-Bayes and MDL analysis
\citep{catoni2004,zhang2006,kakadeng2005,grunwald2007}.  What the present class
adds is that the local geometry of $\Theta_r$ under \eqref{eq:pairwise-kl} is a
\emph{product} of intervals with mutually incommensurate resolutions
$n_{T,j}^{-1/2}$, so the covering number factorises and the complexity is the
sum \eqref{eq:intro-gamma} rather than a single $\eps$-entropy evaluated at one
scale.  The nontrivial direction is the converse of \Cref{sec:converse}, which
shows the factorised form is forced by the source rather than an artefact of
the construction.  For smooth finite-dimensional families the corresponding
sharp statements, including the optimal constant, are
\citet{clarkebarron1990,rissanen1996,barronrissanenyu1998,xiebarron2000}.
\end{remark}

\begin{remark}[Converse: versus the fixed-$d$ logistic lower bounds]
\label{rem:vs-shamir-long}
\citet{shamir2020} proves minimax lower bounds for online logistic regression
with a fixed number $d$ of weights, and \citet{foster2018} gives
norm-sensitive lower bounds together with the observation that proper
prediction is inadequate; both are stated for designs which the construction
may choose, or which are i.i.d.\ across rounds, so both may invoke
concentration row by row.  \Cref{thm:exogenous-lower} differs in three respects,
each of which our application needs.

\emph{(i)~Dependent rows.}  The design is arbitrary subject only to the trace
bound and the smallest-eigenvalue event in
\eqref{eq:hyp-logit}--\eqref{eq:hyp-rest}.  This is essential because ours is
the Toeplitz matrix \eqref{eq:toeplitz-design}, whose consecutive rows share
$J-1$ of $J$ entries; no row-wise argument applies to it.

\emph{(ii)~Additive dimension dependence.}  The bound is nonasymptotic in
$(N,J)$ jointly, with the dimension appearing only through the additive term
$C_{B_0,\mu,\tau}J$.  A bound with a multiplicative dimension-dependent factor
would be useless in \Cref{cor:poly-rate}, where $J\asymp T^{1/(2s)}$ grows with
the horizon.

\emph{(iii)~Per-coordinate conclusion.}  The radii $r_1,\ldots,r_J$ may be
mutually incommensurate and the conclusion is
$\tfrac12\sum_j\log(Nr_j^{2})$.  An isotropic conclusion of the form
$\tfrac J2\log(N\min_jr_j^{2})$ --- which is what a direct application of a
fixed-dimensional result would give after rescaling the rectangle to a cube ---
loses exactly the factor that makes \Cref{cor:poly-rate} sharp.

The assembly itself is otherwise a careful combination of standard pieces:
redundancy--capacity \citep{davisson1973,haussler1997}, a constrained maximum
likelihood estimator, and the entropy-versus-mean-square-error inequality
\citep{coverthomas2006,yangbarron1999}.  We claim no novelty for the assembly,
only for the three structural features above and for the design estimate of
\Cref{lem:toeplitz-gram}.
\end{remark}

\begin{remark}[Converse: why not the random-Toeplitz or RIP literature]
\label{rem:toeplitz-vs-rip-long}
The spectral behaviour of random Toeplitz and partial random circulant matrices
is well studied, and it is worth being precise about why those results do not
substitute for \Cref{lem:toeplitz-gram}.

\citet{meckes2007} determines the order of the spectral norm of an
$n\times n$ random Toeplitz matrix.  That controls the \emph{largest}
eigenvalue, asymptotically in $n$.  We need a \emph{smallest}-eigenvalue
statement at a prescribed finite pair $(N,J)$, with an explicit failure
probability that can be traded against the bad-design condition in
\eqref{eq:hyp-rest}.

\citet{rauhutrombergtropp2012} and \citet{krahmermendelsonrauhut2014} prove
restricted isometry properties for partial random circulant matrices, using
chaos-process and generic-chaining machinery.  RIP is far stronger than what we
need, but in the wrong quantifier order: it is a uniform statement over all
sparse supports, it is proved for a number of measurements polylogarithmic in
the ambient dimension rather than for the aspect ratio $N\asymp J^{2}\log N$
arising here, and its absolute constants are not tracked.  Since we require only
a two-sided bound on one $J\times J$ Gram matrix with $J$ growing slowly
relative to $N$, the forest argument delivers the statement with explicit
constants in half a page, and it makes the reason transparent: for Rademacher
inputs the off-diagonal Gram sums are themselves Rademacher sums after a change
of variables.

The price is the requirement $N\gtrsim J^{2}\log N$ in
\eqref{eq:lower-dimension-condition}, which is suboptimal --- one expects
$N\gtrsim J\,\mathrm{polylog}(N)$ --- but harmless for the envelopes treated
here, whose active dimension is $o(\sqrt{T/\log T})$ (\Cref{sec:rates}).
Adapting the chaos-process bounds of \citet{krahmermendelsonrauhut2014} to give
a tracked-constant smallest-eigenvalue estimate at the aspect ratio
$N\asymp J\,\mathrm{polylog}(N)$ would extend the converse from the canonical
envelopes to every summable envelope; this is the most natural open problem
left by the paper.  Finally, a structural point that the forest argument makes
visible and the RIP route does not: acyclicity is essential
(\Cref{rem:forest-essential}), so a \emph{circulant} lag design, which wraps
the index set around and thereby creates cycles, would destroy the independence
we exploit.
\end{remark}

\begin{remark}[Algorithm: difference from the standard ONS guarantee]
\label{rem:vs-hak-long}
\Cref{lem:design-ons} differs from the Online Newton Step analysis of
\citet[Thm.~2]{hazanagarwalkale2007} in two respects, and we include its proof
only because of them.

First, the exp-concavity modulus is $\kappa_B=\sig(B)(1-\sig(B))$, read off the
logit bound through \Cref{lem:logistic-bregman}, rather than derived from a
generic gradient-norm-times-diameter bound.  This is what makes the constant in
\Cref{thm:ons-spectrum} depend on $B$ alone, and in particular not on the
dimension $h$ or on the envelope.

Second, and more importantly, the guarantee is stated with the log-determinant
of the \emph{feature} Gram matrix $\sum_tx_tx_t^{\top}$ rather than the gradient
Gram matrix $\sum_tg_tg_t^{\top}$, using $g_tg_t^{\top}\preceq x_tx_t^{\top}$ and
monotonicity of $\log\det(I+\cdot)$ on the positive semidefinite order.  The
gradient version is the standard one and is entirely adequate for a $d\log T$
conclusion, but it is useless for the anisotropic conclusion we want: the
gradient magnitudes $\abs{\widehat p_t-Y_{t+1}}$ are data dependent, so the
gradient Gram diagonal is random and its expectation involves the unknown
$\theta$.  The feature Gram matrix, by contrast, has the \emph{deterministic}
diagonal $n_{T,j}r_j^{2}$ --- this is where $U_s^{2}=1$ is used --- which is
exactly the quantity appearing in $\Gamma_T(r)$.  Hadamard's inequality then
produces $\sum_{j\le h}\log(1+n_{T,j}r_j^{2})$ on the nose.

Everything else in the proof is standard and used verbatim: the
Sherman--Morrison identity, the matrix-determinant lemma, the telescoping
Newton potential, and the elementary inequality $\xi/(1+\xi)\le\log(1+\xi)$.
The projection step is the usual $H_t$-metric projection, whose nonexpansiveness
is all that is needed.
\end{remark}

\begin{remark}[Horizon dependence of \Cref{alg:scaled-ons}]
\label{rem:horizon-long}
The cutoff $h_T(r)$ uses the horizon.  The mixture of \Cref{thm:upper},
implemented with the infinite product prior, is anytime, but the displayed
finite-dimensional implementation is stated for known $T$.  Naive geometric
restarting can repay the logarithmic resolution cost of an active coordinate
once per epoch and therefore need not preserve $\Gamma_T(r)$ within a constant
factor: for the exponential envelope, for instance, restarting at
$T,2T,4T,\ldots$ pays $\Theta(\alpha^{-1}\log^{2}T)$ in \emph{each} epoch, and
the epoch count multiplies rather than absorbs.  A horizon-free implementation
can aggregate fixed-cutoff learners at the price of the corresponding prior
penalty \eqref{eq:profile-adaptation}, or use a time-varying regulariser;
obtaining the cleanest anytime complexity bound is a separate question from the
statistical results here.
\end{remark}

\begin{remark}[What a valid memory-complexity measure must retain]
\label{rem:valid-measure-long}
\Cref{thm:profile-impossibility} exhibits two classes whose worst-case
input-truncation profiles agree up to $B$-dependent constants, while their
regret rates differ polynomially in $T$.  The order of that truncation
profile therefore does not suffice to infer the regret rate.  This is not
an impossibility statement for every functional of the exact distortion
curve: the theorem establishes comparability, not equality, of the curves.
\Cref{app:separation-scope} discusses the role of parameter structure and
coding complexity in this comparison.

The separation also cautions against reading a small predictive
rate--distortion function as evidence that a source is easy to predict online:
the rank-one class and the coordinate envelope are equally compressible and very unequally
learnable.  Two scope remarks are in order.  Our separation is proved for
\emph{input-window} truncation, the memory notion paired with the convolutional
predictor \eqref{eq:model}: the $\sigma$-field $\cG_{t,h}$ contains the last $h$
exogenous inputs, not the last $h$ marked observations $X_s=(U_s,Y_s)$, and
recent marks do carry indirect information about older inputs.  Consequently
\Cref{thm:memory} is not a statement about all compressed states of the full
marked history, which is the object of predictive rate--distortion theory
\citep{shalizicrutchfield2001,marzencrutchfield2016} and of the
compression-to-prediction bounds of \citet{hanjiangwu2024}.  The separation does
not by itself exclude a characterisation through that finer function; it shows
that the order of the input-truncation loss alone need not determine regret.
\end{remark}

\section{Standard Tools, with Attributions}
\label{app:tools}

This appendix collects the results we \emph{borrow}.  Each is stated in the
exact form in which it is used.  We give no proof for those used verbatim:
\Cref{lem:redundancy-capacity}, Pinsker's inequality, and the concentration and
eigenvalue-localisation inequalities of \Cref{sec:tools-conc}.  We do give short
proofs for
\Cref{lem:logistic-bregman,lem:variational-mixture,lem:entropy-estimation}, not
because they are new --- they are not --- but because the constants we use
downstream depend on the exact form, and recovering them from a citation would
cost the reader more than reading the few lines given here.  Everything
genuinely new is proved in
\Cref{app:proofs-setup,app:upper,app:toeplitz,app:lower,app:rates,app:ons,app:ray};
the forest-product lemma of \Cref{app:toeplitz} is the one elementary
observation we have not found in the literature.

\subsection{Redundancy--capacity}

\begin{lemma}[Redundancy--capacity; {\citealp{davisson1973,haussler1997}}]
\label{lem:redundancy-capacity}
Let $(P_v)_{v\in\cV}$ be a dominated family of probability measures and $\Pi$
a prior on $\cV$.  If $\vartheta\sim\Pi$ and $D\mid\{\vartheta=v\}\sim P_v$,
then
\begin{equation}
  \inf_Q\ \sup_{v\in\cV}\ \KL(P_v\|Q)\ \ge\ I(\vartheta;D).
  \label{eq:redundancy-capacity}
\end{equation}
\end{lemma}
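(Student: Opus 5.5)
The plan is to bound the supremum below by an average and then use the ``golden formula'' for mutual information. Fix an arbitrary $Q$ on the observation space; if $\sup_v\KL(P_v\|Q)=\infty$ there is nothing to prove, so assume it is finite. Since a supremum dominates any average, $\sup_{v\in\cV}\KL(P_v\|Q)\ge\int\KL(P_v\|Q)\,\Pi(dv)$. This needs $v\mapsto\KL(P_v\|Q)$ to be measurable. I will obtain this from the domination hypothesis: write $P_v=p_v\,d\lambda$ with a jointly measurable density. Then $\KL(P_v\|Q)$ is the integral of a jointly measurable nonnegative function (after splitting off the part of $P_v$ singular to $Q$, where the divergence is $+\infty$), and Tonelli's theorem applies.

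Next let $\bar P=\int P_v\,\Pi(dv)$ be the marginal law of $D$. The central step is the decomposition
\begin{equation*}
  \int\KL(P_v\|Q)\,\Pi(dv)\;=\;\int\KL(P_v\|\bar P)\,\Pi(dv)+\KL(\bar P\|Q)\;=\;I(\vartheta;D)+\KL(\bar P\|Q).
\end{equation*}
I will derive it by writing $\log\frac{dP_v}{dQ}=\log\frac{dP_v}{d\bar P}+\log\frac{d\bar P}{dQ}$, integrating against $P_v$, and then against $\Pi$. The second term then integrates against $\bar P$ and gives $\KL(\bar P\|Q)$. The identification $\int\KL(P_v\|\bar P)\,d\Pi=I(\vartheta;D)$ is the definition of mutual information as $\KL$ of the joint law from the product of marginals, disintegrated along $\vartheta$. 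Because $\KL(\bar P\|Q)\ge0$, we get $\sup_v\KL(P_v\|Q)\ge I(\vartheta;D)$. Taking the infimum over $Q$ gives \eqref{eq:redundancy-capacity}.

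The only delicate point is the bookkeeping of infinities and integrability when splitting the logarithm. Each of $\log\frac{dP_v}{d\bar P}$ and $\log\frac{d\bar P}{dQ}$ may fail to be integrable on its own, even though their sum has a well-defined nonnegative integral. I would handle this in one of two ways. The first is to use the truncated identity with $Q_\eps=(1-\eps)Q+\eps\bar P$, for which $d\bar P/dQ_\eps\le1/\eps$ is bounded, so the split is legitimate. One then lets $\eps\downarrow0$ using convexity of $\KL$ in its second argument, $\KL(P_v\|Q_\eps)\le(1-\eps)\KL(P_v\|Q)+\eps\KL(P_v\|\bar P)$. The second is to reduce to finite partitions via the Gelfand--Yaglom--Perez characterisation of $\KL$ and mutual information as suprema over finite measurable partitions. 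On a finite partition everything is a finite sum and the identity is elementary algebra, and the lower bound then passes to the supremum. I would present the $\eps$-mixture route, since it keeps the argument to a few lines.
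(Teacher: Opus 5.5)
Your argument is the same as the paper's. The paper does not prove the lemma; it records exactly the compensation identity $\int\KL(P_v\|Q)\,\Pi(dv)=I(\vartheta;D)+\KL(\bar P\|Q)$ combined with $\sup\ge$ average. Your identification of $\int\KL(P_v\|\bar P)\,\Pi(dv)$ with $I(\vartheta;D)$ is fine, and so is the use of $Q_\eps=(1-\eps)Q+\eps\bar P$ to justify splitting the logarithm.

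One step does not close as written: the passage $\eps\downarrow0$ via convexity. Averaging $\KL(P_v\|Q_\eps)\le(1-\eps)\KL(P_v\|Q)+\eps\KL(P_v\|\bar P)$ over $\Pi$ gives
\[
  I(\vartheta;D)+\KL(\bar P\|Q_\eps)\ \le\ (1-\eps)\int\KL(P_v\|Q)\,\Pi(dv)+\eps\,I(\vartheta;D).
\]
You can conclude only after cancelling $\eps I(\vartheta;D)$, i.e.\ only when $I(\vartheta;D)<\infty$. If $I(\vartheta;D)=\infty$ the display reads $\infty\le\infty$. That is exactly the case your reduction must rule out. You assumed only $\sup_v\KL(P_v\|Q)<\infty$, which gives no a priori control of $\KL(P_v\|\bar P)$. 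So the argument fails precisely when the right-hand side of the lemma is infinite.

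The repair is one line. Since $dQ_\eps\ge(1-\eps)\,dQ$ (densities with respect to a common dominating measure), you get $\KL(P_v\|Q_\eps)\le\KL(P_v\|Q)+\log\frac1{1-\eps}$ for every $v$. Hence
\[
  I(\vartheta;D)\ \le\ I(\vartheta;D)+\KL(\bar P\|Q_\eps)\ \le\ \int\KL(P_v\|Q)\,\Pi(dv)+\log\frac1{1-\eps}
\]
for every $\eps\in(0,1)$, with no finiteness assumption on $I$; then let $\eps\downarrow0$.

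Alternatively, the mixture is unnecessary. If $K=\int\KL(P_v\|Q)\,\Pi(dv)<\infty$, then $P_v\ll Q$ for $\Pi$-a.e.\ $v$. Pointwise Jensen for $x\log x$ applied to $d\bar P/dQ=\int(dP_v/dQ)\,\Pi(dv)$ gives $\KL(\bar P\|Q)\le K<\infty$. So $\log(d\bar P/dQ)$ is $\bar P$-integrable (its negative part contributes at most $1/e$), the direct split is already justified, and $K=I(\vartheta;D)+\KL(\bar P\|Q)\ge I(\vartheta;D)$.

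In all of the paper's applications either the observation is discrete or $I(\vartheta;D)\le N\log2$. So $I$ is finite there, and the gap affects only the lemma at the generality in which it is stated.
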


Only the stated inequality is used; the matching equality (minimax redundancy
equals channel capacity), which requires more care, is not needed.  We do not
reproduce the proof, but record the one-line argument for orientation: it is the
compensation identity
$\int\KL(P_v\|Q)\Pi(dv)=I(\vartheta;D)+\KL(P_\Pi\|Q)$ together with
$\sup\ge$ average, and it is exactly \citet[Thm.~1]{haussler1997}; see also
\citet[\S2]{merhavfeder1998}.

\subsection{Logistic Bregman geometry and Pinsker}

\begin{lemma}[Bernoulli logistic Bregman identity]
\label{lem:logistic-bregman}
With $\psi(z)=\log(1+e^{z})$ and any $a,b\in\R$,
\begin{equation}
  \KL\bigl(\Ber(\sig(a))\,\big\|\,\Ber(\sig(b))\bigr)
  =\psi(b)-\psi(a)-\psi'(a)(b-a)
  =(b-a)^{2}\!\int_0^1\!(1-\lambda)\,\psi''\bigl(a+\lambda(b-a)\bigr)d\lambda .
  \label{eq:logistic-bregman}
\end{equation}
Consequently $\KL\le\tfrac18(a-b)^{2}$ always, and
$\KL\ge\tfrac{\kappa_B}{2}(a-b)^{2}$ whenever $a,b\in[-B,B]$.
\end{lemma}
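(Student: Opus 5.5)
The plan is to compute the divergence directly from the definitions and then read off both consequences from Taylor's theorem with integral remainder; no earlier result is needed. Write $p=\sig(a)$ and $q=\sig(b)$. By definition,
\[
\KL(\Ber(p)\|\Ber(q))=p\log\frac pq+(1-p)\log\frac{1-p}{1-q}.
\]
I will express the logarithms through $\psi$. The identities $\log\sig(z)=z-\psi(z)$ and $\log(1-\sig(z))=-\psi(z)$ give
\[
\log\frac pq=(a-b)-\psi(a)+\psi(b),\qquad \log\frac{1-p}{1-q}=\psi(b)-\psi(a).
\]
Substituting these and collecting terms yields
\[
\KL=\psi(b)-\psi(a)+p(a-b).
\]
Since $p=\psi'(a)$, this is exactly $\psi(b)-\psi(a)-\psi'(a)(b-a)$, which is the first equality. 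The only care needed is with signs.

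For the second equality I apply Taylor's formula with integral remainder to $\psi$ along the segment from $a$ to $b$. Set $g(\lambda)=\psi(a+\lambda(b-a))$. Then
\[
g(1)-g(0)-g'(0)=\int_0^1(1-\lambda)g''(\lambda)\,d\lambda,
\]
and $g''(\lambda)=(b-a)^2\psi''(a+\lambda(b-a))$. This gives the integral representation; it is legitimate because $\psi$ is smooth.

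The two bounds then come from the kernel. Since $\int_0^1(1-\lambda)\,d\lambda=\tfrac12$, it suffices to bound $\psi''$ on the segment. For the upper bound, $\psi''=\sig(1-\sig)\le\tfrac14$ everywhere, so $\KL\le\tfrac18(a-b)^2$. For the lower bound, if $a,b\in[-B,B]$ then the whole segment lies in $[-B,B]$ by convexity of the interval. By \eqref{eq:kappa-b}, $\psi''\ge\kappa_B$ there, hence $\KL\ge\tfrac{\kappa_B}{2}(a-b)^2$.

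There is no real obstacle here. The step most prone to error is the sign bookkeeping in the first identity, and I will check it at $a=b$, where both sides must vanish.
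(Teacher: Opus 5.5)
Your proof is correct and follows the paper's route: a direct computation of the Bernoulli divergence, via $\log\sig(z)=z-\psi(z)$ and $\log(1-\sig(z))=-\psi(z)$ (the paper phrases this through $\psi'=\sig$ and $\log\frac{p}{1-p}=a$), then Taylor's theorem with integral remainder. The bounds follow from $\psi''\le\tfrac14$, from $\psi''\ge\kappa_B$ on the convex interval $[-B,B]$, and from $\int_0^1(1-\lambda)\,d\lambda=\tfrac12$, and the sign bookkeeping you flagged checks out.
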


\begin{proof}
The first equality is a direct computation from
$\KL(\Ber(p)\|\Ber(q))=p\log\frac pq+(1-p)\log\frac{1-p}{1-q}$ with
$p=\sig(a)$, $q=\sig(b)$, using $\psi'=\sig$ and
$\log\frac{p}{1-p}=a$.  The second is Taylor's theorem with integral
remainder.  For the consequences use $\psi''\le\tfrac14$ globally and
$\psi''\ge\kappa_B$ on $[-B,B]$, which is a convex interval, so the whole
segment from $a$ to $b$ lies in it; the integral $\int_0^1(1-\lambda)d\lambda$
equals $\tfrac12$.
\end{proof}

We also use Pinsker's inequality in its Bernoulli form,
$\KL(\Ber(p)\|\Ber(q))\ge2(p-q)^{2}$, for which see \citet{coverthomas2006} or
\citet{boucheronlugosimassart2013}; no proof is included.

\subsection{The variational mixture bound}

\begin{lemma}[Variational mixture bound; {\citealp{catoni2004,zhang2006}}]
\label{lem:variational-mixture}
Let $P$ and $(Q_v)_{v\in\cV}$ be probability measures dominated by a common
measure, let $\pi$ be a probability measure on $\cV$, and put
$M=\int Q_v\,\pi(dv)$.  Then for every $\rho\ll\pi$,
\begin{equation}
  \KL(P\|M)\ \le\ \int\KL(P\|Q_v)\,\rho(dv)+\KL(\rho\|\pi).
  \label{eq:variational-mixture}
\end{equation}
\end{lemma}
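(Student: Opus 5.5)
The plan is the Gibbs (Donsker--Varadhan) variational principle applied to the log-likelihood ratio. Assume first that $\KL(P\|M)$ and $\int\KL(P\|Q_v)\,\rho(dv)$ are finite, since otherwise the inequality is trivial. Fix a common dominating measure $\lambda$ and write $p=dP/d\lambda$ and $q_v=dQ_v/d\lambda$. Then $m=\int q_v\,\pi(dv)$ is a density of $M$, jointly measurable by Fubini.

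For $P$-almost every $x$ with $m(x)>0$, define the posterior-type probability measure $\pi_x(dv)=q_v(x)\pi(dv)/m(x)$. The Gibbs identity gives
\begin{equation*}
\log m(x)=\log\int q_v(x)\,\pi(dv)\ \ge\ \int\log q_v(x)\,\rho(dv)-\KL(\rho\|\pi),
\end{equation*}
for every $\rho\ll\pi$. To see this, write $\KL(\rho\|\pi_x)=\KL(\rho\|\pi)-\int\log q_v(x)\,\rho(dv)+\log m(x)\ge0$. If $\rho$ is not absolutely continuous with respect to $\pi_x$, then $\KL(\rho\|\pi_x)=\infty$ and the inequality holds trivially, provided the integral is read in the extended sense.

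Next, integrate against $P$ after subtracting from $\log p(x)$:
\begin{equation*}
\log\frac{p(x)}{m(x)}\le\int\log\frac{p(x)}{q_v(x)}\,\rho(dv)+\KL(\rho\|\pi).
\end{equation*}
Taking $\E_P$ and exchanging the order of integration by Tonelli then yields \eqref{eq:variational-mixture}.

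The one delicate step is this exchange of integrals, because $\log(p/q_v)$ is not sign-definite. I would handle it by splitting off the negative part, which is integrable: $\E_P[\log(p/q_v)]_-\le 1$ pointwise in $v$, using $[\log t]_-\le 1/t$. With that bound, Fubini applies to the positive and negative parts separately. Finally, the set where $m=0$ is $P$-null whenever the right-hand side is finite, since there $q_v=0$ for $\pi$-almost every $v$, and hence for $\rho$-almost every $v$.
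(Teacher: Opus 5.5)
Your proof is correct and is essentially the paper's argument. The paper obtains the same pointwise bound $\log m(x)\ge\int\log\bigl(q_v(x)/f(v)\bigr)\,\rho(dv)$, with $f=d\rho/d\pi$, by applying Jensen's inequality after discarding $\{f=0\}$; your $\KL(\rho\|\pi_x)\ge0$ is the same inequality. It then integrates against $P$ using Fubini. Your bound $[\log t]_-\le 1/t$ and your treatment of $\{m=0\}$ make rigorous what the paper leaves to ``the usual extended-value conventions''.

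One small correction: assuming $\KL(P\|M)<\infty$ is not a valid ``otherwise trivial'' reduction. Only an infinite right-hand side makes the inequality trivial, and the case $\KL(P\|M)=\infty$ with a finite right-hand side is exactly what must be ruled out. Your argument never uses that assumption, so simply drop it.
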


\begin{proof}
Write $p,q_v,m$ for densities and $f=d\rho/d\pi$.  Discarding the nonnegative
contribution of $\{f=0\}$ and rewriting the remaining $\pi$-integral as a
$\rho$-integral gives
\[
  m(x)\ \ge\ \int_{\{f>0\}}q_v(x)\,\pi(dv)=\int \frac{q_v(x)}{f(v)}\,\rho(dv),
\]
hence, for $P$-almost every $x$,
\[
  \log\frac{p(x)}{m(x)}
  \ \le\ -\log\int\frac{q_v(x)}{p(x)f(v)}\,\rho(dv)
  \ \le\ \int\log\!\left(\frac{p(x)}{q_v(x)}f(v)\right)\rho(dv),
\]
the last step by Jensen's inequality applied to the convex function $-\log$.
Integrating against $P$ and using Fubini gives \eqref{eq:variational-mixture};
the usual extended-value conventions cover points where a density vanishes.
\end{proof}

This is the Donsker--Varadhan/Gibbs variational principle in the form used
throughout PAC-Bayes and MDL analysis; see \citet[Ch.~1]{catoni2004},
\citet[\S2]{zhang2006}, \citet{kakadeng2005} and \citet[Ch.~15]{grunwald2007}.
We use it only through \Cref{lem:variational-mixture}.

\subsection{Entropy versus mean-square error}

\begin{lemma}[Maximum-entropy bound on mutual information]
\label{lem:entropy-estimation}
Let $\vartheta$ be uniform on the rectangle
$\prod_{j=1}^{J}[-r_j/2,r_j/2]$ with all $r_j>0$, and let
$\widehat\vartheta$ be any estimator constructed from data $D$.  Then
\begin{equation}
  I(\vartheta;D)\ \ge\ \sum_{j=1}^{J}\log r_j
  -\frac J2\log\!\left(2\pi e\,\frac{\E\norm{\vartheta-\widehat\vartheta}_2^{2}}{J}\right),
  \label{eq:entropy-estimation}
\end{equation}
in the extended sense if the error distribution is singular.
\end{lemma}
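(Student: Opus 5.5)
The plan is the standard entropy chain: write mutual information as a difference of differential entropies, bound the conditional entropy through the estimation error, and then use the Gaussian maximum-entropy bound.

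First, since $\vartheta$ is uniform on the rectangle, $\dent(\vartheta)=\sum_{j\le J}\log r_j$. Hence $I(\vartheta;D)=\dent(\vartheta)-\dent(\vartheta\mid D)$ whenever the conditional differential entropy is defined. If it equals $-\infty$, or the error law is singular, the extended-sense convention makes the claim trivial.

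Next, $\widehat\vartheta=\widehat\vartheta(D)$ is a measurable function of $D$. Translation invariance of differential entropy conditionally on $D$ gives $\dent(\vartheta\mid D)=\dent(\vartheta-\widehat\vartheta\mid D)$. Conditioning reduces entropy, so $\dent(\vartheta-\widehat\vartheta\mid D)\le\dent(\vartheta-\widehat\vartheta)$.

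Let $e=\vartheta-\widehat\vartheta$ with second-moment matrix $\Sigma=\E ee^{\top}$. The Gaussian maximizes entropy under a second-moment constraint, so $\dent(e)\le\tfrac12\log\bigl((2\pi e)^J\det\Sigma\bigr)$; the mean need not be zero, since the bound with the second-moment matrix dominates the one with the covariance. By AM--GM on the eigenvalues, $\det\Sigma\le(\tr\Sigma/J)^J=(\E\norm{e}_2^2/J)^J$, which gives $\dent(e)\le\tfrac J2\log\bigl(2\pi e\,\E\norm{e}_2^2/J\bigr)$. Combining the three steps yields \eqref{eq:entropy-estimation}.

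The only delicate point is measure-theoretic: justifying $I=\dent(\vartheta)-\dent(\vartheta\mid D)$ when $D$ has a mixed discrete/continuous law. I would handle this with the general definition via regular conditional densities of $\vartheta$ given $D$. These exist because $\vartheta$ has a bounded density and $I(\vartheta;D)=\E\,\KL(P_{\vartheta\mid D}\|P_\vartheta)$. If $\E\norm{e}_2^2=\infty$ the bound is vacuous, and if $\E\norm{e}_2^2=0$ the right side is $+\infty$; this is consistent with $I=\infty$, because $\vartheta$ is then a.s. a function of $D$.
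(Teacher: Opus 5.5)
Your proposal is correct and is essentially the paper's argument. The paper first uses data processing to pass to $I(\vartheta;\widehat\vartheta)$ and conditions on $\widehat\vartheta$ rather than on $D$, and it bounds the error entropy via $\Cov(W)$ and $\tr\Cov(W)\le\E\norm{W}_2^{2}$ rather than via the second-moment matrix; these are cosmetic variants of the same translation-invariance, conditioning, Gaussian-maximum-entropy and AM--GM chain. Two small fixes: conditional Lebesgue densities exist because $I(\vartheta;D)<\infty$ forces $P_{\vartheta\mid D}\ll P_\vartheta$ (the bounded prior density alone does not give them, but the infinite case is trivial), and you should rename the error vector, since $e$ also denotes Euler's number in $2\pi e$.
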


\begin{proof}
Data processing gives $I(\vartheta;D)\ge I(\vartheta;\widehat\vartheta)$.  The
prior is uniform on a rectangle with side lengths $r_1,\ldots,r_J$, so
$\dent(\vartheta)=\sum_j\log r_j$, and
$I(\vartheta;\widehat\vartheta)
 =\dent(\vartheta)-\dent(\vartheta\mid\widehat\vartheta)
 =\dent(\vartheta)-\dent(\vartheta-\widehat\vartheta\mid\widehat\vartheta)
 \ge\dent(\vartheta)-\dent(\vartheta-\widehat\vartheta)$,
using translation invariance of differential entropy and the fact that
conditioning does not increase it.  For any $J$-dimensional random vector $W$,
the Gaussian maximum-entropy property gives
$\dent(W)\le\tfrac12\log\bigl((2\pi e)^{J}\det\Cov(W)\bigr)$, and by the
arithmetic--geometric mean inequality on the eigenvalues together with
$\tr\Cov(W)\le\E\norm W_2^{2}$ this is at most
$\tfrac J2\log(2\pi e\,\E\norm W_2^{2}/J)$.  Apply this with
$W=\vartheta-\widehat\vartheta$.
\end{proof}

The three displayed facts --- data processing, translation invariance, and
Gaussian maximum entropy --- are \citet[Thms.~2.8.1, 8.6.3, 8.6.5]{coverthomas2006}.
The combination is the standard rate-distortion route to minimax lower bounds
\citep{yangbarron1999}.

\subsection{Concentration and eigenvalue localisation}
\label{sec:tools-conc}

We use, all verbatim and without proof:
\begin{itemize}
\item \textbf{Hoeffding's inequality} for a sum $S$ of $n$ independent
  Rademacher variables, $\Pp(\abs S>u)\le2\exp(-u^{2}/(2n))$
  \citep[Thm.~2.8]{boucheronlugosimassart2013};
\item \textbf{Azuma--Hoeffding} for a martingale-difference sequence bounded by
  $1$ in absolute value, $\Pp(\sum_{t\le n}\xi_t\le-\lambda)\le\exp(-\lambda^{2}/(2n))$
  \citep{cesabianchilugosi2006,boucheronlugosimassart2013};
\item \textbf{Gershgorin's circle theorem}: every eigenvalue of a real
  symmetric $J\times J$ matrix $A$ lies in some interval
  $[A_{jj}-\sum_{k\ne j}\abs{A_{jk}},\,A_{jj}+\sum_{k\ne j}\abs{A_{jk}}]$
  \citep[Thm.~6.1.1]{hornjohnson2013};
\item \textbf{Hadamard's inequality} for positive semidefinite matrices,
  $\det A\le\prod_jA_{jj}$, and the concavity bound
  $\log\det(I+M)\le J\log(1+\tr(M)/J)$ for $M\succeq0$ of size $J$
  \citep[\S7.8]{hornjohnson2013}.
\end{itemize}
Background on Rademacher and chaos concentration is in \citet{vershynin2018}.

\section{Proofs for \texorpdfstring{\Cref{sec:setup}}{the setup section}}
\label{app:proofs-setup}

\subsection{Proof of \texorpdfstring{\Cref{prop:bayes-redundancy}}{the redundancy identity}}

Fix $t$ and condition on $X_{1:t}$.  Since $X_{t+1}$ takes finitely many
values,
\begin{align*}
  \E_P\Bigl[-\log q_t(X_{t+1}\mid X_{1:t})+\log p_t^{P}(X_{t+1}\mid X_{1:t})
    \,\Big|\,X_{1:t}\Bigr]
  &=\sum_{x}p_t^{P}(x\mid X_{1:t})\log\frac{p_t^{P}(x\mid X_{1:t})}{q_t(x\mid X_{1:t})}\\
  &=\KL\bigl(p_t^{P}(\cdot\mid X_{1:t})\,\big\|\,q_t(\cdot\mid X_{1:t})\bigr).
\end{align*}
Taking expectations and summing over $t\le T$ gives the first equality of
\Cref{prop:bayes-redundancy}.  For the second, the chain rule for relative
entropy applied to the two factorisations of $P$ and $Q$ gives
\[
  \KL(P\|Q)=\E_P\log\frac{P(X_{1:T+1})}{Q(X_{1:T+1})}
  =\E_P\sum_{t=1}^{T}\log
   \frac{p_t^{P}(X_{t+1}\mid X_{1:t})}{q_t(X_{t+1}\mid X_{1:t})},
\]
the initial factors cancelling because both laws start from $\nu_0$; this is
exactly \eqref{eq:def-regret}.

Finally, restricting the initial law of $Q$ to $\nu_0$ costs nothing on either
side of the minimax identity in \Cref{prop:bayes-redundancy}.  Indeed, let $Q$ be an arbitrary causal
predictor with initial marginal $q_0$ and conditionals $q_t$.  By the chain
rule, for every $P\in\cP_T$,
\[
  \KL(P\|Q)=\KL(\nu_0\|q_0)
  +\sum_{t=1}^{T}\E_P\log\frac{p_t^{P}(X_{t+1}\mid X_{1:t})}{q_t(X_{t+1}\mid X_{1:t})},
\]
and the first term is nonnegative and does not depend on $P$.  Replacing $q_0$
by $\nu_0$ while keeping all conditionals therefore does not increase
$\KL(P\|Q)$ for any $P\in\cP_T$, hence does not increase the supremum.  The
infimum over all causal predictors is thus unchanged by restricting to
predictors with initial law $\nu_0$, and the minimax identity follows.
\qed

\subsection{Proof of \texorpdfstring{\Cref{lem:pairwise-kl}}{the pairwise KL lemma}}

The input stream $(U_t)$ has the same law under $P_{\theta,T}$ and
$P_{\theta',T}$, and conditionally on the inputs the marks are independent with
the logits $\eta_{t+1}(\theta)$ and $\eta_{t+1}(\theta')$ respectively.  The
chain rule for relative entropy therefore gives
\begin{equation}
  \KL(P_{\theta,T}\|P_{\theta',T})
  =\sum_{t=1}^{T}\E\,\KL\Bigl(
    \Ber\bigl(\sig(\eta_{t+1}(\theta))\bigr)\,\Big\|\,
    \Ber\bigl(\sig(\eta_{t+1}(\theta'))\bigr)\Bigr),
  \label{eq:conditional-label-kl}
\end{equation}
the expectation being over the inputs; note that the fresh-input factors
$\tfrac12$ cancel identically, which is why only the marks contribute.

Put $\varDelta_j=\theta_j-\theta_j'$.  Both logits lie in $[-B,B]$ by
\eqref{eq:l1-envelope}, so \Cref{lem:logistic-bregman} bounds each summand of
\eqref{eq:conditional-label-kl} between $\tfrac{\kappa_B}{2}$ and $\tfrac18$
times
\[
  \E\Bigl(\sum_{j=1}^{t}\varDelta_jU_{t+1-j}\Bigr)^{2}
  =\sum_{j=1}^{t}\varDelta_j^{2},
\]
where the identity holds because the $U_s$ are independent, centred and of unit
variance, so all cross terms vanish.  Exchanging the order of summation,
\[
  \sum_{t=1}^{T}\sum_{j=1}^{t}\varDelta_j^{2}
  =\sum_{j=1}^{T}(T-j+1)\varDelta_j^{2}
  =\sum_{j=1}^{T}n_{T,j}(\theta_j-\theta_j')^{2},
\]
which substituted into \eqref{eq:conditional-label-kl} gives
\eqref{eq:pairwise-kl}.  Only the lower bound used the restriction
$\theta,\theta'\in\Theta_r$; the upper bound used the global inequality
$\psi''\le\tfrac14$ and so holds for any square-summable pair.
\qed

\subsection{Proof of \texorpdfstring{\Cref{thm:memory}}{the predictive-memory theorem}}
\label{app:memory-details}

For $h\ge1$, let $\cG_{t,h}=\sigma(U_{(t-h+1)\vee1},\ldots,U_t)$,
and let $\cG_{t,0}$ be the trivial sigma-field.  For known $\theta$, the
optimal probability based on these inputs is
$p^{\star,h}_{\theta,t}=\E[\sig(\eta_{t+1}(\theta))\mid\cG_{t,h}]$.
Conditional Bayes optimality identifies the minimum cumulative excess loss as
\[
  \cD_T(\theta,h)=\sum_{t=1}^{T}\E_\theta
  \KL\!\left(\Ber(\sig(\eta_{t+1}(\theta)))
    \,\middle\|\,\Ber(p^{\star,h}_{\theta,t})\right).
\]
This restriction excludes recent marks, which can carry information about
older inputs; it does not cover arbitrary compressed states of the marked
history.  See \Cref{rem:valid-measure-long} for this distinction.

Throughout, only the two lower bounds use the envelope condition
\eqref{eq:l1-envelope}; both upper bounds hold for every square-summable
$\theta$.

\paragraph{Truncation bias.}
Apply \Cref{lem:pairwise-kl} with $\theta'=\theta^{(h)}$, which lies in
$\Theta_r$ whenever $\theta$ does.  Since
$\theta_j-\theta^{(h)}_j=\theta_j\ind\{j>h\}$, the weighted squared distance is
exactly $V_T(\theta,h)$, giving the first display of
\eqref{eq:memory-two-sided}.

\paragraph{Recent-input distortion.}
Fix a round $t$ and split the logit into its retained and omitted parts,
\[
  \eta_{t+1}(\theta)=R_{t,h}+W_{t,h},
  \qquad
  R_{t,h}=\sum_{j\le h\wedge t}\theta_jU_{t+1-j},
  \qquad
  W_{t,h}=\sum_{h<j\le t}\theta_jU_{t+1-j}.
\]
By construction $R_{t,h}$ is $\cG_{t,h}$-measurable and $W_{t,h}$ is
independent of $\cG_{t,h}$, centred, with
$\E[W_{t,h}^{2}\mid\cG_{t,h}]=\sum_{h<j\le t}\theta_j^{2}$, and the Bayes-optimal
$h$-input probability is
$p^{\star,h}_{\theta,t}=\E[\sig(R_{t,h}+W_{t,h})\mid\cG_{t,h}]$.

\emph{Upper bound.}  Compare with the feasible, $\cG_{t,h}$-measurable predictor
$\sig(R_{t,h})$.  By \Cref{lem:logistic-bregman} with the global curvature
bound,
\[
  \KL\bigl(\Ber(\sig(R_{t,h}+W_{t,h}))\,\big\|\,\Ber(\sig(R_{t,h}))\bigr)
  \le\tfrac18W_{t,h}^{2},
\]
and since $p^{\star,h}_{\theta,t}$ minimises the conditional expected excess
loss over all $\cG_{t,h}$-measurable choices, taking expectations gives the
per-round bound $\tfrac18\sum_{h<j\le t}\theta_j^{2}$.

\emph{Lower bound.}  By Pinsker's inequality and $p^{\star,h}_{\theta,t}$ being
the conditional mean of $\sig(\eta_{t+1})$,
\begin{align*}
  \E\Bigl[\KL\bigl(\Ber(\sig(R_{t,h}+W_{t,h}))\big\|\Ber(p^{\star,h}_{\theta,t})\bigr)
    \,\Big|\,\cG_{t,h}\Bigr]
  &\ \ge\ 2\,\E\bigl[(\sig(\eta_{t+1})-p^{\star,h}_{\theta,t})^{2}\mid\cG_{t,h}\bigr]\\
  &\ =\ 2\Var\bigl(\sig(\eta_{t+1})\mid\cG_{t,h}\bigr).
\end{align*}
Let $W_{t,h}'$ be a conditionally independent copy of $W_{t,h}$.  Using the
identity $\Var(X)=\tfrac12\E(X-X')^{2}$ for independent copies, then the mean
value theorem --- every value of $R_{t,h}+W_{t,h}$ lies in $[-B,B]$, a convex
set, and $\sig'=\psi''\ge\kappa_B$ there --- and then the same variance identity
in reverse,
\begin{align*}
  \Var\bigl(\sig(R_{t,h}+W_{t,h})\mid\cG_{t,h}\bigr)
  &=\tfrac12\E\bigl[(\sig(R_{t,h}+W_{t,h})-\sig(R_{t,h}+W_{t,h}'))^{2}\mid\cG_{t,h}\bigr]\\
  &\ge\tfrac{\kappa_B^{2}}{2}\E\bigl[(W_{t,h}-W_{t,h}')^{2}\mid\cG_{t,h}\bigr]
   =\kappa_B^{2}\Var\bigl(W_{t,h}\mid\cG_{t,h}\bigr).
\end{align*}
The per-round distortion is therefore at least
$2\kappa_B^{2}\sum_{h<j\le t}\theta_j^{2}$.  Summing both bounds over
$t=1,\ldots,T$ and exchanging the order of summation, as in the proof of
\Cref{lem:pairwise-kl}, replaces $\sum_{t}\sum_{h<j\le t}\theta_j^{2}$ by
$\sum_{j>h}n_{T,j}\theta_j^{2}=V_T(\theta,h)$, which is the second display of
\eqref{eq:memory-two-sided}.
\qed

\subsection{Worst-case truncation bounds and proof of \texorpdfstring{\Cref{cor:envelope-memory}}{the envelope corollary}}
\label{app:envelope-memory}

\begin{corollary}[Worst-case truncation loss]
\label{cor:envelope-memory}
For every integer $h\ge0$,
\begin{equation}
  \sup_{\theta\in\Theta_r}\cB_T(\theta,h)
  \ \asymp_B\ \sup_{\theta\in\Theta_r}\cD_T(\theta,h)
  \ \asymp_B\ \sum_{j>h}n_{T,j}r_j^{2}.
  \label{eq:envelope-memory-profile}
\end{equation}
The same bounds hold for the rank-one class
$\Theta_r^{\mathrm{ray}}=\{ar:\abs a\le1\}$.
\end{corollary}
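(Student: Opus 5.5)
The plan is to derive everything from \Cref{thm:memory}, which already gives two-sided bounds on $\cB_T(\theta,h)$ and $\cD_T(\theta,h)$ in terms of the single quantity $V_T(\theta,h)=\sum_{j>h}n_{T,j}\theta_j^{2}$. Since the constants $\kappa_B/2$, $1/8$, $2\kappa_B^{2}$ in \eqref{eq:memory-two-sided} depend only on $B$ and hold for every $\theta\in\Theta_r$, taking the supremum over $\theta$ on each side yields
\[
  \tfrac{\kappa_B}{2}\sup_{\theta}V_T(\theta,h)\le\sup_{\theta}\cB_T(\theta,h)\le\tfrac18\sup_{\theta}V_T(\theta,h),
\]
and the analogous two-sided bound for $\cD_T$ with constants $2\kappa_B^{2}$ and $\tfrac18$. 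Both worst-case losses are therefore comparable, with $B$-dependent constants, to $\sup_{\theta}V_T(\theta,h)$. It is also worth checking that the supremum interchange is harmless: the lower bounds in \Cref{thm:memory} are pointwise in $\theta$, so $\sup$ of a lower bound is a lower bound for the $\sup$, and likewise for the upper bounds.

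It remains to evaluate $\sup_{\theta\in\Theta_r}V_T(\theta,h)$. Since $V_T$ is a nonnegatively weighted sum of $\theta_j^{2}$ and $\abs{\theta_j}\le r_j$, we have $V_T(\theta,h)\le\sum_{j>h}n_{T,j}r_j^{2}$. Equality is attained at $\theta=r$, which lies in $\Theta_r$. So the supremum equals $\sum_{j>h}n_{T,j}r_j^{2}$ exactly, which gives \eqref{eq:envelope-memory-profile}. Here $n_{T,j}=0$ for $j>T$, so the sum is finite.

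For the rank-one class $\Theta_r^{\mathrm{ray}}\subset\Theta_r$, the upper bounds are inherited by inclusion. The lower bounds use the same extremal point: $\theta=r$ corresponds to $a=1$, so it lies in the subclass and gives $V_T(r,h)=\sum_{j>h}n_{T,j}r_j^{2}$. The pointwise lower bounds of \Cref{thm:memory} at this single $\theta$ then finish the argument.

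There is no genuine obstacle here. The only point needing care is confirming that the extremiser $\theta=r$ belongs to both classes and satisfies the envelope condition \eqref{eq:l1-envelope}, which the lower bounds of \Cref{thm:memory} require; this holds because $\sum_j r_j\le B$.
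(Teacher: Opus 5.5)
Your proposal is correct and follows the paper's proof essentially verbatim. The upper bounds come from \Cref{thm:memory} together with $\theta_j^{2}\le r_j^{2}$, and pass to the rank-one subclass by inclusion. The lower bounds come from evaluating \Cref{thm:memory} at the single extremal point $\theta=r$, which lies in both $\Theta_r$ and $\Theta_r^{\mathrm{ray}}$ (taking $a=1$) and has $V_T(r,h)=\sum_{j>h}n_{T,j}r_j^{2}$.
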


For the upper bounds, apply \Cref{thm:memory} with $\theta_j^2\le r_j^2$.
For the lower bounds, choose $\theta=r$, which belongs to both classes and
satisfies $V_T(r,h)=\sum_{j>h}n_{T,j}r_j^2$.  Thus the classes have the same
worst-case truncation loss up to $B$-dependent constants. \qed

For the canonical envelopes, \Cref{rem:window-sizes} gives sufficient window
sizes of order $\alpha^{-1}\log(T/\eps)$ for $r_j=Ae^{-\alpha j}$ and
$(T/\eps)^{1/(2s-1)}$ for $r_j=Aj^{-s}$, with fixed envelope parameters,
to keep cumulative distortion below $\eps$; a window of $T$ always suffices.

\section{Proof of \texorpdfstring{\Cref{thm:upper}}{the anisotropic coding bound}}
\label{app:upper}

We give the calculation sketched in \Cref{sec:upper} in full.  Fix
$\theta\in\Theta_r$.  A horizon-$T$ source law depends only on
$\theta_1,\ldots,\theta_T$, and a coordinate with $r_j=0$ is deterministic under
both $\pi_j$ and any $\rho_j$, contributing nothing; so we may restrict
attention to $1\le j\le T$ with $r_j>0$ and abbreviate $n_j=n_{T,j}$,
$s_j=n_jr_j^{2}$.  Let $\pi_j=\mathrm{Unif}[-r_j,r_j]$, so
$\pi_T=\bigotimes_{j\le T}\pi_j$ is the prior of \eqref{eq:bayes-mixture}.

\paragraph{Step 1: the localised product distribution.}
Define $\rho=\bigotimes_{j\le T}\rho_j$ as follows.

\emph{Inactive coordinates} ($s_j\le1$): take $\rho_j=\pi_j$.  Then
$\KL(\rho_j\|\pi_j)=0$ and, since $v_j$ is uniform on $[-r_j,r_j]$ and
$\abs{\theta_j}\le r_j$,
\begin{equation}
  \E_{\rho}(v_j-\theta_j)^{2}=\frac{r_j^{2}}{3}+\theta_j^{2}
  \le\frac43r_j^{2}\le4r_j^{2}.
  \label{eq:upper-inactive}
\end{equation}

\emph{Active coordinates} ($s_j>1$): then $\delta_j\defeq n_j^{-1/2}<r_j$, so
$[-r_j,r_j]$ contains a subinterval of length $\delta_j$ containing
$\theta_j$; concretely take $I_j=[\theta_j,\theta_j+\delta_j]$ if
$\theta_j\le r_j-\delta_j$ and $I_j=[\theta_j-\delta_j,\theta_j]$ otherwise, and
let $\rho_j=\mathrm{Unif}(I_j)$.  Every $v_j\in I_j$ is within $\delta_j$ of
$\theta_j$, and the density ratio is the constant $2r_j/\delta_j$, so
\begin{equation}
  \E_{\rho}(v_j-\theta_j)^{2}\le\delta_j^{2}=\frac1{n_j},
  \qquad
  \KL(\rho_j\|\pi_j)=\log\frac{2r_j}{\delta_j}
  =\tfrac12\log s_j+\log2 .
  \label{eq:upper-active}
\end{equation}

\paragraph{Step 2: approximation cost.}
The upper half of \Cref{lem:pairwise-kl} is global, so it applies for every
$v$ in the support of $\rho$ regardless of whether $v\in\Theta_r$.  Since
$\rho$ is a product measure and the bound is a sum of per-coordinate terms,
\begin{align}
  \int\KL(P_{\theta,T}\|P_{v,T})\,\rho(dv)
  &\le\frac18\sum_{j=1}^{T}n_j\,\E_\rho(v_j-\theta_j)^{2}\nonumber\\
  &\le\frac18\sum_{j:\,s_j>1}n_j\cdot\frac1{n_j}
    +\frac18\sum_{j:\,s_j\le1}n_j\cdot4r_j^{2}\nonumber\\
  &=\frac18\bigl\lvert\{j:s_j>1\}\bigr\rvert+\frac12\sum_{j:\,s_j\le1}s_j .
  \label{eq:upper-approx}
\end{align}

\paragraph{Step 3: localisation cost.}
By the tensorisation of relative entropy over product measures and
\eqref{eq:upper-active},
\begin{equation}
  \KL(\rho\|\pi_T)=\sum_{j:\,s_j>1}\Bigl(\tfrac12\log s_j+\log2\Bigr).
  \label{eq:upper-localisation}
\end{equation}

\paragraph{Step 4: combining the coordinate bounds.}
Apply \Cref{lem:variational-mixture} with $P=P_{\theta,T}$, $Q_v=P_{v,T}$,
$\pi=\pi_T$ and the $\rho$ just constructed, and insert
\eqref{eq:upper-approx}--\eqref{eq:upper-localisation}.  Grouping the two
sources of $O(1)$-per-active-coordinate cost, with $C_0=\tfrac18+\log2$,
\begin{align}
  \KL(P_{\theta,T}\|Q_{\pi_T})
  &\le\frac12\sum_{j:\,s_j>1}\log s_j
   +\Bigl(\tfrac18+\log2\Bigr)\bigl\lvert\{j:s_j>1\}\bigr\rvert
   +\frac12\sum_{j:\,s_j\le1}s_j\nonumber\\
  &\le\frac12\sum_{j=1}^{T}\log(1+s_j)
   +C_0\Bigl(\bigl\lvert\{j:s_j>1\}\bigr\rvert+\sum_{j:\,s_j\le1}s_j\Bigr)
   \nonumber\\
  &=\frac12\Gamma_T(r)+C_0\,d_T(r),
  \label{eq:upper-before-gamma}
\end{align}
where the middle step used $\log s_j\le\log(1+s_j)$, dropped nothing negative,
and used $\tfrac12\le C_0$; and the last step used
$\bigl\lvert\{j:s_j>1\}\bigr\rvert+\sum_{j:s_j\le1}s_j
 =\sum_{j\le T}\min\{1,s_j\}=d_T(r)$.
This is the first inequality of \eqref{eq:upper-main}.

For the second, note that
\begin{equation}
  \min\{1,s\}\le
  \begin{cases}
    2\log(1+s), & 0\le s\le1,\\[2pt]
    (\log2)^{-1}\log(1+s), & s>1,
  \end{cases}
  \label{eq:d-by-gamma}
\end{equation}
the first case because $\log(1+s)\ge s\log2$ on $[0,1]$ by concavity and
$2\log2>1$, the second because $\log(1+s)>\log2$ there.  Since
$\max\{2,(\log2)^{-1}\}=2$, summing gives $d_T(r)\le2\Gamma_T(r)$ and therefore
\eqref{eq:upper-before-gamma} is at most
$(\tfrac12+2C_0)\Gamma_T(r)=C\Gamma_T(r)$.

The bound is uniform over $\theta\in\Theta_r$ and $Q_{\pi_T}$ is a single causal
predictor, so taking the supremum and then the infimum over predictors, and
using \Cref{prop:bayes-redundancy}, yields $\cR_T(r)\le C\Gamma_T(r)$.
\qed

\begin{remark}[Tightness of the construction]
Both halves of the coordinate dichotomy are tight for the argument.  Localising
an inactive coordinate would incur $\tfrac12\log s_j<0$ nats of ``credit'' but
the localisation cost $\KL(\rho_j\|\pi_j)$ is nonnegative, so nothing is gained;
conversely, refusing to localise an active coordinate would leave approximation
cost $\tfrac12s_j\gg\tfrac12\log s_j$.  The crossover $s_j=1$ is therefore the
correct one, and the resulting per-coordinate charge $\log(1+s_j)$ is, by
\Cref{thm:lower}, not merely convenient but forced.
\end{remark}

\section{The Forest-Product Lemma}
\label{app:toeplitz}

The proof of \Cref{lem:toeplitz-gram} in \Cref{sec:converse-toeplitz} rests on
the following elementary but, to our knowledge, unexploited observation.  Its
role is to convert a sum of \emph{dependent} products of Rademacher variables
into a sum of \emph{independent} Rademacher variables, at which point standard
scalar concentration applies.

\begin{lemma}[Edge products on a forest]
\label{lem:forest-products}
Let $G=(V,E)$ be a finite forest and let $(U_v)_{v\in V}$ be independent
Rademacher random variables.  Then the edge products
\begin{equation}
  W_{\{v,w\}}=U_vU_w, \qquad \{v,w\}\in E,
  \label{eq:forest-products}
\end{equation}
are jointly independent Rademacher random variables.
\end{lemma}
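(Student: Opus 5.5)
The plan is to prove the lemma by an explicit change of variables on the hypercube, as already sketched in the proof of \Cref{lem:toeplitz-gram}. Since the vertex signs $(U_v)_{v\in V}$ are uniform on $\{-1,+1\}^{V}$, it suffices to exhibit a bijection from $\{-1,+1\}^{V}$ onto $\{-1,+1\}^{R}\times\{-1,+1\}^{E}$, where $R$ is a set of roots containing one vertex per connected component, whose second coordinate is exactly the vector of edge products $(U_vU_w)_{\{v,w\}\in E}$. A bijection between finite sets pushes the uniform law forward to the uniform law, and the uniform law on a product of $\{-1,+1\}$ factors is the product of uniform laws. Hence the edge products will be independent and uniform on $\{-1,+1\}$, i.e.\ independent Rademacher variables, and they will in addition be independent of the root signs (which we then discard).

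The steps run as follows. First, fix a root in each component and recall that a forest on $\abs{V}$ vertices with $\abs{R}$ components has exactly $\abs{V}-\abs{R}$ edges. The source and target cubes therefore have the same cardinality $2^{\abs V}$, and it is enough to show the map $\Phi:(U_v)\mapsto\bigl((U_\rho)_{\rho\in R},(U_vU_w)_{\{v,w\}\in E}\bigr)$ is injective, or equivalently surjective. Second, invert $\Phi$ explicitly. In a tree there is a unique path $\rho=v_0,v_1,\ldots,v_m=v$ from the root to each vertex $v$, and since $U_{v_{i}}^{2}=1$, telescoping gives $U_v=U_\rho\prod_{i=1}^{m}(U_{v_{i-1}}U_{v_i})$. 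Thus the vertex signs are determined by the root signs and the edge products, which gives injectivity.

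For surjectivity, which is the direction that actually uses acyclicity, take arbitrary root signs and arbitrary prescribed edge signs $w_e$, and define $U_v$ by the path-product formula. Uniqueness of root-to-vertex paths makes this well defined. For an edge $\{v,w\}$ with $w$ the child of $v$, the path to $w$ extends the path to $v$ by that edge, so $U_vU_w=w_{\{v,w\}}$ as required. The only delicate point is therefore well-definedness, which is exactly where the forest hypothesis enters; on a cycle the product of the edge signs is forced to equal $1$ and the map fails to be onto. The remaining counting and pushforward steps are routine.
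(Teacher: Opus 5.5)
Your proposal is correct and matches the paper's proof: root each component, map the vertex signs to (root signs, edge products), combine $\abs{E}=\abs{V}-m$ with injectivity via root-to-vertex path products to get a bijection, and push the uniform law forward. Your separate surjectivity check is redundant once injectivity and equal cardinality are in hand, but it is harmless. Your placement of acyclicity is also slightly sharper than the paper's: injectivity needs only the existence of paths, and acyclicity enters through the edge count, equivalently through surjectivity.
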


\begin{proof}
Let $C_1,\ldots,C_m$ be the connected components of $G$ and fix a root
$\mathrm{root}(C_i)\in C_i$ in each.  Consider the map
\begin{equation}
  \Phi:\{-1,+1\}^{V}\longrightarrow\{-1,+1\}^{m}\times\{-1,+1\}^{E},
  \qquad
  (U_v)_{v\in V}\longmapsto
  \Bigl(\bigl(U_{\mathrm{root}(C_i)}\bigr)_{i\le m},\ \bigl(U_vU_w\bigr)_{\{v,w\}\in E}\Bigr).
  \label{eq:forest-bijection}
\end{equation}
Because $G$ is a forest, $\abs E=\abs V-m$, so the domain and codomain of
$\Phi$ have the same cardinality $2^{\abs V}$.  Moreover $\Phi$ is injective:
given the root signs and all edge products, the sign at any vertex $v$ is
recovered as $U_{\mathrm{root}(C_i)}$ times the product of the edge products
along the unique path from $\mathrm{root}(C_i)$ to $v$ in the tree $C_i$, this
path being unique precisely because $C_i$ is a tree.  An injective map between
finite sets of equal cardinality is a bijection.

The input $(U_v)_{v\in V}$ is uniform on the hypercube $\{-1,+1\}^{V}$, and the
pushforward of the uniform measure under a bijection between finite sets of
equal cardinality is uniform.  Hence $\Phi\bigl((U_v)_v\bigr)$ is uniform on
$\{-1,+1\}^{m}\times\{-1,+1\}^{E}$, and in particular its $E$-coordinates ---
the edge products \eqref{eq:forest-products} --- are independent Rademacher
variables.
\end{proof}

\begin{remark}[Where the forest structure is essential]
\label{rem:forest-essential}
The conclusion genuinely requires acyclicity, and it is instructive to see how
it fails otherwise.  On a triangle with vertices $\{1,2,3\}$ the three edge
products satisfy $(U_1U_2)(U_2U_3)(U_3U_1)=U_1^{2}U_2^{2}U_3^{2}=1$
identically, so they are not independent --- any two determine the third.  More
generally, the edge products around any cycle multiply to $1$, and the linear
map over $\mathbb F_2$ from vertex signs to edge signs has a kernel/cokernel
governed by the cycle space of the graph; independence of \emph{all} edge
products holds exactly when the cycle space is trivial, i.e.\ when the graph is
a forest.  In \eqref{eq:toeplitz-offdiag} acyclicity comes for free: the edges
$\{s,s+q\}$ for consecutive $s$ link each residue class modulo $q$ into a path,
and paths have no cycles.  It is exactly this feature of the \emph{Toeplitz}
(as opposed to, say, circulant) lag matrix that we exploit: wrapping the index
set around, as a circulant design would, creates cycles and destroys the
independence.
\end{remark}

\section{Proofs for the Converse}
\label{app:lower}

\subsection{Proof of \texorpdfstring{\Cref{thm:exogenous-lower}}{the conditioned-design information bound}}

Write $\kappa=\kappa_{B_0}$.  For $v\in K_J$ let
\begin{equation}
  \ell_N(v)=\sum_{i=1}^{N}\Bigl[\psi\bigl(b_i+z_i^{\top}v\bigr)
    -Y_i\bigl(b_i+z_i^{\top}v\bigr)\Bigr]
  \label{eq:exogenous-likelihood}
\end{equation}
be the negative log likelihood of the labels conditional on $(Z,b)$, and let
\begin{equation}
  \widehat\vartheta\in\argmin_{v\in K_J}\ell_N(v)
  \label{eq:exogenous-mle}
\end{equation}
be a measurable minimiser, ties broken lexicographically; a minimiser exists
because $\ell_N$ is continuous and $K_J$ compact.  Put
\begin{equation}
  \cE=\bigl\{\lambda_{\min}(Z^{\top}Z)\ge\mu N\bigr\},
  \qquad \Pp(\cE)\ge1-\delta
  \label{eq:exogenous-good-event}
\end{equation}
by the conditioning hypothesis, the second in \eqref{eq:hyp-rest}.

\paragraph{Step 1: strong convexity on the good event.}
Differentiating \eqref{eq:exogenous-likelihood} twice,
$\nabla^{2}\ell_N(v)=\sum_{i\le N}\psi''(b_i+z_i^{\top}v)\,z_iz_i^{\top}$.
By \eqref{eq:hyp-logit} every argument of $\psi''$ lies in $[-B_0,B_0]$ for
$v\in K_J$, so $\psi''\ge\kappa$ there and
$\nabla^{2}\ell_N(v)\succeq\kappa\,Z^{\top}Z$ for all $v\in K_J$.  Hence on
$\cE$ the function $\ell_N$ is $\kappa\mu N$-strongly convex on $K_J$.

\paragraph{Step 2: from strong convexity to a score bound.}
Let $g=\nabla\ell_N(\vartheta)$.  Strong convexity on the convex set $K_J$,
applied to the two points $\vartheta,\widehat\vartheta\in K_J$, gives on $\cE$
\[
  \ell_N(\widehat\vartheta)\ \ge\ \ell_N(\vartheta)
  +g^{\top}(\widehat\vartheta-\vartheta)
  +\frac{\kappa\mu N}{2}\norm{\widehat\vartheta-\vartheta}_2^{2},
\]
while $\ell_N(\widehat\vartheta)\le\ell_N(\vartheta)$ by
\eqref{eq:exogenous-mle}.  Subtracting,
\begin{equation}
  0\ \ge\ g^{\top}(\widehat\vartheta-\vartheta)
   +\frac{\kappa\mu N}{2}\norm{\widehat\vartheta-\vartheta}_2^{2}
  \qquad\text{on }\cE,
  \label{eq:exogenous-strong-convexity}
\end{equation}
and by Cauchy--Schwarz $g^{\top}(\widehat\vartheta-\vartheta)\ge
-\norm g_2\norm{\widehat\vartheta-\vartheta}_2$, so on $\cE$
\begin{equation}
  \norm{\widehat\vartheta-\vartheta}_2\ \le\ \frac{2\norm g_2}{\kappa\mu N}.
  \label{eq:exogenous-error-score}
\end{equation}

\paragraph{Step 3: the score is small.}
Conditionally on $(Z,b,\vartheta)$,
$g=\sum_{i\le N}\bigl(\sig(b_i+z_i^{\top}\vartheta)-Y_i\bigr)z_i$ is a sum of
independent centred vectors, because the labels are conditionally independent
with means $\sig(b_i+z_i^{\top}\vartheta)$.  Hence all cross terms vanish and
\begin{equation}
  \E\bigl[\norm g_2^{2}\mid Z,b,\vartheta\bigr]
  =\sum_{i=1}^{N}\Var(Y_i\mid Z,b,\vartheta)\,\norm{z_i}_2^{2}
  \ \le\ \frac14\tr(Z^{\top}Z),
  \label{eq:exogenous-score-moment}
\end{equation}
using $\Var(Y_i\mid\cdot)\le\tfrac14$ for a Bernoulli variable and
$\sum_i\norm{z_i}_2^{2}=\tr(Z^{\top}Z)$.  Combining
\eqref{eq:exogenous-error-score}, \eqref{eq:exogenous-score-moment} and the
trace hypothesis, the first in \eqref{eq:hyp-rest}, and using that $\cE$ is
$(Z,b)$-measurable,
\begin{equation}
  \E\bigl[\norm{\widehat\vartheta-\vartheta}_2^{2}\,\ind_{\cE}\bigr]
  \ \le\ \frac{4}{\kappa^{2}\mu^{2}N^{2}}\cdot\frac14\,\E\tr(Z^{\top}Z)
  \ \le\ \frac{\tau}{\kappa^{2}\mu^{2}}\cdot\frac JN .
  \label{eq:exogenous-good-mse}
\end{equation}
On the complement, $\widehat\vartheta$ and $\vartheta$ both lie in $K_J$, whose
squared diameter is $\sum_{j\le J}r_j^{2}=\Delta_J^{2}$, so by the bad-design
hypothesis, the last in \eqref{eq:hyp-rest},
\begin{equation}
  \E\bigl[\norm{\widehat\vartheta-\vartheta}_2^{2}\,\ind_{\cE^{c}}\bigr]
  \ \le\ \Delta_J^{2}\,\delta\ \le\ \frac{\tau J}{N}.
  \label{eq:exogenous-bad-mse}
\end{equation}
Adding \eqref{eq:exogenous-good-mse} and \eqref{eq:exogenous-bad-mse},
\begin{equation}
  \E\norm{\widehat\vartheta-\vartheta}_2^{2}\ \le\ \gamma\,\frac JN,
  \qquad
  \gamma\defeq\tau\Bigl(1+\kappa^{-2}\mu^{-2}\Bigr).
  \label{eq:exogenous-total-mse}
\end{equation}

\paragraph{Step 4: mean-square error into information.}
Apply \Cref{lem:entropy-estimation} and insert
\eqref{eq:exogenous-total-mse}:
\begin{align}
  I(\vartheta;D)
  &\ \ge\ \sum_{j=1}^{J}\log r_j
   -\frac J2\log\!\left(\frac{2\pi e\,\gamma}{N}\right)\nonumber\\
  &\ =\ \frac12\sum_{j=1}^{J}\log\bigl(Nr_j^{2}\bigr)
   -\frac J2\log\bigl(2\pi e\,\gamma\bigr),
  \label{eq:exogenous-information-raw}
\end{align}
which is \eqref{eq:exogenous-mi} with
$C_{B_0,\mu,\tau}=\tfrac12\log(2\pi e\gamma)$.  Mutual information is
nonnegative, which justifies the positive part.  Finally
\Cref{lem:redundancy-capacity}, applied to the family $(P_v^{(N)})_{v\in K_J}$
with the uniform prior, transfers the same bound to $\mathfrak R_N(K_J)$.
\qed

\subsection{Proof of \texorpdfstring{\Cref{thm:lower}}{the regret lower bound}}

Since $r$ is nonincreasing and $r_J>0$ we have $r_j>0$ for all $j\le J$, so the
rectangle $K_J=\prod_{j\le J}[-r_j/2,r_j/2]$ has positive side lengths.
Restrict the source parameter to $K_J$, set $\theta_j=0$ for $j>J$, and let
$\vartheta$ be uniform on $K_J$.  Since $r_j/2\le r_j$, this prior is supported
inside $\Theta_r$.

Retain the inputs $U_{1:T}$ and the labels $Y_{J+1},\ldots,Y_{T+1}$, i.e.\ the
$N=T-J+1$ rounds $t=J,\ldots,T$ on which all of the first $J$ lags are present.
With
\begin{equation}
  z_i=(U_{J+i-1},U_{J+i-2},\ldots,U_i)^{\top},\qquad i=1,\ldots,N,
  \label{eq:lower-design-rows}
\end{equation}
the retained labels satisfy, conditionally on the inputs and independently
across $i$,
$Y_{J+i}\mid U_{1:T},\vartheta\sim\Ber(\sig(z_i^{\top}\vartheta))$, and the
matrix with rows $z_i^{\top}$ is exactly $Z^{(J)}$ of
\eqref{eq:toeplitz-design}.  We now verify the four hypotheses of
\Cref{thm:exogenous-lower} with $b=0$.

\emph{Bounded logits.}  For every $v\in K_J$,
$\abs{z_i^{\top}v}\le\sum_{j\le J}\abs{v_j}\le\tfrac12\sum_{j\le J}r_j\le B/2$,
so $B_0=B/2$ works.

\emph{Trace.}  Each diagonal entry of
$(Z^{(J)})^{\top}Z^{(J)}$ equals $N$ because $U_s^{2}=1$, so
$\tr\bigl((Z^{(J)})^{\top}Z^{(J)}\bigr)=NJ$ \emph{deterministically} and
$\tau=1$.

\emph{Conditioning.}  \Cref{lem:toeplitz-gram} gives
$\mu=1/2$ with
\begin{equation}
  \delta=2J^{2}\exp\!\left(-\frac{N}{8J^{2}}\right).
  \label{eq:lower-bad-probability}
\end{equation}

\emph{Bad design.}  With $\tau=1$ the requirement is
$\Delta_J^{2}\delta\le J/N$.  Since
$\Delta_J^{2}=\sum_{j\le J}r_j^{2}\le\bigl(\sum_{j\ge1}r_j\bigr)^{2}\le B^{2}$,
it suffices that
\begin{equation}
  2B^{2}J^{2}\exp\!\left(-\frac{N}{8J^{2}}\right)\ \le\ \frac JN .
  \label{eq:lower-bad-control}
\end{equation}
The dimension condition \eqref{eq:lower-dimension-condition} gives
$N/(8J^{2})\ge(C_B/8)\log(2N)$, so that
$\exp(-N/(8J^{2}))\le(2N)^{-C_B/8}$; also $J\le N$, so
$2B^{2}J^{2}\cdot(N/J)=2B^{2}JN\le2B^{2}N^{2}$.  Therefore
\eqref{eq:lower-bad-control} holds as soon as
$(2N)^{C_B/8}\ge2B^{2}N^{2}$.  Taking $C_B\ge8\bigl(3+\log_2(1+B^{2})\bigr)$ and
using $2N\ge2$ (forced by \eqref{eq:lower-dimension-condition}),
\[
  (2N)^{C_B/8}\ \ge\ (2N)^{3}\,(2N)^{\log_2(1+B^{2})}
  \ \ge\ 8N^{3}\cdot(1+B^{2})\ \ge\ 2B^{2}N^{2},
\]
the last step because $4N(1+B^{2})\ge B^{2}$ for $N\ge1$.  Enlarging $C_B$ once
more so that it also dominates the constant $C_{B/2,1/2,1}$ produced by
\Cref{thm:exogenous-lower}, that theorem yields
\begin{equation}
  I\bigl(\vartheta;(Z^{(J)},Y_{1:N})\bigr)
  \ \ge\ \left[\frac12\sum_{j=1}^{J}\log\bigl(Nr_j^{2}\bigr)-C_BJ\right]_+ .
  \label{eq:lower-selected-information}
\end{equation}

It remains to transfer this to the full observation.  The entries of $Z^{(J)}$
are precisely $U_1,\ldots,U_T$, each appearing at least once, so $Z^{(J)}$ and
$U_{1:T}$ generate the same $\sigma$-field and the pair $(Z^{(J)},Y_{1:N})$ is a
measurable function of $D=X_{1:T+1}=(U_{1:T+1},Y_{1:T+1})$ --- we simply discard
$U_{T+1}$ and the early labels $Y_{1},\ldots,Y_{J}$.  Data processing therefore
gives $I(\vartheta;D)\ge I(\vartheta;(Z^{(J)},Y_{1:N}))$.  Since the prior is
supported in $\Theta_r$, \Cref{lem:redundancy-capacity} applied to the full
source family $(P_{\theta,T})_{\theta\in\Theta_r}$ gives
$\cR_T(r)\ge I(\vartheta;D)$, and \eqref{eq:lower-main} follows.
\qed

\begin{remark}[What is discarded, and why it is affordable]
The reduction throws away the labels $Y_1,\ldots,Y_J$, which do carry
information about $\theta$ (for instance $Y_2$ has logit $\theta_1U_1$), and it
sets $\theta_j=0$ for $j>J$, which removes whole coordinates from the problem.
Both losses are one-sided and therefore harmless for a lower bound.  The
essential point is that nothing is discarded from the \emph{design}: had we
conditioned on a sub-block of $Z^{(J)}$ we would have had to re-establish
conditioning for that sub-block, whereas as it stands the retained rounds are
exactly those on which the first $J$ lags are all exercised, so the design is
the full $N\times J$ Toeplitz matrix and \Cref{lem:toeplitz-gram} applies
directly.
\end{remark}

\subsection{Proof of \texorpdfstring{\Cref{cor:active-lower}}{the strongly-active-coordinates corollary}}

Write $J=J_T(a_B)$.  If $J=0$ the claim is vacuous because regret is
nonnegative.  Otherwise $Tr_J^{2}\ge a_B>0$, so $r_J>0$, and $J\le T/2$ by
definition, whence $N=T-J+1\ge T/2$ and in particular $T\le2N$.  Let
$C_B^{\mathrm{low}}$ be the constant of \Cref{thm:lower} and take
$C_B\ge2C_B^{\mathrm{low}}$ in the hypothesis $T\ge C_BJ^{2}\log(2T)$.  Then
\[
  N\ \ge\ \frac T2\ \ge\ \frac{C_B}{2}J^{2}\log(2T)
  \ \ge\ C_B^{\mathrm{low}}J^{2}\log(2N),
\]
using $\log(2T)\ge\log(2N)$ from $T\ge N$; so
\eqref{eq:lower-dimension-condition} holds and \Cref{thm:lower} applies.  Using
$Nr_j^{2}\ge Tr_j^{2}/2$ termwise,
\[
  \cR_T(r)\ \ge\
  \left[\sum_{j=1}^{J}\left(\frac12\log\frac{Tr_j^{2}}{2}
    -C_B^{\mathrm{low}}\right)\right]_+ .
\]
Because $r$ is nonincreasing, every $j\le J$ has $Tr_j^{2}\ge Tr_J^{2}\ge a_B$.
Choose $c_B=1/4$ and then $a_B>1$ large enough that
\begin{equation}
  \frac12\log\frac x2-C_B^{\mathrm{low}}\ \ge\ \frac14\log(1+x)
  \qquad\text{for all }x\ge a_B ,
  \label{eq:active-elementary}
\end{equation}
which is possible since the left side grows like $\tfrac12\log x$ and the right
like $\tfrac14\log x$.  Every bracket is then nonnegative, so the positive part
may be dropped term by term, and \eqref{eq:active-spectrum-lower} follows.
\qed

\section{Proofs of the Rate Corollaries}
\label{app:rates}

Throughout, $C_B^{\mathrm{low}}$ denotes the constant of \Cref{thm:lower}.

\subsection{Proof of \texorpdfstring{\Cref{cor:finite-rate}}{the finite-memory corollary}}

\emph{Upper bound.}  If $r_j=0$ for $j>k$ then $s_{T,j}=0$ for $j>k$, so
$\Gamma_T(r)=\sum_{j\le k}\log(1+n_{T,j}r_j^{2})\le\sum_{j\le k}\log(1+Tr_j^{2})$
by $n_{T,j}\le T$, and \Cref{thm:upper} gives the right-hand inequality of
\eqref{eq:finite-rate} with the universal $C$.

\emph{Lower bound.}  Let $a_B,c_B,C_B$ be as in \Cref{cor:active-lower}.
Because $r$ is nonincreasing, $Tr_k^{2}\ge a_B$ forces $Tr_j^{2}\ge a_B$ for
every $j\le k$, while $r_j=0$ for $j>k$ excludes any larger index; together
with $k\le T/2$ this gives $J_T(a_B)=k$ exactly.  The first half of the
hypothesis, $T\ge C_Bk^{2}\log(2T)$, is then exactly the dimension condition of
\Cref{cor:active-lower}, which yields
$\cR_T(r)\ge c_B\sum_{j\le k}\log(1+Tr_j^{2})$.

\emph{Asymptotics.}  For fixed $k$ and fixed positive radii, each summand of
\eqref{eq:finite-rate} equals $\log T+O(1)$, so both sides are
$\asymp_{B,k,r}k\log T$; and both conditions hold for all large $T$ since $k$
and $r_k>0$ are fixed while $k^{2}\log(2T)=o(T)$.
\qed

\subsection{Proof of \texorpdfstring{\Cref{cor:exp-rate}}{the exponential-envelope corollary}}

Write $\Lambda=\Lambda_T=\log(A^{2}T)$.

\paragraph{Upper bound.}
Using $n_{T,j}\le T$ and then monotonicity of the summand in $j$ to compare the
sum with an integral,
\begin{align}
  \Gamma_T(r)
  &\le\sum_{j=1}^{\infty}\log\bigl(1+A^{2}Te^{-2\alpha j}\bigr)
   \le\int_{0}^{\infty}\log\bigl(1+A^{2}Te^{-2\alpha x}\bigr)dx
   =\frac1{2\alpha}\int_0^{\infty}\log\bigl(1+e^{\Lambda-y}\bigr)dy,
  \label{eq:exp-integral}
\end{align}
substituting $y=2\alpha x$.  Split the integral at $y=\Lambda$.  For
$0\le y\le\Lambda$ we have $e^{\Lambda-y}\ge1$, so
$\log(1+e^{\Lambda-y})\le\log(2e^{\Lambda-y})=\Lambda-y+\log2$; for $y>\Lambda$
use $\log(1+x)\le x$.  Hence
\begin{equation}
  \Gamma_T(r)\ \le\ \frac1{2\alpha}
  \left(\frac{\Lambda^{2}}{2}+\Lambda\log2+1\right)
  \ \le\ C\,\frac{\Lambda^{2}}{\alpha}
  \label{eq:exp-upper-spectrum}
\end{equation}
whenever $\Lambda\ge1$, which the first condition of \eqref{eq:exp-regime}
guarantees.  \Cref{thm:upper} gives the upper half of the conclusion.

\paragraph{Lower bound.}
Apply \Cref{thm:lower} with
\begin{equation}
  J=\left\lfloor\frac{\Lambda}{8\alpha}\right\rfloor .
  \label{eq:exp-lower-J}
\end{equation}

\emph{Admissibility of $J$.}  Take $C_B^{\mathrm{reg}}\ge16$ in the first
condition of \eqref{eq:exp-regime}.  Then $\Lambda\ge16\alpha$, so
$\Lambda/(8\alpha)\ge2$ and therefore
\begin{equation}
  \frac{\Lambda}{16\alpha}\ \le\ J\ \le\ \frac{\Lambda}{8\alpha},
  \qquad J\ge2 .
  \label{eq:exp-J-window}
\end{equation}
(The left inequality uses $\lfloor x\rfloor\ge x/2$ for $x\ge2$.)  The two
conditions of \eqref{eq:exp-regime} are jointly restrictive: since
$\Lambda/\alpha\ge16$,
\[
  256\log(2T)\ \le\ (\Lambda/\alpha)^{2}\log(2T)\ \le\ c_B^{\mathrm{reg}}T\ \le\ T,
\]
which forces $T$ to exceed an absolute constant; in particular $\log(2T)\ge1$,
so that $(\Lambda/\alpha)^{2}\le c_B^{\mathrm{reg}}T$ and hence
$J\le\tfrac18\sqrt{c_B^{\mathrm{reg}}T}\le T/2$.  With
$N=T-J+1\ge T/2$ we get $T\le2N$ and
\[
  J^{2}\log(2N)\ \le\ \left(\frac{\Lambda}{8\alpha}\right)^{2}\log(2T)
  \ \le\ \frac{c_B^{\mathrm{reg}}}{64}T
  \ \le\ \frac{c_B^{\mathrm{reg}}}{32}N,
\]
so \eqref{eq:lower-dimension-condition} holds as soon as
$c_B^{\mathrm{reg}}\le32/C_B^{\mathrm{low}}$.

\emph{Evaluation.}  Since $N\ge T/2$, every $j\le J$ satisfies
$Nr_j^{2}\ge\tfrac12A^{2}Te^{-2\alpha j}$, i.e.
$\log(Nr_j^{2})\ge\Lambda-\log2-2\alpha j$.  Hence \Cref{thm:lower} gives
\begin{align}
  \cR_T(r)
  &\ \ge\ \left[\frac12\sum_{j=1}^{J}\bigl(\Lambda-\log2-2\alpha j\bigr)
      -C_B^{\mathrm{low}}J\right]_+\nonumber\\
  &\ =\ \left[J\left(\frac{\Lambda-\log2}{2}-\frac{\alpha(J+1)}{2}
      -C_B^{\mathrm{low}}\right)\right]_+ ,
  \label{eq:exp-lower-calc}
\end{align}
using $\sum_{j\le J}j=J(J+1)/2$.  By \eqref{eq:exp-J-window},
$\alpha J\le\Lambda/8$, and by the first condition of \eqref{eq:exp-regime},
$\alpha\le\Lambda/C_B^{\mathrm{reg}}$; so the bracket is at least
\[
  \frac{\Lambda}{2}-\frac{\log2}{2}-\frac{\Lambda}{16}
  -\frac{\Lambda}{2C_B^{\mathrm{reg}}}-C_B^{\mathrm{low}}
  \ \ge\ \frac{7\Lambda}{16}-\frac{\Lambda}{32}-\frac{\Lambda}{16}
  \ =\ \frac{11\Lambda}{32}\ \ge\ \frac{\Lambda}{4},
\]
where we used $C_B^{\mathrm{reg}}\ge16$ for the third term and
$\tfrac{\log2}{2}+C_B^{\mathrm{low}}\le\Lambda/16$ for the last two, which holds
once
$C_B^{\mathrm{reg}}\ge8\bigl(\log2+2C_B^{\mathrm{low}}\bigr)$ because
$\Lambda\ge C_B^{\mathrm{reg}}$.  Combining with $J\ge\Lambda/(16\alpha)$,
\begin{equation}
  \cR_T(r)\ \ge\ \frac{\Lambda}{16\alpha}\cdot\frac{\Lambda}{4}
  \ =\ \frac1{64}\cdot\frac{\Lambda^{2}}{\alpha},
  \label{eq:exp-lower-spectrum}
\end{equation}
i.e.\ the lower half with $c_B=1/64$.  So one may take
$C_B^{\mathrm{reg}}=\max\{16,\,8(\log2+2C_B^{\mathrm{low}})\}$ and
$c_B^{\mathrm{reg}}=\min\{1,\,32/C_B^{\mathrm{low}}\}$.

\emph{Asymptotics.}  For fixed $A,\alpha$ we have $\Lambda_T=\log T+O(1)$, so
both conditions of \eqref{eq:exp-regime} hold for large $T$ because
$(\log T)^{2}\log(2T)=o(T)$; this gives
$\cR_T(r)=\Theta_{A,\alpha,B}(\log^{2}T)$.
\qed

\subsection{Proof of \texorpdfstring{\Cref{cor:poly-rate}}{the polynomial-envelope corollary}}

Write $M=M_T=(A^{2}T)^{1/(2s)}$, so that $Tr_j^{2}=(M/j)^{2s}$.

\paragraph{Upper bound.}
By $n_{T,j}\le T$,
$\Gamma_T(r)\le\sum_{j\ge1}\log\bigl(1+(M/j)^{2s}\bigr)$.  The first condition
of \eqref{eq:poly-regime} ensures $M\ge C_{B,s}^{\mathrm{reg}}\ge2$.  Put
$m=\lceil M\rceil$ and split at $j=m$.

\emph{Head.}  For $j\le m$, using $1+x\le2\max\{1,x\}$ and $m\ge M$,
$\log(1+(M/j)^{2s})\le\log2+2s\log(m/j)$.  Summing and using
$\log(m!)\ge m\log m-m$ together with $m\le M+1\le\tfrac32M$ (valid as
$M\ge2$),
\begin{align}
  \sum_{j=1}^{m}\log\bigl(1+(M/j)^{2s}\bigr)
  &\le m\log2+2s\bigl(m\log m-\log(m!)\bigr)\nonumber\\
  &\le(\log2+2s)\,m
  \ \le\ \tfrac32(\log2+2s)\,M
  \ \defeq\ C_s^{\mathrm{head}}M .
  \label{eq:poly-head}
\end{align}

\emph{Tail.}  For $j>m$, $\log(1+x)\le x$ and the integral test give
\begin{equation}
  \sum_{j>m}\log\bigl(1+(M/j)^{2s}\bigr)
  \le M^{2s}\sum_{j>m}j^{-2s}
  \le M^{2s}\,\frac{m^{1-2s}}{2s-1}
  \le\frac{M}{2s-1},
  \label{eq:poly-tail}
\end{equation}
the last step because $m\ge M$ and $1-2s<0$.  Hence
$\Gamma_T(r)\le C_s^{\mathrm{spec}}M$ with
$C_s^{\mathrm{spec}}=C_s^{\mathrm{head}}+(2s-1)^{-1}$, and \Cref{thm:upper}
gives the upper half of the conclusion.

\emph{Sharpness of the evaluation.}  The matching lower bound in
\eqref{eq:poly-spectrum-eval} is immediate: for $j\le M$ we have
$A^{2}Tj^{-2s}=(M/j)^{2s}\ge1$, so each such term is at least $\log2$ and
\begin{equation}
  \sum_{j\ge1}\log\bigl(1+A^{2}Tj^{-2s}\bigr)
  \ \ge\ \lfloor M\rfloor\log2\ \ge\ \frac{M\log2}{2},
  \label{eq:poly-spectrum-lower-eval}
\end{equation}
using $M\ge2$.  Together with \eqref{eq:poly-head}--\eqref{eq:poly-tail} this
proves \eqref{eq:poly-spectrum-eval}.

\paragraph{Lower bound.}
We apply \Cref{thm:lower} directly, which avoids any need to control the total
number of active coordinates.  Fix
\begin{equation}
  c_s^{\star}=\tfrac12\exp\!\left(-\frac{2+2C_B^{\mathrm{low}}+\log2}{2s}\right)
  \in\left(0,\tfrac12\right),
  \label{eq:poly-cstar}
\end{equation}
and put $J=\lfloor c_s^{\star}M\rfloor$.  Taking
$C_{B,s}^{\mathrm{reg}}\ge2/c_s^{\star}$ in the first condition of
\eqref{eq:poly-regime} guarantees $c_s^{\star}M\ge2$, hence $J\ge1$ and
$J\ge c_s^{\star}M/2$.  The two conditions of \eqref{eq:poly-regime} give
$4\log(2T)\le M^{2}\log(2T)\le c_{B,s}^{\mathrm{reg}}T\le T$, which forces $T$
to exceed an absolute constant, so $\log(2T)\ge1$ and therefore
$M\le\sqrt{c_{B,s}^{\mathrm{reg}}T}$; consequently $J\le M/2\le T/2$.  Moreover
$N=T-J+1\ge T/2$, so $T\le2N$ and
\[
  J^{2}\log(2N)\ \le\ M^{2}\log(2T)\ \le\ c_{B,s}^{\mathrm{reg}}T
  \ \le\ 2c_{B,s}^{\mathrm{reg}}N,
\]
which gives \eqref{eq:lower-dimension-condition} once
$c_{B,s}^{\mathrm{reg}}\le1/(2C_B^{\mathrm{low}})$.

Since $N\ge T/2$ and $Tr_j^{2}=(M/j)^{2s}$, every $j\le J$ satisfies
\begin{equation}
  \log\bigl(Nr_j^{2}\bigr)
  \ \ge\ 2s\log\frac Mj-\log2
  \ \ge\ 2s\log\frac1{c_s^{\star}}-\log2
  \ =\ 2s\log2+2+2C_B^{\mathrm{low}}
  \ \ge\ 2+2C_B^{\mathrm{low}},
  \label{eq:poly-strong-active}
\end{equation}
using $j\le J\le c_s^{\star}M$ and \eqref{eq:poly-cstar}.  Each summand in
\eqref{eq:lower-main} therefore contributes at least
$\tfrac12(2+2C_B^{\mathrm{low}})-C_B^{\mathrm{low}}=1$, so
\begin{equation}
  \cR_T(r)\ \ge\ J\ \ge\ \frac{c_s^{\star}}{2}\,M,
  \label{eq:poly-lower-sum}
\end{equation}
which is the lower half with $c_{B,s}=c_s^{\star}/2$.

\emph{Asymptotics.}  For fixed $A$ and $s>1$,
\begin{equation}
  M_T^{2}\log(2T)=A^{2/s}T^{1/s}\log(2T)=o(T)
  \qquad\text{because }s>1,
  \label{eq:poly-condition-automatic}
\end{equation}
so both conditions of \eqref{eq:poly-regime} hold for all large $T$ and
$\cR_T(r)=\Theta_{A,s,B}(T^{1/(2s)})$.
\qed

\subsection{Memory-profile tail sums, and the hard-truncation optimum}

\begin{lemma}[Exponential and polynomial tail sums]
\label{lem:profile-sums}
Let $1\le h\le T/4$.
\begin{enumerate}
\item If $r_j=Ae^{-\alpha j}$ with $\alpha>0$, then
  $\sum_{j>h}n_{T,j}r_j^{2}\asymp_\alpha TA^{2}e^{-2\alpha h}$.
\item If $r_j=Aj^{-s}$ with $s>1/2$, then
  $\sum_{j>h}n_{T,j}r_j^{2}\asymp_s TA^{2}h^{1-2s}$.
\end{enumerate}
\end{lemma}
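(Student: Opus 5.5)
The plan is to sandwich the weight $n_{T,j}=(T-j+1)_+$ on a block of indices just above $h$, and to use the tail decay of $r_j^2$ to show that the remaining indices contribute at most a constant multiple.

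\emph{Upper bounds.} Use $n_{T,j}\le T$ to get $\sum_{j>h}n_{T,j}r_j^2\le TA^2\sum_{j>h}\rho_j$, where $\rho_j=e^{-2\alpha j}$ or $j^{-2s}$. Then bound the series.
\begin{itemize}
\item In the exponential case, $\sum_{j>h}e^{-2\alpha j}=e^{-2\alpha h}e^{-2\alpha}/(1-e^{-2\alpha})$. The factor $e^{-2\alpha}/(1-e^{-2\alpha})$ depends only on $\alpha$.
\item In the polynomial case, the integral test gives $\sum_{j>h}j^{-2s}\le\int_h^\infty x^{-2s}dx=h^{1-2s}/(2s-1)$. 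This needs only $2s>1$, which is why $s>1/2$ suffices here.
\end{itemize}

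\emph{Lower bounds.} Restrict to the block $h<j\le 2h$. Since $h\le T/4$, we have $j\le T/2$ there, so $n_{T,j}\ge T/2$.
\begin{itemize}
\item In the polynomial case, each term of the block is at least $A^2(T/2)(2h)^{-2s}$, and there are $h$ terms. This gives at least $2^{-1-2s}TA^2h^{1-2s}$.
\item In the exponential case, the single term $j=h+1$ already gives at least $\tfrac T2A^2e^{-2\alpha}e^{-2\alpha h}$, with the constant depending only on $\alpha$.
\end{itemize}
The hypothesis $h\le T/4$ is used only to guarantee $n_{T,j}\ge T/2$ on the block. Also $h\ge1$ ensures the block $(h,2h]$ is nonempty, so the lower bounds are not vacuous.

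There is no real obstacle. The only point to watch is that the constants depend only on $\alpha$ (respectively $s$) and not on $A$, $h$ or $T$, which the computations above make explicit.
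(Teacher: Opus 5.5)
Your proposal is correct and follows essentially the same route as the paper: bound $n_{T,j}\le T$ and sum the geometric series or use the integral test for the upper bounds, then keep the term $j=h+1$ (exponential) or the block $h<j\le 2h$ with $n_{T,j}\ge T/2$ (polynomial) for the lower bounds. The only cosmetic difference is that the paper uses $n_{T,h+1}=T-h\ge 3T/4$ for the exponential lower bound where you use $T/2$, which changes nothing but the constant.
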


\begin{proof}
(1)~Upper: $n_{T,j}\le T$ and a geometric sum give
$\sum_{j>h}n_{T,j}A^{2}e^{-2\alpha j}\le TA^{2}e^{-2\alpha(h+1)}/(1-e^{-2\alpha})$.
Lower: keep only $j=h+1$, whose sample size is $n_{T,h+1}=T-h\ge3T/4$ since
$h\le T/4$; this gives $\ge\tfrac34TA^{2}e^{-2\alpha(h+1)}$.

(2)~Upper: $n_{T,j}\le T$ and the integral test give
$\sum_{j>h}j^{-2s}\le h^{1-2s}/(2s-1)$.  Lower: keep only
$j=h+1,\ldots,2h$.  Since $2h\le T/2$, each such $j$ has
$n_{T,j}=T-j+1\ge T/2$, and each coefficient satisfies
$j^{-2s}\ge(2h)^{-2s}$; the $h$ retained terms give
$\ge\tfrac12TA^{2}h(2h)^{-2s}=2^{-2s-1}TA^{2}h^{1-2s}$.
\end{proof}

\begin{remark}[Window sizes for a prescribed distortion]
\label{rem:window-sizes}
\Cref{lem:profile-sums} converts \Cref{thm:memory} into explicit window sizes.
If infinitely many $\theta_j$ are nonzero the exact conditional law
\eqref{eq:model} depends on arbitrarily old inputs, yet finite-window prediction
is accurate as soon as the weighted tail is small.  By part~(1) of
\Cref{lem:profile-sums}, for fixed $A$ and $\alpha$ the envelope
$r_j=Ae^{-\alpha j}$ needs a window of order $\alpha^{-1}\log(T/\eps)$ for
cumulative distortion at most $\eps$, since $TA^{2}e^{-2\alpha h}\le\eps$ holds
as soon as $h\ge\tfrac1{2\alpha}\log(TA^{2}/\eps)$; by part~(2), for fixed
$s>1$ the envelope $r_j=Aj^{-s}$ needs a window of order
$(T/\eps)^{1/(2s-1)}$, since $TA^{2}h^{1-2s}\le\eps$ holds as soon as
$h\ge(TA^{2}/\eps)^{1/(2s-1)}$.  These are approximation statements;
\Cref{thm:profile-impossibility} shows that they do not determine the minimax
regret of the source class.
\end{remark}

Finally we record the hard-truncation optimum quoted in
\eqref{eq:hard-trunc-extra-log}.  Balancing the two terms of
$h\log T+TA^{2}h^{1-2s}$ gives $h^{2s}=TA^{2}/\log T$, i.e.\
$h\asymp(TA^{2}/\log T)^{1/(2s)}$, and substituting back yields the value
$h\log T\asymp(A^{2}T)^{1/(2s)}(\log T)^{1-1/(2s)}$.  For fixed $A,s$ and large
$T$ this $h$ lies below $T/4$, so \Cref{lem:profile-sums} applies and rounding
to an integer changes only constants.  Comparing with
\eqref{eq:poly-spectrum-eval}, the hard-truncation bound exceeds the truth by
the factor $(\log T)^{1-1/(2s)}\to\infty$.

\section{Proofs for \texorpdfstring{\Cref{sec:algorithms}}{the algorithms section}}
\label{app:ons}

The argument combines a deterministic ONS comparison with a source-specific
truncation calculation.  Rescaling $u_j=\theta_j/r_j$ places the parameter in
$[-1,1]^h$, while $U_s^2=1$ gives the feature Gram matrix the deterministic
diagonal $n_{T,j}r_j^2$.  Thus the log-determinant retains the separate
contribution of each lag instead of charging every coordinate $\log T$.
The curvature constant comes from the bounded logistic logits.  The
comparison with the standard ONS proof is detailed in \Cref{rem:vs-hak-long}.

After proving \eqref{eq:ons-oracle}, we give profile aggregation and
predictable-feature extensions, including low-rank filter dictionaries.
Aggregation selects among candidate envelopes at a prior-dependent cost;
filter approximation instead exploits additional structure in the coefficients.
A state-space realisation implements the filter dictionary and does not
change the definition of input-window distortion.  Horizon dependence is
addressed separately in \Cref{rem:horizon-long}.

\subsection{A design-sensitive ONS guarantee}

We first record the variant of the Online Newton Step guarantee that we use.
As explained in \Cref{rem:vs-hak-long}, it differs from
\citet[Thm.~2]{hazanagarwalkale2007} only in the choice of curvature constant
and in stating the log-determinant in the \emph{feature} rather than the
gradient Gram matrix; the proof is the standard one and is included because
those two changes propagate into the constants we need.

Let $\cU\subset\R^{d}$ be closed and convex, let $x_t\in\R^{d}$, and let
\begin{equation}
  \ell_t(u)=\psi(x_t^{\top}u)-y_tx_t^{\top}u,\qquad y_t\in\{0,1\},
  \label{eq:ons-app-loss}
\end{equation}
be the logistic loss.  Assume the logits are bounded on the feasible set,
\begin{equation}
  \abs{x_t^{\top}u}\le B\qquad\text{for every }u\in\cU\text{ and every }t,
  \label{eq:ons-app-bounded-logit}
\end{equation}
put $\beta_B=\kappa_B$, $g_t=\nabla\ell_t(u_t)$, $H_0=I_d$, pick $u_1\in\cU$
--- the curvature step below evaluates the loss at every iterate, so the start
must be feasible; the projection keeps all later iterates in $\cU$ --- and
iterate
\begin{equation}
  H_t=H_{t-1}+\beta_Bg_tg_t^{\top},
  \qquad
  u_{t+1}=\Pi_{\cU}^{H_t}\bigl(u_t-H_t^{-1}g_t\bigr),
  \label{eq:ons-app-update}
\end{equation}
where $\Pi_{\cU}^{H}(v)$ minimises $\norm{u-v}_H^{2}=(u-v)^{\top}H(u-v)$ over
$u\in\cU$.

\begin{lemma}[Design-sensitive ONS]
\label{lem:design-ons}
For every $u\in\cU$,
\begin{equation}
  \sum_{t=1}^{T}\bigl(\ell_t(u_t)-\ell_t(u)\bigr)
  \ \le\ \frac12\norm{u_1-u}_2^{2}
   +\frac1{2\beta_B}\log\det\!\Bigl(I_d+\beta_B\sum_{t=1}^{T}g_tg_t^{\top}\Bigr),
  \label{eq:design-ons-gradient}
\end{equation}
and consequently, since $g_tg_t^{\top}\preceq x_tx_t^{\top}$ and
$0<\beta_B\le\tfrac14<1$,
\begin{equation}
  \sum_{t=1}^{T}\bigl(\ell_t(u_t)-\ell_t(u)\bigr)
  \ \le\ \frac12\norm{u_1-u}_2^{2}
   +\frac1{2\beta_B}\log\det\!\Bigl(I_d+\sum_{t=1}^{T}x_tx_t^{\top}\Bigr).
  \label{eq:design-ons-feature}
\end{equation}
\end{lemma}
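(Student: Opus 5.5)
The proof follows the standard Hazan--Agarwal--Kale argument. The only changes are the curvature constant, which comes from \Cref{lem:logistic-bregman}, and a final monotonicity step.

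\emph{Step 1 (exp-concave surrogate).} For $u,u_t\in\cU$, both logits $x_t^{\top}u$ and $x_t^{\top}u_t$ lie in $[-B,B]$ by \eqref{eq:ons-app-bounded-logit}, and $\ell_t$ is a Bregman form of $\psi$ along the feature. I will aim for
\[
  \ell_t(u_t)-\ell_t(u)\le g_t^{\top}(u_t-u)-\tfrac{\beta_B}{2}\bigl(g_t^{\top}(u_t-u)\bigr)^{2}.
\]
Write $a=x_t^{\top}u_t$ and $b=x_t^{\top}u$. Then $\ell_t(u)-\ell_t(u_t)-g_t^{\top}(u-u_t)=\psi(b)-\psi(a)-\psi'(a)(b-a)\ge\tfrac{\kappa_B}{2}(b-a)^{2}$ by \Cref{lem:logistic-bregman}. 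Also $g_t=(\sig(a)-y_t)x_t$ with $\abs{\sig(a)-y_t}\le1$, so $(g_t^{\top}(u_t-u))^{2}\le(a-b)^{2}$, which gives the display with $\beta_B=\kappa_B$.

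\emph{Step 2 (Newton potential).} Put $v_{t+1}=u_t-H_t^{-1}g_t$. Nonexpansiveness of the $H_t$-projection gives $\norm{u_{t+1}-u}_{H_t}^{2}\le\norm{v_{t+1}-u}_{H_t}^{2}$. Expanding yields
\[
  2g_t^{\top}(u_t-u)\le g_t^{\top}H_t^{-1}g_t+\norm{u_t-u}_{H_t}^{2}-\norm{u_{t+1}-u}_{H_t}^{2}.
\]
Since $H_t=H_{t-1}+\beta_Bg_tg_t^{\top}$, we have $\norm{u_t-u}_{H_t}^{2}=\norm{u_t-u}_{H_{t-1}}^{2}+\beta_B(g_t^{\top}(u_t-u))^{2}$. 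Summing over $t$, the quadratic terms cancel exactly against Step 1 (this is why the factor $\tfrac12$ matches). The distance terms telescope, leaving $\tfrac12\norm{u_1-u}_{H_0}^{2}=\tfrac12\norm{u_1-u}_2^{2}$ plus $\tfrac12\sum_tg_t^{\top}H_t^{-1}g_t$.

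\emph{Step 3 (log-determinant).} The matrix determinant lemma gives $\beta_Bg_t^{\top}H_t^{-1}g_t=1-\det H_{t-1}/\det H_t\le\log(\det H_t/\det H_{t-1})$. Telescoping bounds $\tfrac12\sum_t g_t^{\top}H_t^{-1}g_t$ by $\tfrac{1}{2\beta_B}\log\det H_T$, which is \eqref{eq:design-ons-gradient}. For \eqref{eq:design-ons-feature}, note $g_tg_t^{\top}=(\sig(a)-y_t)^{2}x_tx_t^{\top}\preceq x_tx_t^{\top}$. Hence $\beta_B\sum g_tg_t^{\top}\preceq\sum x_tx_t^{\top}$, using $\beta_B\le1$. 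Finally, $\log\det(I+\cdot)$ is monotone in the Loewner order.

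The main point to check carefully is Step 1: the curvature bound needs the whole segment between the two logits inside $[-B,B]$. This is why $u_1$ must be feasible and the projection must keep later iterates in $\cU$. The matching of $\beta_B$ in the $H_t$ update with the curvature term also has to be exact.
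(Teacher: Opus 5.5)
Your proposal is correct and follows the paper's proof essentially step for step. It has the same exp-concavity inequality with modulus $\kappa_B$ (you get it from the Bregman identity of \Cref{lem:logistic-bregman}, which is exactly the Taylor remainder the paper uses), the same $H_t$-projection telescoping, and the same Loewner-monotonicity finish; your identity $\beta_B g_t^{\top}H_t^{-1}g_t=1-\det H_{t-1}/\det H_t\le\log(\det H_t/\det H_{t-1})$ is just the paper's Sherman--Morrison bound $\xi_t/(1+\xi_t)\le\log(1+\xi_t)$ rewritten with $1+\xi_t=\det H_t/\det H_{t-1}$.
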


\begin{proof}
\emph{Step 1: exp-concavity with the logit-based modulus.}  Fix $u,v\in\cU$.
Both $x_t^{\top}u$ and $x_t^{\top}v$ lie in $[-B,B]$ by
\eqref{eq:ons-app-bounded-logit}, hence so does the whole segment between them,
and $\psi''\ge\kappa_B$ there.  Taylor's theorem with integral remainder gives
\begin{equation}
  \ell_t(v)\ \ge\ \ell_t(u)+\nabla\ell_t(u)^{\top}(v-u)
  +\frac{\kappa_B}{2}\bigl(x_t^{\top}(v-u)\bigr)^{2}.
  \label{eq:ons-second-order}
\end{equation}
Now $\nabla\ell_t(u)=(\sig(x_t^{\top}u)-y_t)x_t$ is a scalar multiple of $x_t$
with multiplier of modulus at most $1$, so
\begin{equation}
  \bigl(x_t^{\top}(v-u)\bigr)^{2}\ \ge\ \bigl(\nabla\ell_t(u)^{\top}(v-u)\bigr)^{2}.
  \label{eq:ons-gradient-vs-feature}
\end{equation}
Applying \eqref{eq:ons-second-order} with $u=u_t$, $v=u$ and then
\eqref{eq:ons-gradient-vs-feature},
\begin{equation}
  \ell_t(u_t)-\ell_t(u)
  \ \le\ g_t^{\top}(u_t-u)-\frac{\beta_B}{2}\bigl(g_t^{\top}(u_t-u)\bigr)^{2}.
  \label{eq:ons-expconcave}
\end{equation}

\emph{Step 2: the potential telescopes.}  Metric projection in the $H_t$ norm
is nonexpansive relative to any point of $\cU$, so from
\eqref{eq:ons-app-update}
\begin{equation}
  \norm{u_{t+1}-u}_{H_t}^{2}
  \le\norm{u_t-H_t^{-1}g_t-u}_{H_t}^{2}
  =\norm{u_t-u}_{H_t}^{2}-2g_t^{\top}(u_t-u)+g_t^{\top}H_t^{-1}g_t .
  \label{eq:ons-projection}
\end{equation}
Since $H_t=H_{t-1}+\beta_Bg_tg_t^{\top}$ we have the exact identity
$\norm{u_t-u}_{H_t}^{2}
 =\norm{u_t-u}_{H_{t-1}}^{2}+\beta_B(g_t^{\top}(u_t-u))^{2}$.
Substituting into \eqref{eq:ons-projection} and rearranging,
\begin{equation}
  g_t^{\top}(u_t-u)-\frac{\beta_B}{2}\bigl(g_t^{\top}(u_t-u)\bigr)^{2}
  \ \le\ \frac12\Bigl(\norm{u_t-u}_{H_{t-1}}^{2}-\norm{u_{t+1}-u}_{H_t}^{2}\Bigr)
   +\frac12g_t^{\top}H_t^{-1}g_t .
  \label{eq:ons-telescope}
\end{equation}
Summing \eqref{eq:ons-expconcave} over $t$, bounding each term by
\eqref{eq:ons-telescope}, and telescoping the potential (whose final term is
dropped, being nonnegative) gives
\begin{equation}
  \sum_{t=1}^{T}\bigl(\ell_t(u_t)-\ell_t(u)\bigr)
  \ \le\ \frac12\norm{u_1-u}_{H_0}^{2}
   +\frac12\sum_{t=1}^{T}g_t^{\top}H_t^{-1}g_t ,
  \label{eq:ons-before-det}
\end{equation}
and $H_0=I_d$ makes the first term $\tfrac12\norm{u_1-u}_2^{2}$.

\emph{Step 3: the sum of quadratic forms is a log-determinant.}  Put
$\xi_t=\beta_Bg_t^{\top}H_{t-1}^{-1}g_t\ge0$.  The Sherman--Morrison identity
$H_t^{-1}=H_{t-1}^{-1}-\beta_B\frac{H_{t-1}^{-1}g_tg_t^{\top}H_{t-1}^{-1}}{1+\xi_t}$
gives $\beta_Bg_t^{\top}H_t^{-1}g_t=\xi_t-\xi_t^{2}/(1+\xi_t)=\xi_t/(1+\xi_t)$,
while the matrix-determinant lemma gives
$\det H_t=(1+\xi_t)\det H_{t-1}$.  Since $x/(1+x)\le\log(1+x)$ for $x\ge0$,
\begin{equation}
  \beta_Bg_t^{\top}H_t^{-1}g_t=\frac{\xi_t}{1+\xi_t}\le\log(1+\xi_t)
  =\log\frac{\det H_t}{\det H_{t-1}} .
  \label{eq:ons-det-step}
\end{equation}
Summing \eqref{eq:ons-det-step} over $t$ telescopes to
$\log\det H_T=\log\det(I_d+\beta_B\sum_tg_tg_t^{\top})$, and dividing by
$\beta_B$ and inserting into \eqref{eq:ons-before-det} proves
\eqref{eq:design-ons-gradient}.

\emph{Step 4: from gradients to features.}  As noted,
$g_t=c_tx_t$ with $\abs{c_t}\le1$, so $g_tg_t^{\top}=c_t^{2}x_tx_t^{\top}\preceq
x_tx_t^{\top}$, and $\beta_B\le\tfrac14\le1$; the map
$M\mapsto\log\det(I+M)$ is nondecreasing on the positive semidefinite order,
which gives \eqref{eq:design-ons-feature}.
\end{proof}

\subsection{Proof of the oracle inequality in \texorpdfstring{\Cref{thm:ons-spectrum}}{the ONS regret theorem}}

Take $\cU=[-1,1]^{h}$, $u_1=0$, and the features
$x_t=x_t^{(h)}$ of \eqref{eq:scaled-feature}.  The envelope condition gives, for
every $u\in\cU$,
\begin{equation}
  \bigl\lvert(x_t^{(h)})^{\top}u\bigr\rvert
  \le\sum_{j=1}^{h}r_j\abs{u_j}\le\sum_{j=1}^{h}r_j\le B,
  \label{eq:ons-envelope-logit}
\end{equation}
so \eqref{eq:ons-app-bounded-logit} holds and \Cref{lem:design-ons} applies.
Take as comparator the normalised truncation
\begin{equation}
  u^{\star}_j=\theta_j/r_j\quad(j\le h),
  \qquad\text{with }u^{\star}_j=0\text{ when }r_j=0,
  \label{eq:ons-comparator}
\end{equation}
which lies in $\cU$ because $\abs{\theta_j}\le r_j$, and satisfies
$\norm{u^{\star}}_2^{2}\le h$.  Then \eqref{eq:design-ons-feature} gives, with
$C_B=\max\{\tfrac12,\tfrac1{2\kappa_B}\}$,
\begin{equation}
  \sum_{t=1}^{T}\bigl(\ell_t(u_t)-\ell_t(u^{\star})\bigr)
  \le C_B\left[h+\log\det\!\Bigl(I_h+\sum_{t=1}^{T}x_t^{(h)}(x_t^{(h)})^{\top}\Bigr)\right].
  \label{eq:ons-comparator-regret}
\end{equation}

By Hadamard's inequality for positive semidefinite matrices the determinant is
at most the product of the diagonal entries.  Coordinate $j$ of $x_t^{(h)}$
equals $r_jU_{t+1-j}$ when $t\ge j$ and $0$ otherwise, so it is nonzero on
exactly $n_{T,j}$ of the rounds $t=1,\ldots,T$, with squared value $r_j^{2}$ on
each of them (here $U_s^{2}=1$ is used, so the diagonal is deterministic).
Hence
\begin{equation}
  \log\det\!\Bigl(I_h+\sum_{t=1}^{T}x_t^{(h)}(x_t^{(h)})^{\top}\Bigr)
  \ \le\ \sum_{j=1}^{h}\log\bigl(1+n_{T,j}r_j^{2}\bigr).
  \label{eq:ons-hadamard}
\end{equation}
This is the step that would fail with the gradient Gram matrix, whose diagonal
is data dependent.

Finally we pass from comparator regret to source regret.  The comparator
$u^{\star}$ predicts with logit
$\ip{u^{\star}}{x_t^{(h)}}=\sum_{j\le h\wedge t}\theta_jU_{t+1-j}
 =\eta_{t+1}(\theta^{(h)})$,
i.e.\ exactly according to the truncated source $P_{\theta^{(h)},T}$.  Taking
expectations in \eqref{eq:ons-comparator-regret} and subtracting the Bayes loss
of $P_{\theta,T}$,
\[
  \Reg_T(\mathrm{ONS}_r;P_{\theta,T})
  =\E\sum_{t=1}^{T}\bigl(\ell_t(u_t)-\ell_t(u^{\star})\bigr)
   +\underbrace{\E\sum_{t=1}^{T}\bigl(\ell_t(u^{\star})-\ell_t^{\mathrm{Bayes}}\bigr)}
     _{=\ \cB_T(\theta,h)\ \le\ \frac18V_T(\theta,h)},
\]
where the identification of the second bracket with the truncation bias is the
conditional Bregman computation of \Cref{lem:logistic-bregman} summed over
rounds, exactly as in \eqref{eq:conditional-label-kl}, and the bound is the
upper half of \eqref{eq:memory-two-sided}.  Combining with
\eqref{eq:ons-comparator-regret}--\eqref{eq:ons-hadamard} yields the first
inequality of \eqref{eq:ons-oracle}.

For the second inequality, every $j\le h$ has $s_{T,j}\ge1$ and hence
$\log(1+s_{T,j})\ge\log2$, so
\[
  h\ \le\ \frac{\Gamma_T(r)}{\log2},
  \qquad
  \sum_{j=1}^{h}\log\bigl(1+n_{T,j}r_j^{2}\bigr)\ \le\ \Gamma_T(r);
\]
and every $j>h$ has $s_{T,j}<1$, so $\theta_j^{2}\le r_j^{2}$ and the first case
of \eqref{eq:d-by-gamma} give
\[
  \sum_{j>h}n_{T,j}\theta_j^{2}\ \le\ \sum_{j>h}s_{T,j}
  \ \le\ 2\sum_{j>h}\log(1+s_{T,j})\ \le\ 2\Gamma_T(r).
\]
Since $\kappa_B<\tfrac14$ by \eqref{eq:kappa-b}, the constant of
\eqref{eq:ons-comparator-regret} is $C_B=\tfrac1{2\kappa_B}>2$, so
$\tfrac18\cdot2\Gamma_T(r)\le\tfrac{C_B}{8}\Gamma_T(r)$ and the middle expression
of \eqref{eq:ons-oracle} is at most
\[
  C_B\Bigl(1+\frac1{\log2}+\frac18\Bigr)\Gamma_T(r)\ \le\ 3C_B\,\Gamma_T(r),
\]
because $1+(\log2)^{-1}+\tfrac18<3$.
\qed

\subsection{Profile adaptation}
\label{sec:app-adapt}

Let $\{r^{(k)}:k\in\cK\}$ be a countable family of candidate envelopes, each
nonnegative and nonincreasing and all obeying the same $\ell_1$ bound $B$, and
let $\pi_k>0$ with $\sum_k\pi_k=1$.  Run \Cref{alg:scaled-ons} for each candidate
and aggregate its Bernoulli predictions by the log-loss mixture: with
$L_t(p)=\sum_{s\le t}[-Y_{s+1}\log p_s-(1-Y_{s+1})\log(1-p_s)]$, $L_0(p)=0$ and
$L_{t-1,k}=L_{t-1}(\widehat p_k)$, set
$w_{t,k}=\pi_ke^{-L_{t-1,k}}/\sum_{\ell}\pi_\ell e^{-L_{t-1,\ell}}$ and
$\widehat p_t=\sum_kw_{t,k}\widehat p_{t,k}$.

\begin{corollary}[Profile adaptation]
\label{cor:profile-adaptation}
Simultaneously for every $k\in\cK$ and every $\theta\in\Theta_{r^{(k)}}$,
\begin{equation}
  \Reg_T(\widehat p;P_{\theta,T})\ \le\ C_B\,\Gamma_T(r^{(k)})+\log\frac1{\pi_k}.
  \label{eq:profile-adaptation}
\end{equation}
\end{corollary}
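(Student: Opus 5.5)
The plan is to argue pathwise, through the standard Bayesian-mixture (exponential-weights) bound for log loss, and then take expectations using \Cref{thm:ons-spectrum}. For each candidate $k$, write $\widehat p_{t,k}$ for the Bernoulli prediction of \Cref{alg:scaled-ons} run with envelope $r^{(k)}$. All experts predict $U_{t+1}$ by $\tfrac12$, so every joint assignment has the form $\tfrac12\widehat p^{\,y}(1-\widehat p)^{1-y}$, and it suffices to compare mark log losses.

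The first step is the mixture identity. Set $W_t=\sum_\ell\pi_\ell e^{-L_{t,\ell}}$, where $L_{t,\ell}$ is the cumulative mark log loss of expert $\ell$ through round $t$. The aggregated prediction $\widehat p_t=\sum_k w_{t,k}\widehat p_{t,k}$ assigns the realised mark $Y_{t+1}$ the probability $\sum_k w_{t,k}\widehat p_{t,k}^{\,Y_{t+1}}(1-\widehat p_{t,k})^{1-Y_{t+1}}$. This expression is linear in $\widehat p_{t,k}$ for each fixed label, so it equals $W_t/W_{t-1}$. Telescoping with $W_0=1$ gives the pathwise bound
\[
L_T(\widehat p)=-\log W_T\le L_{T,k}+\log\frac1{\pi_k}
\]
for every $k$ and every realisation. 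The only thing to check is that the mixture of Bernoulli predictions is exactly the Bayes mixture. This holds because log loss on a binary outcome depends on $\widehat p$ only through the probability assigned to the outcome, and that probability is affine in $\widehat p$. For countable $\cK$, the sums converge because each term of $W_t$ is at most $\pi_\ell$.

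The second step takes expectations under $P_{\theta,T}$ with $\theta\in\Theta_{r^{(k)}}$. By \eqref{eq:def-regret}, the regret equals the expected cumulative loss minus the expected Bayes loss. The Bayes loss is the same on both sides, so
\[
\Reg_T(\widehat p;P_{\theta,T})\le\Reg_T(\mathrm{ONS}_{r^{(k)}};P_{\theta,T})+\log\frac1{\pi_k}.
\]
\Cref{thm:ons-spectrum} applied to envelope $r^{(k)}$ bounds the first term by $3C_B\Gamma_T(r^{(k)})$. Renaming the $B$-only constant gives \eqref{eq:profile-adaptation}, simultaneously in $k$ because the pathwise inequality holds for all $k$ at once.

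I expect the main subtlety, rather than a real obstacle, to be the integrability and measurability bookkeeping. The losses are bounded per round because every expert predicts $\sig$ of a logit in $[-B,B]$, so $\widehat p_{t,k}\in[\sig(-B),\sig(B)]$. A further detail is that the expert with $h=0$ predicts $\tfrac12$, which is also covered. With bounded losses, the expectation and the countable mixture pose no difficulty.
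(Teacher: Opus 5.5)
Your proposal is correct and follows the paper's own proof. The paper uses the same pathwise aggregating bound $L_T(\widehat p)\le L_{T,k}+\log(1/\pi_k)$, then subtracts the common Bayes loss, takes expectations, and applies \Cref{thm:ons-spectrum} to candidate $k$; you verify the bound directly through the telescoping identity $L_T(\widehat p)=-\log W_T$ where the paper cites mixability, and you absorb the factor $3$ into the renamed constant $C_B$.
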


\begin{proof}
For log loss the Bayesian mixture forecaster satisfies
$L_T(\widehat p)\le-\log\sum_k\pi_ke^{-L_T(\widehat p_k)}\le L_T(\widehat p_k)+\log(1/\pi_k)$
pathwise and simultaneously in $k$; this is the standard aggregating bound for a
mixable loss \citep{vovk1995,cesabianchilugosi2006}.  Subtract the Bayes loss of
$P_{\theta,T}$ from both sides, take expectations, and apply
\Cref{thm:ons-spectrum} to candidate $k$.
\end{proof}

A finite or lazily implementable grid therefore adapts whenever some candidate
dominates the true coefficient sequence; a polynomially decaying prior on an
integer grid index costs $O(\log k)$.  For a genuinely countable family the
statement is information-theoretic unless the prior and the family admit a lazy
or truncated implementation.

\subsection{Beyond the Toeplitz design: predictable features and filter dictionaries}
\label{sec:app-predictable}

The two statements referred to at the end of \Cref{sec:algorithms} isolate the
deterministic online-learning content of \Cref{lem:design-ons} from the
source-specific approximation calculation.  Let $P$ be a law under which
$\phi_t\in\R^{d}$ is measurable before $Y_{t+1}$ is revealed and the true
conditional probability is $\sig(\eta^{\star}_{t+1})$ with
$\abs{\eta^{\star}_{t+1}}\le B$; let $\cU\subset\R^{d}$ be closed and convex with
$\abs{\ip{u}{\phi_t}}\le B$ for all $u\in\cU$, all histories and all $t$; run
ONS with $u_1\in\cU$, $H_0=I_d$, $\beta_B=\kappa_B$.  Here
$\Reg_T(\mathrm{ONS};P)$ denotes expected cumulative excess Bernoulli log loss
for the marks.

\begin{theorem}[Predictable-design oracle inequality]
\label{thm:predictable-design}
\begin{equation}
  \Reg_T(\mathrm{ONS};P)
  \le\frac1{2\kappa_B}\,\E\log\det\!\Bigl(I_d+\sum_{t=1}^{T}\phi_t\phi_t^{\top}\Bigr)
   +\inf_{u\in\cU}\Bigl\{\tfrac12\norm{u-u_1}_2^{2}
     +\tfrac18\sum_{t=1}^{T}\E\bigl(\eta^{\star}_{t+1}-\ip{u}{\phi_t}\bigr)^{2}\Bigr\}.
  \label{eq:predictable-design}
\end{equation}
The feature sequence may be stochastic, adaptive and history dependent.
\end{theorem}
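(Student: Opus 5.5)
The plan is to follow the proof of \Cref{thm:ons-spectrum} with two changes. The comparator becomes an arbitrary $u\in\cU$, and the bias term is expressed through the true logit $\eta^{\star}_{t+1}$ instead of a truncated source. First apply \Cref{lem:design-ons} pathwise, with $x_t=\phi_t$ and $y_t=Y_{t+1}$. That lemma is deterministic: it needs only the bounded-logit condition $\abs{\ip{u}{\phi_t}}\le B$ on $\cU$, which is assumed for every history. It therefore holds for stochastic, adaptive, history-dependent features. For each fixed $u\in\cU$ it gives
\[
  \sum_{t=1}^{T}\bigl(\ell_t(u_t)-\ell_t(u)\bigr)\le\tfrac12\norm{u_1-u}_2^{2}+\tfrac1{2\kappa_B}\log\det\Bigl(I_d+\sum_{t=1}^{T}\phi_t\phi_t^{\top}\Bigr)
\]
almost surely. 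Taking expectations turns the log-determinant term into its expectation.

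Next, decompose the regret. Let $\ell^{\star}_t=\psi(\eta^{\star}_{t+1})-Y_{t+1}\eta^{\star}_{t+1}$ be the Bayes log loss of the marks. Then
\[
  \Reg_T(\mathrm{ONS};P)=\E\sum_t\bigl(\ell_t(u_t)-\ell_t(u)\bigr)+\E\sum_t\bigl(\ell_t(u)-\ell^{\star}_t\bigr).
\]
For the second sum, condition on the history up to round $t$, with respect to which both $\phi_t$ and $\eta^{\star}_{t+1}$ are measurable. Because $\E[Y_{t+1}\mid\text{past}]=\sig(\eta^{\star}_{t+1})$ and the loss is linear in $Y_{t+1}$, the conditional expectation of $\ell_t(u)-\ell^{\star}_t$ equals the Bregman divergence $\psi(\ip{u}{\phi_t})-\psi(\eta^{\star}_{t+1})-\psi'(\eta^{\star}_{t+1})(\ip{u}{\phi_t}-\eta^{\star}_{t+1})$. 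By \Cref{lem:logistic-bregman} this is $\KL(\Ber(\sig(\eta^{\star}_{t+1}))\|\Ber(\sig(\ip{u}{\phi_t})))\le\tfrac18(\eta^{\star}_{t+1}-\ip{u}{\phi_t})^2$, using the global curvature bound. Summing over rounds and applying the tower property gives the $\tfrac18\sum_t\E(\cdot)^2$ term.

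Finally, add the two bounds for a fixed $u$. The left side does not depend on $u$, so we may take the infimum over $u\in\cU$; this yields \eqref{eq:predictable-design}.

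The main subtlety is the measurability used in the conditional Bregman identity. We need $\phi_t$ and $u_t$ to be determined before $Y_{t+1}$ is revealed, so that the linear-in-$Y$ term averages correctly. This holds because ONS iterates are functions of past gradients and $\phi_t$ is predictable by assumption. No independence of the design is needed, and this is the step to state carefully. A minor point is that the bound in \Cref{lem:design-ons} can exceed expectations only if the log-determinant is not integrable; in that case the right side is infinite and the claim is trivial.
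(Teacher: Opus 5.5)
Your proposal is correct and matches the paper's proof essentially step for step: you apply the design-sensitive ONS lemma pathwise with $x_t=\phi_t$ against a fixed $u\in\cU$, use the same regret decomposition, bound the conditional Bregman term by $\tfrac18$ of the squared logit gap, and take the infimum over $u$. One small correction to your closing remark: in the bias step only the predictability of $\phi_t$ and $\eta^{\star}_{t+1}$ is used, since $u$ is fixed; the predictability of $u_t$ matters only in that ONS must be a legitimate causal predictor.
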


\begin{proof}
\Cref{lem:design-ons} is deterministic conditional on the realised features and
labels, so it may be compared with any fixed $u\in\cU$; note the hypotheses
require the logit bound only on $\cU$ and along the realised features, which is
what is assumed.  Decompose the source regret as
\[
  \Reg_T(\mathrm{ONS};P)
  =\E\sum_{t=1}^{T}\bigl(\ell_t(u_t)-\ell_t(u)\bigr)
   +\E\sum_{t=1}^{T}\bigl(\ell_t(u)-\ell_t^{\mathrm{Bayes}}\bigr).
\]
For the second sum, taking the conditional expectation of
$\ell_t(u)-\ell_t^{\mathrm{Bayes}}$ given the history and using
$\E[Y_{t+1}\mid\cdot]=\sig(\eta^{\star}_{t+1})$ gives exactly the Bregman
divergence of \Cref{lem:logistic-bregman} between the logits
$\eta^{\star}_{t+1}$ and $\ip{u}{\phi_t}$, hence at most one eighth of their
squared difference.  For the first, apply \eqref{eq:design-ons-feature} with
$\beta_B=\kappa_B$ and take expectations, which places an expectation around
the random log determinant.  Finally optimise over $u\in\cU$; the bound holds
for each fixed $u$, hence for the infimum.
\end{proof}

Now let $g_1,\ldots,g_K$ be deterministic kernels, $g_{k,j}$ the coefficient of
lag $j$, and define the causal filter state
$z_{t,k}=\sum_{j\le t}g_{k,j}U_{t+1-j}$, $z_t=(z_{t,1},\ldots,z_{t,K})$.  A
readout $a^{\top}z_t$ realises the coefficient sequence
$(Ga)_j=\sum_{k=1}^{K}a_kg_{k,j}$.

\begin{theorem}[Low-rank filter approximation]
\label{thm:filter-approximation}
Let $\cA\subset\R^{K}$ be closed and convex with $a_1\in\cA$ and
$\abs{a^{\top}z_t}\le B$ for all $a\in\cA$, all histories and all $t$.  Then an
ONS readout over the filter states satisfies, for every $\theta\in\Theta_r$,
\begin{equation}
  \Reg_T(\mathrm{FilterONS};P_{\theta,T})
  \le\frac1{2\kappa_B}\E\log\det\!\Bigl(I_K+\sum_{t=1}^{T}z_tz_t^{\top}\Bigr)
   +\inf_{a\in\cA}\Bigl\{\tfrac12\norm{a-a_1}_2^{2}
     +\tfrac18\sum_{j=1}^{T}n_{T,j}\bigl(\theta_j-(Ga)_j\bigr)^{2}\Bigr\}.
  \label{eq:filter-bound}
\end{equation}
If $\norm{z_t}_2\le G_z$ deterministically the first term is at most
$\tfrac{K}{2\kappa_B}\log\bigl(1+TG_z^{2}/K\bigr)$.
\end{theorem}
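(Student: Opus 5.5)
The plan is to reduce \Cref{thm:filter-approximation} to \Cref{thm:predictable-design} with predictable feature $\phi_t=z_t$, parameter set $\cU=\cA$, start $u_1=a_1$ and true logit $\eta^{\star}_{t+1}=\eta_{t+1}(\theta)$; then compute the approximation term explicitly, and finally bound the log-determinant under the norm assumption.

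\emph{Step 1: hypotheses of \Cref{thm:predictable-design}.} The state $z_t$ is a deterministic linear function of $U_{1:t}$, so it is measurable before $Y_{t+1}$ is revealed. Under $P_{\theta,T}$ the true conditional probability is $\sig(\eta_{t+1}(\theta))$, and $\abs{\eta_{t+1}(\theta)}\le B$ because $\theta\in\Theta_r$ and \eqref{eq:l1-envelope} holds. The logit bound on $\cA$ along all histories is assumed. Hence \eqref{eq:predictable-design} applies with $d=K$, and its first term is already the log-determinant term of \eqref{eq:filter-bound}.

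\emph{Step 2: the approximation term.} For fixed $a\in\cA$, write
\[
a^{\top}z_t=\sum_{k}a_k\sum_{j\le t}g_{k,j}U_{t+1-j}=\sum_{j\le t}(Ga)_jU_{t+1-j},
\]
so that
\[
\eta_{t+1}(\theta)-a^{\top}z_t=\sum_{j\le t}\bigl(\theta_j-(Ga)_j\bigr)U_{t+1-j}.
\]
By exact input orthogonality, which is the computation in the proof of \Cref{lem:pairwise-kl}, its second moment is $\sum_{j\le t}(\theta_j-(Ga)_j)^2$. Summing over $t\le T$ and exchanging the order of summation gives $\sum_{j\le T}n_{T,j}(\theta_j-(Ga)_j)^2$. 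This yields the infimum term of \eqref{eq:filter-bound}. No curvature lower bound is needed here, since only the global $\tfrac18$ bound enters.

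\emph{Step 3: the norm bound.} Suppose $\norm{z_t}_2\le G_z$. Then $\tr(\sum_t z_tz_t^{\top})\le TG_z^{2}$, and the concavity bound $\log\det(I+M)\le K\log(1+\tr M/K)$ from \Cref{sec:tools-conc} gives the stated estimate pathwise. This survives the expectation.

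\emph{Main obstacle.} I expect no step to be genuinely hard. The only point needing care is that the $\tfrac12\norm{a-a_1}^2$ term and the logit bound are required only on $\cA$ and along realised histories. \Cref{thm:predictable-design} already permits exactly that, so the reduction is clean.
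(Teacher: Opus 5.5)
Your proposal is correct and matches the paper's proof: apply \Cref{thm:predictable-design} with $\phi_t=z_t$, $d=K$, $\cU=\cA$; compute the approximation term via input orthogonality and the exchange of summation from \Cref{lem:pairwise-kl}; and bound the log-determinant pathwise by $\log\det(I+M)\le K\log(1+\tr(M)/K)$ together with the trace bound $TG_z^{2}$. Your explicit check that $z_t$ is predictable and that $\abs{\eta_{t+1}(\theta)}\le B$ is a small verification the paper leaves implicit.
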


\begin{proof}
Apply \Cref{thm:predictable-design} with $\phi_t=z_t$, $d=K$ and $\cU=\cA$.
Since $a^{\top}z_t=\sum_{j\le t}(Ga)_jU_{t+1-j}$ and
$\eta_{t+1}(\theta)=\sum_{j\le t}\theta_jU_{t+1-j}$, independence, centring and
unit variance of the inputs give, for every fixed $a$,
\[
  \sum_{t=1}^{T}\E\bigl(\eta_{t+1}(\theta)-a^{\top}z_t\bigr)^{2}
  =\sum_{t=1}^{T}\sum_{j=1}^{t}\bigl(\theta_j-(Ga)_j\bigr)^{2}
  =\sum_{j=1}^{T}n_{T,j}\bigl(\theta_j-(Ga)_j\bigr)^{2},
\]
the last step by exchanging the order of summation as in the proof of
\Cref{lem:pairwise-kl}.  For the final claim use
$\log\det(I+M)\le K\log(1+\tr(M)/K)$ for $M\succeq0$ of size $K$, which follows
from concavity of $\log$ and the arithmetic--geometric mean inequality applied
to the eigenvalues of $M$, together with
$\tr\bigl(\sum_tz_tz_t^{\top}\bigr)=\sum_t\norm{z_t}_2^{2}\le TG_z^{2}$.
\end{proof}

A useful special case is the exponential dictionary $g_{k,j}=\lambda_k^{j-1}$
with $0\le\lambda_k<1$, whose features obey the scalar state-space recurrence
\begin{equation}
  z_{t+1,k}=\lambda_kz_{t,k}+U_{t+1},\qquad z_{0,k}=0,
  \label{eq:exp-filter-recurrence}
\end{equation}
so the whole bank is updated in $O(K)$ time and $O(K)$ state memory before the
readout update, with $\norm{z_t}_2^{2}\le\sum_{k\le K}(1-\lambda_k)^{-2}$
deterministically.  Approximation bounds for exponential sums, such as those of
\citet{beylkinmonzon2010}, and second-order sequence preconditioners
\citep{marsdenhazan2026} therefore insert directly into the infimum in
\eqref{eq:filter-bound}: any dictionary representing the true coefficient
sequence to accuracy $\eps$ in the weighted norm $\sum_jn_{T,j}(\cdot)_j^{2}$
contributes $\tfrac18\eps$ there.  This is the precise and only role of a
state-space realisation in this paper --- a computational implementation of a
low-rank predictive dictionary.  It is neither a definition of predictive memory
nor an assumption that the source is a linear dynamical system, and
\Cref{thm:profile-impossibility} is a caution against identifying filter
approximability with statistical complexity: a compact realisation controls the
cost of representing the past, whereas $\Gamma_T(r)$ controls the cost of
learning its unknown predictive effect.

\section{Proof of \texorpdfstring{\Cref{thm:profile-impossibility}}{the truncation-loss and regret comparison}}
\label{app:ray}

The statement about memory profiles is \Cref{cor:envelope-memory}, so only the
two redundancy bounds in \eqref{eq:ray-redundancy} require proof.  Throughout,
$\theta=ar$ with $\abs a\le1$ and
\begin{equation}
  z_t=\sum_{j=1}^{t}r_jU_{t+1-j}
  \label{eq:ray-feature}
\end{equation}
is the predictable scalar feature, so that the mark logit is $az_t$ and
$\abs{z_t}\le\sum_jr_j\le B$.  By the same orthogonality computation as in the
proof of \Cref{lem:pairwise-kl},
\begin{equation}
  \sum_{t=1}^{T}\E z_t^{2}=\sum_{t=1}^{T}\sum_{j=1}^{t}r_j^{2}
  =\sum_{j=1}^{T}n_{T,j}r_j^{2}=\cI_T(r).
  \label{eq:ray-energy}
\end{equation}

\subsection{Upper bound}

By \Cref{lem:pairwise-kl} applied to $\theta=ar$ and $\theta'=br$,
\begin{equation}
  \KL(P_{ar,T}\|P_{br,T})\ \le\ \tfrac18(a-b)^{2}\,\cI_T(r).
  \label{eq:ray-pairwise}
\end{equation}
Now run the localisation argument of \Cref{app:upper} in one dimension with the
uniform prior $\pi$ on $[-1,1]$.  If $\cI_T(r)\le1$, take $\rho=\pi$: the
localisation cost is $0$ and by \eqref{eq:ray-pairwise} the approximation cost
is at most $\tfrac18(\tfrac13+a^{2})\cI_T(r)\le\tfrac16\cI_T(r)$.  If
$\cI_T(r)>1$, localise to a subinterval of $[-1,1]$ of length
$\cI_T(r)^{-1/2}<1$ containing $a$: the approximation cost is then at most
$\tfrac18\cI_T(r)^{-1}\cI_T(r)=\tfrac18$ and the localisation cost is
$\log\bigl(2\cI_T(r)^{1/2}\bigr)=\tfrac12\log\cI_T(r)+\log2$.  In both cases
\Cref{lem:variational-mixture} gives
\begin{equation}
  \cR_T^{\mathrm{ray}}(r)\ \le\ \tfrac12\log\bigl(1+\cI_T(r)\bigr)+C
  \label{eq:ray-upper}
\end{equation}
for an absolute constant $C$ (one may take $C=\tfrac18+\log2$).

\subsection{An empirical feature-energy bound}

The converse cannot proceed as in \Cref{sec:converse}: there is a single
unknown direction, so there is no Gram matrix to condition, and the relevant
quantity is the scalar empirical Fisher information $\sum_tz_t^{2}$.  We lower
bound it by a martingale argument that exploits the fresh input in each round.

Let $\cF_{t-1}=\sigma(U_1,\ldots,U_{t-1})$ and decompose
\begin{equation}
  z_t=w_t+r_1U_t,\qquad w_t=\sum_{j=2}^{t}r_jU_{t+1-j},
  \label{eq:ray-fresh-decomp}
\end{equation}
so that $w_t$ is $\cF_{t-1}$-measurable and $U_t$ is a fresh independent sign.
For every real $w$ at least one of $w+r_1$, $w-r_1$ has modulus at least $r_1$
--- they differ by $2r_1$, so they cannot both lie in $(-r_1,r_1)$.  Hence, with
$\chi_t=\ind\{\abs{z_t}\ge r_1\}$,
\begin{equation}
  \E[\chi_t\mid\cF_{t-1}]\ \ge\ \tfrac12 .
  \label{eq:ray-indicator-drift}
\end{equation}
The variables $\chi_t-\E[\chi_t\mid\cF_{t-1}]$ form a martingale-difference
sequence bounded by $1$ in modulus, so Azuma--Hoeffding with $\lambda=T/4$
gives
\begin{equation}
  \Pp\!\left(\sum_{t=1}^{T}\chi_t<\frac T4\right)\ \le\ e^{-T/32}.
  \label{eq:ray-count-azuma}
\end{equation}
Since $z_t^{2}\ge r_1^{2}\chi_t$, on the event
$\cH_T=\bigl\{\sum_{t\le T}\chi_t\ge T/4\bigr\}$, which has probability at least
$1-e^{-T/32}$ and is measurable with respect to $U_{1:T}$, we have
\begin{equation}
  S_T\defeq\sum_{t=1}^{T}z_t^{2}\ \ge\ \frac{Tr_1^{2}}{4}.
  \label{eq:ray-energy-lower}
\end{equation}

\subsection{Lower bound}

Place the uniform prior on $a\in[-1/2,1/2]$, so that
$\Theta^{\mathrm{ray}}_r$ contains the support and the prior has unit-length
interval, whence $\dent(a)=0$.  Conditionally on the inputs let
\begin{equation}
  \ell_T(a)=\sum_{t=1}^{T}\bigl[\psi(az_t)-Y_{t+1}az_t\bigr]
  \label{eq:ray-likelihood}
\end{equation}
be the negative log likelihood and $\widehat a$ a minimiser over
$[-1/2,1/2]$.  Since $\abs{az_t}\le B/2$ on that interval,
$\ell_T''(a)=\sum_t\psi''(az_t)z_t^{2}\ge\kappa_{B/2}S_T$, i.e.\ $\ell_T$ is
$\kappa_{B/2}S_T$-strongly convex there.  Exactly as in Step 2 of the proof of
\Cref{thm:exogenous-lower} --- both $a$ and $\widehat a$ lie in the interval and
$\ell_T(\widehat a)\le\ell_T(a)$ --- we get
\begin{equation}
  \abs{\widehat a-a}\ \le\ \frac{2\abs{\ell_T'(a)}}{\kappa_{B/2}S_T}.
  \label{eq:ray-error-score}
\end{equation}
Conditionally on the inputs and $a$ the score
$\ell_T'(a)=\sum_t(\sig(az_t)-Y_{t+1})z_t$ is a sum of independent centred
terms, so
\begin{equation}
  \E\bigl[\ell_T'(a)^{2}\mid U_{1:T},a\bigr]\ \le\ \tfrac14S_T .
  \label{eq:ray-score-var}
\end{equation}
Because $\cH_T$ is $U_{1:T}$-measurable, combining
\eqref{eq:ray-error-score}--\eqref{eq:ray-score-var} with
\eqref{eq:ray-energy-lower} gives
\[
  \E\bigl[(\widehat a-a)^{2}\ind_{\cH_T}\mid U_{1:T},a\bigr]
  \le\frac{4}{\kappa_{B/2}^{2}S_T^{2}}\cdot\frac{S_T}{4}\,\ind_{\cH_T}
  =\frac{\ind_{\cH_T}}{\kappa_{B/2}^{2}S_T}
  \le\frac{4}{\kappa_{B/2}^{2}Tr_1^{2}} .
\]
On $\cH_T^{c}$ we use only $\abs{\widehat a-a}\le1$, so with
$\sup_{t>0}te^{-t/32}=32/e$ and $r_1\le B$,
\begin{equation}
  \E(\widehat a-a)^{2}
  \le\frac{4}{\kappa_{B/2}^{2}Tr_1^{2}}+e^{-T/32}
  \le\frac{c_B^{\mathrm{mse}}}{Tr_1^{2}},
  \qquad
  c_B^{\mathrm{mse}}\defeq\frac{4}{\kappa_{B/2}^{2}}+\frac{32B^{2}}{e},
  \label{eq:ray-mse}
\end{equation}
where the last step writes $e^{-T/32}=(Te^{-T/32})/T\le(32/e)/T$ and then
multiplies and divides by $r_1^{2}\le B^{2}$.

Finally, by data processing and the scalar Gaussian maximum-entropy inequality
(the $J=1$ case of \Cref{lem:entropy-estimation}),
\begin{align}
  I(a;X_{1:T+1})
  &\ \ge\ I(a;\widehat a)
   =\dent(a)-\dent(a\mid\widehat a)
   \ \ge\ -\dent(a-\widehat a)\nonumber\\
  &\ \ge\ -\tfrac12\log\bigl(2\pi e\,\E(a-\widehat a)^{2}\bigr)
   \ \ge\ \tfrac12\log\bigl(Tr_1^{2}\bigr)-C_B,
  \qquad
  C_B=\tfrac12\log\bigl(2\pi e\,c_B^{\mathrm{mse}}\bigr).
  \label{eq:ray-mutual-info}
\end{align}
\Cref{lem:redundancy-capacity} and nonnegativity of regret give the left-hand
side of \eqref{eq:ray-redundancy}.

\subsection{Conclusion of the proof}

For a fixed summable profile with $r_1>0$,
\begin{equation}
  Tr_1^{2}\ \le\ \cI_T(r)\ \le\ T\sum_{j\ge1}r_j^{2}
  \ \le\ T\Bigl(\sum_{j\ge1}r_j\Bigr)^{2}\ \le\ TB^{2},
  \label{eq:ray-energy-sandwich}
\end{equation}
so both sides of \eqref{eq:ray-redundancy} are $\tfrac12\log T+O_{B,r_1}(1)$ and
$\cR^{\mathrm{ray}}_T(r)=\Theta_{B,r_1}(\log T)$.  For $r_j=Aj^{-s}$ with
$s>1$, \Cref{cor:envelope-memory} together with part~(2) of
\Cref{lem:profile-sums} gives the displayed equivalence of distortion profiles
in \Cref{thm:profile-impossibility}, and combining
\Cref{cor:poly-rate} with the display just proved gives
\eqref{eq:poly-separation}.
\qed

\subsection{Interpretation and scope of the comparison}
\label{app:separation-scope}

The envelope and rank-one class have worst-case input-truncation losses of
the same order for every window size, but their minimax regrets have different
orders in $T$.  Thus knowing the truncation profile only up to multiplicative
constants is insufficient to infer the regret rate for these classes.
The comparison does not establish equality of the exact distortion functions
or rule out every possible functional of those exact functions.

The missing information in this comparison is the structure of the unknown
coefficients: the envelope allows independent variation at each lag, whereas
the rank-one class has only one unknown amplitude.  More generally, coding or
metric complexity can describe distinctions not captured by a truncation
curve; relevant quantities include information radius and metric entropy
\citep{haussler1997,yangbarron1999,mourtada2023}.

Our distortion restricts predictions to the most recent exogenous inputs.
Recent marks may retain information about older inputs, so a compressed state
of the full marked history is a different object.  The result therefore does
not directly rule out characterisations using predictive rate--distortion or
the finer quantities in compression-to-prediction bounds
\citep{shalizicrutchfield2001,marzencrutchfield2016,hanjiangwu2024}; see also
\Cref{rem:valid-measure-long}.

\end{document}